\crefname{lab}{}{}
\crefname{sat}{}{}
\crefname{pr}{}{}
\def\namedlabel#1#2{\begingroup
   \def\@currentlabel{#2}%
   \label{#1}\endgroup
}
\declaretheorem[name=Theorem, numberwithin=section]{theorem}
\declaretheorem[name=Lemma, sibling=theorem]{lemma}
\declaretheorem[name=Proposition, sibling=theorem]{proposition}
\declaretheorem[name=Corollary, sibling=theorem]{corollary}
\declaretheorem[name=Problem, sibling=theorem]{problem}
\declaretheorem[name=Claim, sibling=theorem]{claim}
\declaretheorem[name=Claim, numbered=no]{claim*}
\declaretheorem[name=Remark, style=remark, sibling=theorem]{remark}
\def\cqedsymbol{\ifmmode$\lrcorner$\else{\unskip\nobreak\hfil
\penalty50\hskip1em\null\nobreak\hfil$\lrcorner$
\parfillskip=0pt\finalhyphendemerits=0\endgraf}\fi} 
\newcommand{\tab}{\mathsf{Table}}
\newcommand{\p}{\mathcal{P}}
\newcommand{\np}{\overline{\p}}
\newcommand{\C}{\mathcal{C}}
\newcommand{\F}{\mathcal{F}}
\newcommand{\ham}{\textsc{HamiltonianCycle}}
\let\le\leqslant
\let\ge\geqslant
\begin{document}

\title{Reductions in local certification}

\author[L.~Esperet]{Louis Esperet}
\address[L.~Esperet]{Univ.\ Grenoble Alpes, CNRS, Laboratoire G-SCOP,
  Grenoble, France}
\email{louis.esperet@grenoble-inp.fr}

\author[S.~Zeitoun]{Sébastien Zeitoun}
\address[S.~Zeitoun]{CNRS, INSA Lyon, UCBL, LIRIS, UMR5205, F-69622 Villeurbanne, France}
\email{sebastien.zeitoun@univ-lyon1.fr}

\thanks{The authors are partially supported by the French ANR Project
  TWIN-WIDTH
  (ANR-21-CE48-0014-01), and ENEDISC (ANR-24-CE48-7768-01) by LabEx
  PERSYVAL-lab (ANR-11-LABX-0025). A preliminary version of this paper
  appeared in the proceedings of the 51st International Workshop on
  Graph-Theoretic Concepts in Computer Science (WG 2025).}

\date{}

\begin{abstract}
	Local certification is a topic originating from distributed
        computing, where a prover tries to convince the vertices of a
        graph $G$ that $G$ satisfies some property~$\p$. To convince
        the vertices, the prover gives a small piece of information,
        called certificate, to each vertex, and the vertices then
        decide whether the property $\p$ is satisfied by just looking at
        their certificate and the certificates of their neighbors.
        When studying a property~$\p$ in the perspective of local
        certification, the aim is to find the optimal size of the
        certificates needed to certify~$\p$, which can be viewed as a
        measure of the local complexity of~$\p$.
	
	A certification scheme is considered to be efficient
        if the size of the certificates is polylogarithmic in the
        number of vertices. While there have been a number of
        meta-theorems providing efficient certification schemes for
        general graph classes, the proofs of the lower
        bounds on the size of the certificates are usually very
        problem-dependent.
        
	In this work, we introduce a notion of hardness reduction
        in local certification, and show that we can transfer a lower
        bound on the certificates for a property~$\p$ to a lower bound
        for another property~$\p'$, via a (local)
        hardness reduction from~$\p$ to $\p'$. We then give a number
        of applications in which we obtain polynomial lower bounds for
        many classical properties using such reductions.
\end{abstract}

\maketitle

\section{Introduction}\label{sec:intro}

The main focus of the paper is on \emph{local certification}, which is
a field at the intersection of distributed computing, graph theory,
and complexity studying how the vertices of a graph $G$ can be convinced
that $G$ satisfies some given property $\mathcal{P}$, even if each
vertex of $G$ only has a very local view of the graph (for
instance, if it can only interact with its neighbors). More precisely, a \emph{proof labeling scheme} for a property
$\mathcal{P}$ in a graph $G$ works as follows:
\begin{enumerate}[(1)]
 \item a prover assigns (short)
   certificates to the vertices of the graph $G$;
   \item each vertex $v$  checks its \emph{local view} (its certificate and the certificates of its neighbors),
  \item based only
  on its local view, each vertex must decide whether $G$ satisfies the given
  property $\mathcal{P}$.
\end{enumerate}

Note that there is no requirement on the
  computational power of the prover and the vertices. If the graph indeed satisfies $\mathcal{P}$, all vertices
  must accept the instance. If the graph does not satisfy
  $\mathcal{P}$, then for any possible assignment of
  certificates, at least one vertex must
  reject the instance. For $n\in \mathbb{N}$, the \emph{local
    complexity} of  $\mathcal{P}$ is the minimum $f(n)$ such that for
  every $n$-vertex graph $G$, the property $\mathcal{P}$ can be
  locally certified in $G$ with certificates of at most $f(n)$ bits per
  vertex. We can also study more generally the local complexity of a
  property  $\mathcal{P}$ within a graph class $\mathcal{C}$, by which
  we mean that the minimum above is taken over all $n$-vertex graphs
  from~$\mathcal{C}$.  
  
Note that  every property has local complexity $O(n^2)$: it suffices to write the adjacency matrix of the graph in the certificate of each vertex. By doing so, each vertex knows the whole graph and is able to decide whether the property is satisfied or not~\cite{GoosS16}.
On the other hand, a counting argument shows that some properties involving symmetries have local complexity $\Omega(n^2)$. The original work of Göös and Suomela \cite{GoosS16} identified three natural ranges of local complexity for graph classes:

\begin{itemize}
\item $\Theta(1)$: this includes $k$-colorability for fixed $k$, and in
  particular bipartiteness;
\item $\Theta(\log n)$: this includes non-bipartiteness, 
  acyclicity, planarity \cite{planar} and more generally bounded genus \cite{genus,EL};
  \item $\Theta(\text{poly}(n))$: this includes non-3-colorability
    \cite{GoosS16}, unit-distance graphs, unit-disk graphs, and 1-planar
    graphs \cite{DELMR24}, as well as problems involving
    symmetries~\cite{GoosS16}.
  \end{itemize}

  Many of the results mentioned above (either lower bounds or upper bounds) are proved using rather ad-hoc
  arguments that do not seem to be applicable to a large number of
  graph properties at once. On the other hand, a number of positive
  results (upper bounds on the local complexity) have been obtained
  via so-called ``meta-theorems'', that is theorems applicable to a
  wide range of problems and graph classes. 
  It was proved that any graph class of bounded treewidth
  which is expressible in monadic second order
  logic has local complexity $O(\log^2 n)$ \cite{tw}. This implies in particular that
  for any fixed $k$, the of treewidth at most $k$ has
  local complexity $O(\log^2 n)$. Similar meta-theorems involving
  graph properties expressible in first-order logic were proved
  for graphs of bounded treedepth \cite{BFT2} and graphs of bounded
  cliquewidth 
  \cite{FM0RT23}. Another type of meta-theorem was obtained in
  \cite{NPY20}: any property that can be decided in linear time has an
  \emph{interactive} proof labeling scheme with compact certificates
  (in this setting several rounds of interactions are allowed between the prover
  and the vertices).

  \medskip

Proof labeling schemes in graphs are considered to be
efficient if their complexity is polylogarithmic in the   size of the
graph. The class of all properties whose local complexity is polylogarithmic in the   size of the
graph can be seen as a distributed analogue of the class
\textsf{NP}. However, a major feature of \textsf{NP} has so far been
missing in the toolbox of local certification: the possibility of doing
hardness reductions. The purpose of this paper is to introduce a
framework allowing to use classical \textsf{NP}-hardness reductions to prove
automatic lower bounds on the local complexity of graph
properties. Our techniques apply whenever the reduction is
sufficiently local. In particular, using the
result of Göös and Suomela \cite{GoosS16} stating that
non-3-colorability has local complexity $\Omega(n^2/\log n)$, we prove
that a number of classical \textsf{coNP}-hard problems have polynomial local
complexity. 
	Table~\ref{fig:sum} sums up our lower bounds.
	
	\begin{table}[h!]
		\centering
		\begin{tabular}{ | >{\centering\arraybackslash}m{11em} | >{\centering\arraybackslash}m{11.5em} | >{\centering\arraybackslash}m{6em} | >{\centering\arraybackslash}m{4.5em} | }
			\hline
			Property $\p$ & Class of graphs $\mathcal{C}$ & Lower bound for $\p$ in $\mathcal{C}$ & Theorem \\
			\hhline{====}
			
			Non-3-colorability & Planar graphs
                                             with~$\Delta\le 4$ & $\Omega\left(\frac{\sqrt{n}}{\log^2 n}\right)$ & \ref{thm:non3coldegmax4_lineaireplanaire} \\ \hline
			Non-$k$-colorability ($k \geqslant 4$) &
                                                                 General graphs & $\Omega\left(\frac{n^2}{\log n}\right)$ &  \ref{thm:nonkcol_quadratique} \\ \hline
			Non-$k$-colorability ($k \geqslant 3$) &
                                                                 Graphs with $\Delta\le k + \lceil\sqrt{k}\rceil-1$ & $\Omega\left(\frac{n}{\log n}\right)$ & \ref{thm:nonkcol_lineaire} \\ \hline
			
			Domatic number at~most~$k$ ($k \geqslant 2$) & Graphs
                                                     with~$\Delta\le 5k^2 - 4k$ & $\Omega\left(\frac{n}{\log n}\right)$ & \ref{thm:domatic} \\ \hline
			Non-existence of a cubic subgraph & Graphs
                                                            with~$\Delta\le
                                                            7$ & $\Omega\left(\frac{n}{\log n}\right)$ & \ref{thm:cubic_subgraph} \\ \hline
			Non-existence of a partition in $k$ acyclic subgraphs ($k \geqslant 3$) & General graphs & $\Omega\left(\frac{n}{\log n}\right)$ &\ref{thm:acyclic_partition} \\ \hline
			Existence of a monochromatic triangle in every $2$-edge-coloring & Graphs with $\Delta \leqslant 40$ & $\Omega\left(\frac{n}{\log n}\right)$ & \ref{thm:monochromatic_triangle} \\ \hline
			
			Non-existence of a Hamiltonian cycle & Graphs with $\Delta \leqslant 4$ & $\Omega\left(\frac{\sqrt{n}}{\log n}\right)$ & \ref{thm:hamiltonian} \\ \hline
			Chromatic index equal~to~$\Delta+1$ & Cubic graphs & $\Omega\left(\frac{n}{\log n}\right)$ & \ref{thm:coloration aretes} \\ \hline
		\end{tabular}
	\caption{Summary of our lower bounds. In this table,~$\Delta$ denotes the maximum degree.}
	\label{fig:sum}
	\end{table}

        Note that most of the lower bounds in Table~\ref{fig:sum} hold in a bounded-degree graph class (so in general graphs too).
        However, in a bounded-degree graph class, every property has
        local complexity $O(n\log n)$. Indeed, $O(n \log n)$ bits are sufficient to encode all the edges of a bounded-degree graph (using adjacency lists), and we can just write the description of the whole graph in the certificate of every vertex.
        Thus, most of the results above in bounded-degree classes are optimal (within a polylogarithmic factor). The two exceptions are
        the local complexity of non-3-colorability in planar graphs of maximum degree~4, and the local complexity of non-hamiltonicity in graphs of maximum degree~4. We believe
        that in both cases the local complexity should be linear.
        For the non-existence of a partition into $k$ acyclic subgraphs,
        we only have a quasi-linear lower bound, which holds in
        general graphs. This is probably also not optimal (because in general graphs the universal upper bound is $O(n^2)$).
        Finally, note also that, except for $k$-colorability, our
        quasi-linear lower bounds which hold in bounded-degree graph classes remain quasi-linear in general graphs. Again, this is probably not optimal: an interesting open problem would be to determine whether each of them can be improved to quadratic in general graphs.

        \medskip

        As alluded to above, in the distributed world ``efficient''
        usually means logarithmic or polylogarithmic. From this point
        of view, our results show that all these problems have very
        high local complexity. On the other hand, it turns out that the complement
        problems all have very low local complexity (either constant
        or logarithmic).

        \subsection*{Organization of the paper} We start with some
        preliminary definitions on graphs and local certification in Section
        \ref{sec:prel}. In Section \ref{sec:3col}, we describe the
        lower bound on the local complexity of non-3-colorability
        by G\"o\"os and Suomela \cite{GoosS16}, and explain how to
        deduce lowers bounds in the more specific cases  of graphs with
        maximum degree~at most 4, and planar graphs. These lower
        bounds will be crucial in the remainder of the paper, as all
        problems we consider will be reduced to one of these problems
        (directly or indirectly).

        In Section \ref{sec:example}, we then present as a motivating example  a local
        reduction between non-3-colorability and
        non-$k$-colorability, and discuss the main features we expect
        of a local
        reduction. This motivates the general definition of a local
        reduction between problems that we present in Section \ref{sec:reduc}, together with a number of applications that give polynomial lower bounds for several \textsf{coNP}-hard problems.
        
        Then, in Section~\ref{sec:red3SAT}, we present a variant of our general reduction theorem which allows us to do local reductions from 3-SAT. This is motivated by the fact that it is  sometimes much more convenient to do an \textsf{NP}-hardness reduction from 3-SAT rather than from 3-colorability.
        To illustrate this, we give two applications, for problems having well-known \textsf{NP}-completeness reductions from 3-SAT.
        
        In Section~\ref{sec:variants}, we present a small variant of
        our general reduction theorem (namely, we give a relaxation of
        one of the conditions needed to apply it, that holds for
        properties on bounded-degree graph classes). We present again
        a direct application. Finally, in Section~\ref{sec:remark}, we
        show that there is no hope for a much more general theorem
        stating that any \textsf{coNP}-hard problem has polynomial
        local complexity: more precisely, we give an example of a        \textsf{coNP}-hard problem that has only logarithmic local complexity.
        We also show that our reduction framework applies not only to
        \textsf{coNP}-hard problems, by giving examples of reductions
        showing polynomial lower bounds on the local complexity of
        some properties that are in \textsf{P}.
        
\section{Preliminaries}\label{sec:prel}

\subsection{Graphs}

In this paper, graphs are assumed to be simple, loopless,
undirected, and connected.
The \emph{distance} between two vertices $u$ and $v$ in a graph $G$,
denoted by $d_G(u,v)$  is the minimum
length of a path between $u$ and $v$ (that is, its number of edges). The \emph{neighborhood} of a vertex $v$
in a graph $G$, denoted by $N_G(v)$ (or $N(v)$ if $G$ is clear from the
context),  is the set of vertices at distance exactly 1 from $v$. The
\emph{closed neighborhood} of $v$, denoted by  $N_G[v]:=\{v\}\cup N_G(v)$,
is the set of vertices at distance at most 1 from $v$. For a set $S$
of vertices of $G$, we define $N_G[S]:=\bigcup_{v\in S}N_G[v]$. Finally, for
every $k \geqslant 1$, we define $N_G^k[v]$ as the set of vertices at
distance at most~$k$ from~$v$ (in particular, we have $N_G^1[v] = N_G[v]$).

\subsection{Local certification}\label{sec:defpls}

In all the  $n$-vertex
graphs $G$ that we consider, the vertices of $G$ are assumed to be assigned distinct (but otherwise arbitrary)
identifiers $(\text{id}(v))_{v\in V(G)}$ from a set 
$\{1,\ldots,\text{poly}(n)\}$.
This is a standard assumption in local certification, which is
necessary if we want to be able to distinguish a graph from its
universal cover (and thus prove any meaningful result).
When we refer to a subgraph $H$ of a graph $G$, we
implicitly refer to the corresponding labeled subgraph of~$G$.
Note that the identifiers of each of the vertices of $G$ can be stored
using $O(\log n)$ bits.

\medskip

We follow the
terminology introduced by G\"o\"os and Suomela~\cite{GoosS16}.

\subsection*{Proofs and provers} A \emph{proof} for a graph $G$ is a function
$P:V(G)\to \{0,1\}^*$ (as $G$ is a  labeled graph,
the proof $P$ is allowed to depend on the identifiers of the
vertices of $G$). The binary words $P(v)$ are called \emph{certificates}. The \emph{size} of $P$ is the maximum size of a
certificate $P(v)$, for $v\in V(G)$.
A \emph{prover} for a graph class $\mathcal{G}$ is a function that
maps every $G\in \mathcal{G}$ to a proof for~$G$.

\subsection*{Local verifiers} A \emph{verifier} $\mathcal{A}$ is a function that takes a
graph $G$, a proof $P$
for $G$, and a vertex $v\in V(G)$ as inputs, and outputs an element of
$\{0,1\}$. We say that $v$ \emph{accepts} the instance if
$\mathcal{A}(G,P,v)=1$ and that $v$ \emph{rejects} the instance if
$\mathcal{A}(G,P,v)=0$. In the definition we do not require that $\mathcal{A}$ is
efficiently computable or even decidable (but in practice it will
almost always be the case).

Consider a graph $G$, a proof $P$ for $G$, and a
vertex $v\in V(G)$.
We denote by $G[v]$ the
subgraph of
$G$ induced by $N[v]$, the closed neighborhood of $v$, and similarly we denote by $P[v]$ the restriction of $P$ to
$N[v]$.

\medskip

A verifier $\mathcal{A}$ is \emph{local} if for any $v\in G$, the
output of $v$ only depends on its identifier and $P[v]$.

\medskip

We note that there is a small difference here with the original
setting of G\"o\"os and Suomela~\cite{GoosS16}, where the verifier
was allowed to depend on $G[v]$ as well (so vertices have the knowledge
of the subgraph induced by their neighborhood). Lower bounds in the
setting of  G\"o\"os and Suomela imply lower bounds in our setting,
but the converse does not hold in general. However, in the case of 
bounded degree graphs (which is the setting of most of our results) the two models are very close, as encoding the
subgraphs induced by each neighborhood can be done at an extra cost of
$O(\log n)$ bits per vertex. See also  Section~\ref{sec:variants}, for
the use of a similar idea, in a related context.

\medskip

Since the lower bounds of G\"o\"os and Suomela~\cite{GoosS16} on
non-3-colorability hold in their stronger setting, all the lower bounds we
deduce from it in this paper also hold in their setting (this is the case of all the
results of the paper, except those in the final section on the
exclusion of subgraphs). Since our goal was to design a general
framework for reductions in local certification, we have preferred to
work with the weaker setting, which applies more broadly (see 
\cite{Feuilloley21} for an overview of the different models of local
certification).

\medskip

A second difference between the setting of this paper and that of
G\"o\"os and Suomela~\cite{GoosS16} is that vertices only look at
distance at their neighbors 1 before accepting or rejecting the instance, while in~\cite{GoosS16} they can look at their
neighborhood at distance $r$, for some constant $r$ called the
\emph{local horizon}. The lower bounds of G\"o\"os and Suomela hold in
this more general setting, and so do the result in our paper (most of
the time, without any modification to the proofs). However, there are
quite a few places where having an arbitrary local horizon $r$ adds
a layer of technicality on top of some already quite technical statements
and proofs. So, for the sake of readability we have chosen to only state
and prove our results in this paper in the case where the local
horizon $r=1$, which already contains most of  the difficulty of the
problems we consider.

\subsection*{Proof labeling schemes}

A \emph{proof labeling scheme}
for a graph class $\mathcal{G}$ is a prover-verifier pair
$(\mathcal{P},\mathcal{A})$ where $\mathcal{A}$ is local, with the following properties.

\medskip

\noindent {\bf Completeness:} If $G\in \mathcal{G}$, then
$P:=\mathcal{P}(G)$ is a proof for $G$ such that for any vertex $v\in
V(G)$, $\mathcal{A}(G,P,v)=1$.

\medskip

\noindent {\bf Soundness:}  If $G\not\in \mathcal{G}$, then for every proof
$P'$ for $G$, there exists a vertex $v\in
V(G)$ such
that  $\mathcal{A}(G,P',v)=0$.

\medskip

In other words, upon looking at its closed neighborhood (labeled by
the identifiers and certificates), the local verifier of each vertex
of a graph $G\in \mathcal{G}$ accepts the instance, while if $G\not\in
\mathcal{G}$, for every possible choice of certificates, the local verifier of at least one vertex rejects the instance.

\medskip

The \emph{complexity} of the proof labeling scheme is the maximum size of a
proof $P=\mathcal{P}(G)$ for an $n$-vertex graph $G\in\mathcal{G}$,
and the \emph{local complexity} of $\mathcal{G}$ is the minimum
complexity of a proof labeling scheme for $\mathcal{G}$. If we say that the complexity is $O(f(n))$, for some
function $f$, the $O(\cdot)$ notation refers to $n\to \infty$. See
\cite{Feuilloley21,GoosS16} for more details on proof labeling schemes and local
certification in general.

\medskip

We implicitly identify every graph property $\mathcal{P}$, for
instance 3-colorability, or non-3-colorability, with the class of all
graphs that satisfy $\mathcal{P}$.  

\section{Non-3-colorability}\label{sec:3col}
	
	In what follows, we will use the following
result, proved by G\"o\"os and Suomela~\cite{GoosS16}:
	
	\begin{theorem}[\cite{GoosS16}]
		\label{thm:non3col_quadratique}
		Non-3-colorability has local complexity~$\Omega(n^2/\log
			n)$.
                  \end{theorem}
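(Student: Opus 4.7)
The plan is to reproduce the information-theoretic fooling-family argument that underlies the Göös–Suomela bound. The starting point is to construct, for every large enough $n$, a family $\mathcal{F}_n$ of non-3-colorable $n$-vertex graphs sharing a common vertex set $V$ and a common distinguished subset $S \subseteq V$ of size $O(\log n)$, with the following properties: (a) $|\mathcal{F}_n| = 2^{\Omega(n^2)}$; (b) for each $G \in \mathcal{F}_n$, the obstruction to 3-colorability is entirely concentrated in the interaction with $S$, in the sense that $G - S$ is 3-colorable and $S$ serves as a universal ``blocking gadget''; and (c) an arbitrary $\Theta(n)\times\Theta(n)$ Boolean matrix can be encoded into the edges of $G$ outside $S$, so that two graphs in $\mathcal{F}_n$ differ only in this encoded matrix.

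Next, I would run a pigeonhole / swap argument. Assume for contradiction a proof labeling scheme for non-3-colorability of local complexity $f(n) = o(n^2/\log n)$. For each $G \in \mathcal{F}_n$, let $P_G$ be an accepting proof. The restriction $P_G|_S$ uses at most $|S|\cdot f(n) = o(n^2)$ bits, so there are only $2^{o(n^2)}$ possible values for $P_G|_S$. Since $|\mathcal{F}_n| = 2^{\Omega(n^2)}$, pigeonhole yields two distinct graphs $G, G' \in \mathcal{F}_n$ together with proofs $P_G, P_{G'}$ such that $P_G|_S = P_{G'}|_S$.

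The core step is then a local-view-preserving swap: using the fact that $G$ and $G'$ differ only in the encoded matrix outside $S$, I would build a hybrid graph $G^\star$ by mixing rows and columns of the two encoded matrices, and a hybrid proof $P^\star$ by taking $P_G$ on some half of $V \setminus S$ and $P_{G'}$ on the other, together with the common certificate on $S$. The gadget is engineered so that every vertex of $G^\star$ has a closed neighborhood whose labeled/certified structure coincides exactly with that of the corresponding vertex in either $G$ or $G'$, forcing every local verifier to accept. At the same time, the mixing is chosen so that $G^\star$ is 3-colorable, contradicting soundness. This yields $f(n) \cdot |S| = \Omega(n^2)$, and hence $f(n) = \Omega(n^2/\log n)$.

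The main obstacle is the construction of $\mathcal{F}_n$ and the accompanying swap operation: one needs a gadget that simultaneously encodes $\Theta(n^2)$ bits far from $S$, guarantees non-3-colorability only through $S$, supports a ``row/column swap'' whose output is 3-colorable, and preserves all local views under this swap. The blow-up factor of $\log n$ enters precisely through the size $|S|$ that is needed to make this gadget work. Once the gadget is in place, the rest of the argument is a routine counting of certificate configurations combined with the exchange argument sketched above.
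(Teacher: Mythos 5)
Your proposal follows the same fooling-family plus pigeonhole plus local-view-preserving hybrid strategy that the paper (following Göös and Suomela) actually uses, so at a structural level there is no gap. One detail worth sharpening: in the actual construction the graph $G_{A,B}$ is split into two \emph{halves} $V_A$ and $V_B$, encoding $A$ and $B$ separately and communicating only through the $O(\log n)$ special vertices; the fooling family is $\{G_{A,\bar A}\}_{A\subseteq[2^k]^2}$, and the hybrid is not a ``row/column mix'' of one encoded matrix but a wholesale replacement of one half, yielding $G_{A,\bar B}$ with certificates taken from $G_{A,\bar A}$ on $V_A$ and from $G_{B,\bar B}$ on $V_{\bar B}$. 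The 3-colorability of the hybrid is also not a design choice but is forced: since $A\neq B$, at least one of $A\cap\bar B$ or $\bar A\cap B$ is nonempty, and by item (4) of the construction the corresponding hybrid $G_{A,\bar B}$ or $G_{\bar A,B}$ is 3-colorable, which gives the contradiction. These are refinements of your ``mixing'' step rather than corrections to the overall argument.
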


                  The full
                  description of the construction in \cite{GoosS16}
                  takes over 6 pages (including several very helpful
                  figures), so we do not describe it completely
                  here. But since we will need several variants of this result, we give a
                  short description of the main properties of the
                  construction (the interested reader is referred to
                  \cite{GoosS16} for more details). 
                   Let $k$ be an integer, and let $I=\{1,\ldots,2^k\}
                   $. For any two subsets
                  $A\subseteq I\times I $ and $B\subseteq I\times I$, G\"o\"os and Suomela construct a
                  graph $G_{A,B}$ on $n=\Theta(2^k)$ vertices such
                  that:

                  \begin{enumerate}
\item $G_{A,B}$ can be partitioned into two parts $V_A$ (such that $G_{A,B}[V_A]$ depends
  only on $A$) and $V_B$ (such that $G_{A,B}[V_B]$ depends
  only on   $B$).
  \item $V_A$ contains a set $S_A=a_1,\ldots,a_{s}$ of $\Theta(\log n)$ special vertices, and
    $V_B$ contains a set $S_B=b_1,\ldots,b_{s}$ of
    $\Theta(\log n)$ special vertices.

    \item Each edge between $V_A$ and $V_B$ in $G_{A,B}$ connects a
    vertex of 
    $S_A$ to a vertex of $S_B$, there are at most $O(k)$ such edges,
    and the subgraph of $G_{A,B}$ induced by the special vertices
    $S_A\cup S_B$ is independent of the
    choice of $A$ and $B$ (in particular $s$ does not depend on $A$
    and $B$). 
    \item $G_{A,B}$ is 3-colorable if and only if $A\cap B\ne \emptyset$.
    \end{enumerate}

    \smallskip

Given assignments of certificates to the vertices of two graphs $G_{A,B}$ and $G_{C,D}$, we say that the
\emph{certificates agree on their special vertices}, if for every $1\le
i \le s$ the certificate of $a_i$ in $G_{A,B}$ coincides with the certificate of
$a_i$ in $G_{C,D}$, and the certificate of $b_i$ in $G_{A,B}$ coincides with the certificate of
$b_i$ in $G_{C,D}$.

\smallskip
                  
For completeness we now  sketch the argument of G\"o\"os and Suomela \cite{GoosS16} 
showing that non-3-colorability cannot be certified locally with
                  certificates of size $o(n^2/\log n)$. Consider $G_{A,\bar{A}}$ for some set $A\subseteq I\times
I$, where $\bar{A}$ denotes the
    complement of $A$.
Note that (4) implies that 
    $G_{A,\bar{A}}$ is not 3-colorable.

    Since there are only $O(\log n)$ special vertices, 
                  if non-3-colorability can be certified locally with
                  certificates of size $o(n^2/\log n)$, the total
                  number of bits of certificates assigned to the
                  special vertices of $G_{A,\bar{A}}$ is $o(n^2)$, and
                  thus the number of possible certificate assignments to
                  the special vertices of $G_{A,\bar{A}}$ is $2^{o(n^2)}$.  Since there are $2^{(2^{k})^2}=2^{\Omega(n^2)}$ choices for the set $A$,
                  it follows from the pigeonhole principle that there exist
                  two distinct sets $A,B$ such that the certificates of $G_{A,\bar{A}}$ and
                  $G_{B,\bar{B}}$ agree on their special vertices. As $G_{A,\bar{A}}$ and
                  $G_{B,\bar{B}}$ are both non-3-colorable, in any
                  local certification scheme for non-3-colorability in
                  these two graphs, each
                  vertex accepts the instance upon reading its local
                  view  (i.e., the certificates
                  of its closed neighborhood).

                  Note that at
                  least one of $A\cap \bar{B}$ and $\bar{A}\cap B$ is
                  non-empty, say $A\cap \bar{B}\ne \emptyset$ by symmetry.
Consider the graph $G_{A,\bar{B}}$, and assign to the vertices of
$V_A$ their certificates in $G_{A,\bar{A}}$, and to the vertices of
$V_{\bar{B}}$ their certificate in $G_{B,\bar{B}}$. Observe that the
local view of each vertex of $G_{A,\bar{B}}$ is the same as in
$G_{A,\bar{A}}$ or $G_{B,\bar{B}}$, and thus each vertex accepts the
instance. By definition of a proof labeling scheme, this implies that
$G_{A,\bar{B}}$ is non-3-colorable. On the other hand,  by (4) above, $G_{A,\bar{B}}$ is 3-colorable (because $A\cap
\bar{B}\ne \emptyset$). This  contradiction shows
that non-3-colorability cannot be certified locally with
                  certificates of size $o(n^2/\log n)$, as desired.

                  \medskip
	
	We now deduce the following
        variant of the previous result, which will be
        crucial in most of our arguments.
	
	\begin{theorem}
		\label{thm:non3coldegmax4_lineaire}
		Non-3-colorability of graphs of maximum degree~4 has
                local complexity~$\Omega(n/\log n)$.
	\end{theorem}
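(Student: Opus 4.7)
The plan is to reduce the maximum degree of the Göös--Suomela construction to $4$ by a local ``blow-up'' of high-degree vertices, while preserving all properties needed for the pigeonhole argument behind \Cref{thm:non3col_quadratique}. Concretely, I would take the family of graphs $G_{A,\bar A}$ with $A \subseteq [2^k]^2$ provided by Göös and Suomela, and apply a degree-reduction operation $\phi$ that replaces each vertex $v$ of degree $d > 4$ by an \emph{equality cluster} $H_v$: this cluster contains $d$ copies $v_1, \dots, v_d$ of $v$ (one per original external neighbor) together with $O(d)$ auxiliary vertices, has maximum degree $4$, and enforces that in any proper $3$-coloring all copies $v_1, \dots, v_d$ receive the same color, while any single color for those copies can still be extended to a proper $3$-coloring of $H_v$. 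Such a cluster can be built by assembling copies of the ``$K_4$ minus an edge'' gadget (which forces its two non-adjacent vertices to share their color) along a bounded-degree tree of auxiliary vertices, distributing the $v_i$'s among the leaves so that each copy has internal degree at most $3$.

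With such a gadget in hand, the blown-up graph $\tilde G_{A,B} := \phi(G_{A,B})$ has maximum degree $4$ and is $3$-colorable if and only if $G_{A,B}$ is, hence if and only if $A \cap B \neq \emptyset$. Its size is $N = O(n + |E(G_{A,B})|) = O(n^2)$, where $n = |V(G_{A,B})|$. The two halves $\tilde V_A := \phi(V_A)$ and $\tilde V_B := \phi(V_B)$ depend only on $A$ and $B$ respectively, and since there are only $O(\log n)$ edges crossing between $V_A$ and $V_B$ in the original construction, there are only $O(\log n) = O(\log N)$ new ``interface'' vertices (each crossing edge contributing at most $2$ of them).

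I would then repeat verbatim the pigeonhole argument from the sketch of \Cref{thm:non3col_quadratique} on the family $\{\tilde G_{A, \bar A}\}$. There are $2^{\Theta(n^2)} = 2^{\Theta(N)}$ choices of $A$, so any certification scheme of complexity $f(N) = o(N/\log N)$ would place only $o(N)$ bits in total on the $O(\log N)$ interface vertices; by the pigeonhole principle, two distinct sets $A \neq B$ would yield certificates that agree on this interface. Gluing the certificates of $\tilde G_{A, \bar A}$ and $\tilde G_{B, \bar B}$ along their common interface then produces a certificate accepted everywhere on the graph $\tilde G_{A, \bar B}$ (assuming without loss of generality that $A \cap \bar B \neq \emptyset$), which is $3$-colorable, contradicting soundness.

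The main obstacle is the design of the equality cluster $H_v$: the bare ``$K_4$ minus an edge'' gadget already uses degree $2$ at each of its two equality endpoints, so naively chaining two such gadgets to synchronize three copies of $v$ immediately saturates the degree-$4$ budget at the middle copy once its external neighbor is attached. Any recursive or tree-like assembly of the gadget must therefore be balanced carefully so that each vertex ends up with internal degree at most $3$ while equality is still propagated between all the leaves; this is essentially the same technical issue that arises in the classical proof that $3$-colorability remains \textsf{NP}-hard in graphs of maximum degree $4$, and a gadget coming from that literature can be adapted here.
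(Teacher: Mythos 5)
Your approach is essentially identical to the paper's: the paper also applies a local degree-reduction to G\"o\"os--Suomela's construction, using exactly the classical gadget of Garey, Johnson, and Stockmeyer (\cite{GareyJS76}) to resolve the gadget-design obstacle you flag, then observes the quadratic size blow-up together with the $O(\log N)$ interface and reruns the pigeonhole/fooling-set argument with the new size parameter.
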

	
	\begin{proof}
	For a graph $G$, we construct a new graph $f(G)$ of maximum
        degree~4 such that $G$ is 3-colorable if and only if $f(G)$ is
        3-colorable (this is a classical construction~\cite{GareyJS76}).
The graph $f(G)$ is constructed as follows: for every vertex $u$ of
degree~$d$ in $G$, we consider the graph $G_u$ depicted in
Figure~\ref{fig:3col4}, where the $d$ white vertices are indexed by
the neighbors of $u$ in $G$. Observe that in any 3-coloring of $G_u$,
all the white vertices receive the same color. The graph $f(G)$ is
obtained by taking the
disjoint union of all graphs $G_u$, for $u\in V(G)$, and adding, for every
edge $uv\in E(G)$, an edge between the vertex of $G_u$
indexed by $v$ and the vertex of $G_v$ indexed by $u$. Note that
$f(G)$ has maximum degree~$4$ and is 3-colorable if and only if $G$ is
3-colorable. As each graph $G_u$ contains at most $8d_G(u)$ vertices,
$f(G)$ contains at most $16|E(G)|\le 8|V(G)|^2$ vertices. 

\smallskip

\begin{figure}[htb]
  \centering
  \includegraphics[scale=1.2]{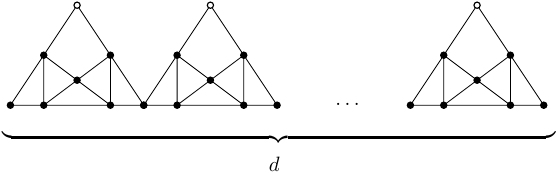}
  \caption{The graph $G_u$.}
  \label{fig:3col4} 
\end{figure}

For any integer $k$, and any two subsets
                  $A\subseteq I\times I $ and $B\subseteq I \times I$
                  (where $I=\{1,\ldots,2^k\}$), we
                  consider the graph $G_{A,B}$ of G\"o\"os and Suomela
                  \cite{GoosS16} introduced above, and we let
                  $H_{A,B}=f(G_{A,B})$. Note that $H_{A,B}$ contains
                  $n=\Theta(2^{2k})$ vertices, and moreover:

                  \begin{enumerate}
\item $H_{A,B}$ can be partitioned into two parts $V_A$ (such that $H_{A,B}[V_A]$ depends
  only on $A$) and $V_B$ (such that $H_{A,B}[V_B]$ depends
  only on   $B$).
  \item $V_A$ contains a set $S_A$ of $\Theta(\log n)$ special vertices, and
    $V_B$ contains a set $S_B$ of $\Theta(\log n)$ special vertices.
      \item Each edge between $V_A$ and $V_B$ in $G_{A,B}$, connects a
    vertex of 
    $S_A$ to a vertex of $S_B$, 
    and the subgraph of $G_{A,B}$ induced by the special vertices
    $S_A\cup S_B$ is independent of the
    choice of $A$ and $B$ (in particular the size of $S_A$ and $ S_B$
    is independent of $A$ and $B$). 
    \item $G_{A,B}$ is 3-colorable if and only if $A\cap B\ne \emptyset$.
                  \end{enumerate}
The fact that the sets $S_A$ and $S_B$ still have size $O(\log n)$
in $H_{A,B}$ follows from item (3) in the properties of $G_{A,B}$ (the
fact that there are only $O(\log n)$ edges with one endpoint in $S_A$
and one endpoint in $S_B$).

As above, if non-3-colorability can be certified locally with
                  certificates of size $o(n/\log n)$ in graphs of
                  maximum degree~4, the total
                  number of bits of certificates assigned to the
                  special vertices of $H_{A,\bar{A}}$ is $o(n)$.  Since there are $2^{(2^{k})^2}=2^{\Omega(n)}$ choices for the sets $A$,
                  it follows from the pigeonhole principle that there exist
                  two distinct sets $A,B$ such that the certificates of $H_{A,\bar{A}}$ and
                  $H_{B,\bar{B}}$ agree on their special vertices. We
                  then obtain a contradiction using the same argument as
                  above.
                \end{proof}

                Using a classical uncrossing technique we obtain the
                following variant for planar graphs (of bounded
                degree). 
	
	\begin{theorem}
		\label{thm:non3coldegmax4_lineaireplanaire}
		Non-3-colorability of planar graphs of maximum
                degree~4 has local complexity~$\Omega(\sqrt{n}/\log^2 n)$.
	\end{theorem}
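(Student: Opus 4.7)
The plan is to planarise the max-degree-$4$ graph $H_{A,B}$ from the proof of \cref{thm:non3coldegmax4_lineaire} using a classical 3-colorability-preserving crossing-removal gadget, arranging the drawing so that the ``interface'' between the $A$-side and the $B$-side has only $O(\log^2 n)$ vertices, and then rerunning the fooling-set argument on this slightly larger interface.

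I would first fix, once and for all, a drawing of $H_{A,B}$ in the plane in which $V_A$ is drawn inside a closed disc $D_A$, $V_B$ inside a disjoint closed disc $D_B$, the special vertices $S_A$ and $S_B$ lie on the boundaries of $D_A$ and $D_B$ facing each other, and each of the $O(k)$ edges between $V_A$ and $V_B$ is drawn entirely inside the open band separating $D_A$ from $D_B$. By item~(3) of the properties of $H_{A,B}$, the portion of the drawing that lies in the band can be chosen independently of $A$ and $B$. Such a drawing has at most $O(|E(H_{A,B})|^2)=O(2^{4k})$ crossings inside $D_A\cup D_B$, and at most $\binom{O(\log N_H)}{2}=O(\log^2 N_H)$ crossings inside the band, all of the latter involving only pairs of boundary edges. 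I would then replace every crossing by a classical constant-size planar gadget that preserves 3-colorability and keeps the maximum degree equal to~$4$ (such gadgets appear in the Garey--Johnson--Stockmeyer proof~\cite{GareyJS76} that 3-colorability of planar graphs of maximum degree~$4$ is \textsf{NP}-hard). The resulting graph $P_{A,B}$ is planar, of maximum degree~$4$, with $n=\Theta(2^{4k})$ vertices, and is 3-colorable iff $A\cap B\neq\emptyset$. Let $V_A^P$, $V_B^P$, and $V_{\mathrm{band}}^P$ denote the sets of vertices of $P_{A,B}$ lying in $D_A$, $D_B$, and the band respectively; then $|V_{\mathrm{band}}^P|=O(\log^2 N_H)=O(\log^2 n)$, and both $V_{\mathrm{band}}^P$ and its adjacencies with $S_A\cup S_B$ are independent of $A$ and $B$.

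Finally I would rerun the fooling-set argument with separator $S:=S_A\cup S_B\cup V_{\mathrm{band}}^P$, of size $O(\log^2 n)$. If non-3-colorability of planar graphs of maximum degree~$4$ could be certified with certificates of size $o(\sqrt n/\log^2 n)$, then for each $A$ the certificates written on $S$ in $P_{A,\bar A}$ use $o(\sqrt n)$ bits in total, whereas there are $2^{(2^k)^2}=2^{\Theta(\sqrt n)}$ choices of $A\subseteq[2^k]^2$. Pigeonhole then yields two distinct sets $A,B$ for which the certificates on $S$ agree in $P_{A,\bar A}$ and $P_{B,\bar B}$. Combining the certificate of $V_A^P$ from $P_{A,\bar A}$, the certificate of $V_B^P$ from $P_{B,\bar B}$, and the common certificate on $S$, one obtains a proof of $P_{A,\bar B}$ in which every vertex accepts, since the local view of each vertex coincides with its view in $P_{A,\bar A}$ or $P_{B,\bar B}$. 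As $P_{A,\bar B}$ is 3-colorable (assuming $A\cap\bar B\ne\emptyset$, which holds up to swapping $A$ and $B$), this contradicts soundness.

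The main difficulty is the careful choice of drawing and gadgets: one must guarantee that the closed neighborhood of $S$ inside $P_{A,B}$ is genuinely independent of $A$ and $B$, so that the three local views referred to above really do match up in the mixed proof. This reduces to arranging that every uncrossing gadget placed in $D_A$ (resp.\ $D_B$) sits strictly inside $D_A$ (resp.\ $D_B$) and is adjacent only to vertices in $V_A^P$ (resp.\ $V_B^P$), while each gadget placed in the band is adjacent only to $V_{\mathrm{band}}^P\cup S_A\cup S_B$. This is a clean but somewhat technical combinatorial check on the drawing, and is the only place where the planarity constraint genuinely costs us an extra $\log n$ factor compared to \cref{thm:non3coldegmax4_lineaire}.
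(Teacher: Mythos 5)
Your proposal is correct and follows essentially the same route as the paper's proof: plant $V_A$ and $V_B$ in disjoint regions with the $O(\log n)$ special vertices on the boundary, planarize so that only the $O(\log^2 n)$ crossings between interface edges become new ``special'' vertices, and rerun the pigeonhole argument with the enlarged separator. One small imprecision: the GJS crossing gadget by itself does \emph{not} preserve maximum degree~$4$ --- identifying gadget corners with the endpoints blows the degree up to $12$ --- so after planarizing one still has to apply the degree-$4$ reduction $f$ from the proof of \cref{thm:non3coldegmax4_lineaire} (which is exactly what the paper does in its last step, writing $P'_{A,B}=f(P_{A,B})$); this extra step costs only a constant factor in the vertex count and does not affect the bound.
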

	
	\begin{proof}
We start with the graph $H_{A,B}$ constructed in the proof of Theorem
\ref{thm:non3coldegmax4_lineaire}, and draw it in the plane with $V_A$
inside a region $R_A$, $V_B$ inside a region
$R_B$ disjoint from $R_A$, with the vertices of $S_A$ lying on the boundary of $R_A$
and the vertices $S_B$ lying on the boundary of $R_B$. Moreover we make sure that
the bipartite graph induced by the edge $E(S_A,S_B)$ with one endpoint in $S_A$ and
the other in $S_B$ is drawn outside of the interior of $R_A$ and
$R_B$ (so that the edges between $S_A$ and $S_B$ do not cross any
other edge of $H_{A,B}[V_A]$ and $H_{A,B}[V_B]$), see Figure
\ref{fig:ABplanar}. We call crossings between edges of $E(S_A,S_B)$
\emph{special crossings}.

\begin{figure}[htb]
  \centering
  \includegraphics[scale=0.9]{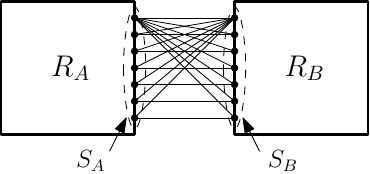}
  \caption{Drawing $H_{A,B}$ in the plane before the addition of uncrossing gadgets.}
  \label{fig:ABplanar} 
\end{figure}

We then use uncrossing gadgets
introduced in \cite{GareyJS76} (see Figure \ref{fig:cross} for an
illustration). Let $P_{A,B}$ be the resulting planar graph (which is
3-colorable if and only if $A\cap B\ne \emptyset$). Note that
the number of vertices of $P_{A,B}$ is equal to the number of vertices
of $H_{A,B}$ plus  a
constant number of vertices per crossing. As $H_{A,B}$
has $n=\Theta(2^{2k})$ vertices and maximum degree~4, it has
$\Theta(2^{2k})$ edges and thus at most $O(2^{4k})$ edge
crossings. It follows that the planar graph $P_{A,B}$ has
$O(2^{4k})$ vertices. Note that $P_{A,B}$ has maximum degree~$4\cdot
3=12$, as  $H_{A,B}$ has maximum degree~4.
\smallskip

\begin{figure}[htb]
  \centering
  \includegraphics[scale=0.8]{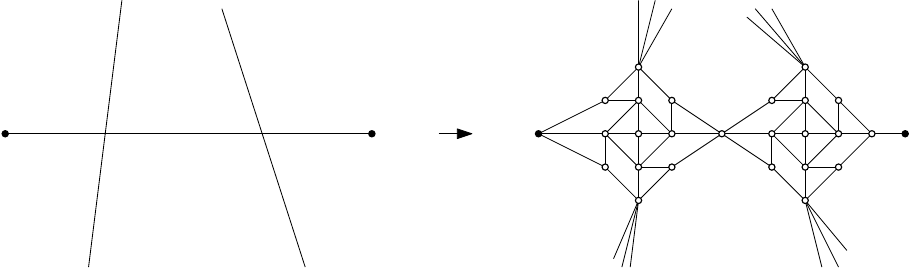}
  \caption{Uncrossing gadgets.}
  \label{fig:cross} 
\end{figure}

As $S_A$ and $S_B$ have size $O(\log n)$ in $H_{A,B}$ and this graph
has maximum degree~4, $E(S_A,S_B)$ contains $O(\log n)$ edges, and
thus there are $O(\log^2 n)$ special crossings. It follows
that there are $O(\log^2 n)$ special vertices in $P_{A,B}$ (as each
new crossing of $H_{A,B}$ creates a constant number of new vertices in
$P_{A,B}$).

We now consider $P'_{A,B}=f(P_{A,B})$ (the graph obtained from
$P_{A,B}$ by applying the maximum degree~4 reduction of the proof of
Theorem \ref{thm:non3coldegmax4_lineaire}). This graph is a planar graph
with maximum degree~4, is 3-colorable if and only if $A\cap B\ne
\emptyset$, has $n=O(2^{4k})$ vertices, but only contains $O(\log^2
n)$ special vertices. There are
$2^{(2^k)^2}=2^{2^{2k}}=2^{\Omega(\sqrt{n})}$ choices for the set $A$,
so if we can locally certify non-3-colorability with certificates of
size $o(n^{1/2}/\log^2 n)$ then the total number of bits of
certificates assigned to the special vertices of $P'_{A,\bar{A}}$ is $o(n^{1/2})$ and thus
there are two distinct sets $A,B$ such that the certificates of $P'_{A,\bar{A}}$ and
$P'_{B,\bar{B}}$ agree on their special vertices. We then obtain a contradiction using the same argument as above.
\end{proof}

                \begin{remark}
A closer inspection of the construction of G\"o\"os and
Suomela \cite{GoosS16} yields that the subgraph between the special
vertices $S_A$ and $S_B$ can be drawn in the plane with just $O(\log
n)$ crossings (instead of $O(\log^2 n)$ in the proof above). This
implies that the lower bound on the size of the certificates  in
Theorem \ref{thm:non3coldegmax4_lineaireplanaire}  can be increased to
$\Omega(n^{1/2}/\log n)$.  As we do
not believe that the exponent $1/2$ is best
possible we omit the details.
\end{remark}

\section{Non-$k$-colorability}\label{sec:example}

In this section we explain how to obtain a $\Omega(n^2/\log
    n)$ lower bound on the local complexity of
non-$k$-colorability, for every $k > 3$, using the results of the
previous section. This will also be a direct consequence of our main
reduction result, Theorem~\ref{thm:main theorem local reduction}, but
since the statement of this theorem is quite technical we prefer to
illustrate the main ideas of the proof with this simple example before
actually stating the
theorem. 

\medskip
	
So let us prove that at least $\Omega(n^2/\log n)$
bits are necessary to locally certify non-$k$-colorability for every $k
\geqslant 4$. Assume that there is a proof labeling scheme of complexity
$f(n)$ for this problem, and consider a
graph $G$ for which we wish to certify that $G$ is not
3-colorable. The main idea will be to simulate locally the construction of
a graph $G'$, so that $G'$ is $k$-colorable if and only if $G$ is
3-colorable, use the local certificates of non-$k$-colorability of
$G'$ and assign them to the vertices of $G$, in such a way that $G$
can also simulate the verification of the certificates of~$G'$.

\subsection*{First attempt}

We start by defining the graph $G'$ as the graph
obtained from $G$ by adding $k-3$ universal vertices (such a graph
$G'$ is $k$-colorable if and only if $G$ is 3-colorable).

Given a
proof labeling scheme for non-$k$-colorability in $G'$, our goal is to assign to
each vertex $u\in V(G)$ the list of certificates of a subset $C_u$ of
vertices of $G'$, together with their identifiers, in such a way that $u$ can
simulate the verification procedure of a subset $V_u$ of vertices of $G'$. In this case, our choice
for $C_u$ and $V_u$ is the following: for any $u\in V(G)$, we set $C_u=V(G')$
(that is, we assign to each vertex of $G$ the list of the certificates of all the vertices of $G'$ together
with their identifier), and $V_u$ consists of $u$ together with the
$k-3$ universal vertices of $G'$. This allows each vertex $u\in V(G)$ to
simulate the
verification process of the corresponding vertex in $G'$, since its local view in $G'$ consists of its
neighbors in $G$, together with the universal vertices. Each vertex
$u\in V(G)$ also simulates the verification of the $k-3$ universal
vertices (whose local view is the whole graph). Now each vertex $u\in
V(G)$ accepts the instance if and only if it would have accepted the
instance in $G'$ and each of the universal vertices would also have
accepted the instance in $G'$. The size of the certificates is at most
$n$ times the size of the certificates for non-$k$-colorability, so by
Theorem \ref{thm:non3col_quadratique}
this shows that non-$k$-colorability requires certificates of size
$\tfrac1n\cdot \Omega(n^2/\log n)=\Omega(n/\log n)$.

\medskip

The resulting bound of $\Omega(n/\log n)$ is much weaker than the
bound of $\Omega(n^2/\log n)$ for non-3-colorability. The main reason is that adding universal vertices is very far from
being a local operation, and this should be avoided as much as
possible. Let us now consider a second (more local) way of defining the
graph $G'$.

\subsection*{Second attempt}

The new graph $G'$ is obtained from the disjoint union of two graphs $G_0$ and $G_1$ defined
as follows. The graph $G_0$ is an isomorphic copy of $G$. The graph
$G_1$ is obtained from the graph $G$ by replacing each vertex~$u$ by a clique
of size~$k$ with vertices numbered from~1 to~$k$. We denote this
clique by $K_u$, and for any $1\le i \le k$, the vertex numbered~$i$
in $K_u$ is denoted by $K_u[i]$. Finally,
each edge $uv$ of $G$ is replaced by an antimatching between $K_u$ and
$K_v$ in $G_1$ (that is, for any $i\ne j$ we add an edge between
$K_u[i]$ and $K_v[j]$ in $G_1$). The edges between $G_0$ and $G_1$ are the following: for each $u \in V(G_0)$, we put an edge between $u$ and $K_u[i]$ for every $i \in \{4, \ldots, k\}$.
		We can define an identifier for each vertex of $G'$
                describing whether it belongs to $G_0$, or to $G_1$
                (and its number from $\{1, \ldots, k\}$ in the latter
                the case), together with the identifier of the
                corresponding vertex in $G$. All this information can
                be encoded on $O(\log n)$ bits.

                	It is straightforward to show that $G$ is 3-colorable
                if and only if $G'$ is $k$-colorable: indeed, the
                antimatchings enforce that, in any $k$-coloring of
                $G'$, for each $i \in \{1, \ldots, k\}$ the vertices from
                $\{K_v[i]: v \in V(G)\}$ all have the same color, and
                thus there are only three possible remaining colors
                for the vertices of $G_0$, which is an isomorphic copy
                of $G$.

		For every $u \in V(G)$, we define $V_u:=C_u$ as being
                the union of the copy of $u$ in $G_0$ and the clique
                $K_u$ in $G_1$ (as above, $C_u$ is the list of
                vertices of $G'$ whose certificate will be given to
                $u$ and $V_u$ is the set of vertices of $G'$ whose
                verification process will be simulated by $u$). See
                Figure~\ref{fig:k-col} for an example. Note that these
                lists of certificates are enough for each vertex $u$ of
                $V(G)$ to simulate the verification process of $V_u$,
                as for any neighbor $w$ of a vertex $v\in V_u$ in
                $G'$, the certificate of $w$ in $G'$ will be contained
                in the certificate assigned to some neighbor of $u$ in $G$.

                Finally, let us compute the size of certificates in
                the resulting proof labeling scheme for
                non-3-colorability, assuming the local complexity of
                non-$k$-colorability is $f(n)$. If $G$ has $n$
                vertices, then $G'$ has $(k+1)n$ vertices. So the
                size of the certificate assigned to each vertex of $G$
                is $(k+1)\big(O(\log n)+f\big((k+1)n\big)\big)$. As $k$ is constant, it follows
                from Theorem \ref{thm:non3col_quadratique} that non-$k$-colorability requires certificates of size
$\Omega(n^2/\log n)$, as desired.
				Note also that, if $G$ has maximum degree~4, then $G'$ has maximum degree~$5k-4$. Thus, it follows from Theorem~\ref{thm:non3coldegmax4_lineaire} that non-$k$-colorability requires certificates of size $\Omega(n/\log n)$ in graphs of maximum degree~$5k-4$.
		
		\begin{figure}[h!]
		\label{fig:k-col}
		\centering
		
		\begin{tikzpicture}[x=0.55pt,y=0.55pt,yscale=-0.7,xscale=0.7]
			
			\draw    (324,145) .. controls (496.5,-52) and (690.5,-7) .. (573.5,159.5) ;
			\draw    (388.5,209.5) .. controls (394.5,57) and (761.5,-41) .. (640.5,226.5) ;
			\draw    (324,273) .. controls (349.5,324) and (465.5,380) .. (574.5,287.5) ;
			\draw    (259.5,209) .. controls (274.5,395) and (420.5,399) .. (508.5,224.5) ;
			\draw    (259.5,209) -- (324,273) ;
			\draw    (324,273) -- (388.5,209.5) ;
			\draw    (259.5,209) -- (324,145) ;
			\draw    (388.5,209.5) -- (324,145) ;
			\draw [line width=3]    (542,287) -- (476,224) ;
			\draw [line width=3]    (574.5,287) -- (640.5,226.5) ;
			\draw [line width=3]    (573.5,127.5) -- (640.5,194.5) ;
			\draw [line width=3]    (476,192) -- (541,127) ;
			\draw [line width=3]    (508.5,224.5) -- (542,255) ;
			\draw [line width=3]    (608,194) -- (573.5,159.5) ;
			\draw [line width=3]    (574.5,255) -- (608,226) ;
			\draw [line width=3]    (508.5,192) -- (541,159) ;
			\draw    (541,127) -- (573.5,159.5) ;
			\draw    (541,159) -- (573.5,127.5) ;
			\draw    (541,127) -- (541,159) ;
			\draw    (541,159) -- (573.5,159.5) ;
			\draw    (573.5,127.5) -- (573.5,159.5) ;
			\draw    (541,127) -- (573.5,127.5) ;
			\draw  [fill={rgb, 255:red, 255; green, 255; blue, 255 }  ,fill opacity=1 ] (535,127) .. controls (535,123.69) and (537.69,121) .. (541,121) .. controls (544.31,121) and (547,123.69) .. (547,127) .. controls (547,130.31) and (544.31,133) .. (541,133) .. controls (537.69,133) and (535,130.31) .. (535,127) -- cycle ;
			\draw  [fill={rgb, 255:red, 255; green, 255; blue, 255 }  ,fill opacity=1 ] (535,159) .. controls (535,155.69) and (537.69,153) .. (541,153) .. controls (544.31,153) and (547,155.69) .. (547,159) .. controls (547,162.31) and (544.31,165) .. (541,165) .. controls (537.69,165) and (535,162.31) .. (535,159) -- cycle ;
			\draw  [fill={rgb, 255:red, 255; green, 255; blue, 255 }  ,fill opacity=1 ] (567.5,127.5) .. controls (567.5,124.19) and (570.19,121.5) .. (573.5,121.5) .. controls (576.81,121.5) and (579.5,124.19) .. (579.5,127.5) .. controls (579.5,130.81) and (576.81,133.5) .. (573.5,133.5) .. controls (570.19,133.5) and (567.5,130.81) .. (567.5,127.5) -- cycle ;
			\draw  [fill={rgb, 255:red, 255; green, 255; blue, 255 }  ,fill opacity=1 ] (567.5,159.5) .. controls (567.5,156.19) and (570.19,153.5) .. (573.5,153.5) .. controls (576.81,153.5) and (579.5,156.19) .. (579.5,159.5) .. controls (579.5,162.81) and (576.81,165.5) .. (573.5,165.5) .. controls (570.19,165.5) and (567.5,162.81) .. (567.5,159.5) -- cycle ;
			\draw    (476,192) -- (508.5,224.5) ;
			\draw    (476,224) -- (508.5,192.5) ;
			\draw    (476,192) -- (476,224) ;
			\draw    (476,224) -- (508.5,224.5) ;
			\draw    (508.5,192) -- (508.5,224) ;
			\draw    (476,192) -- (508.5,192.5) ;
			\draw  [fill={rgb, 255:red, 255; green, 255; blue, 255 }  ,fill opacity=1 ] (470,192) .. controls (470,188.69) and (472.69,186) .. (476,186) .. controls (479.31,186) and (482,188.69) .. (482,192) .. controls (482,195.31) and (479.31,198) .. (476,198) .. controls (472.69,198) and (470,195.31) .. (470,192) -- cycle ;
			\draw  [fill={rgb, 255:red, 255; green, 255; blue, 255 }  ,fill opacity=1 ] (470,224) .. controls (470,220.69) and (472.69,218) .. (476,218) .. controls (479.31,218) and (482,220.69) .. (482,224) .. controls (482,227.31) and (479.31,230) .. (476,230) .. controls (472.69,230) and (470,227.31) .. (470,224) -- cycle ;
			\draw  [fill={rgb, 255:red, 255; green, 255; blue, 255 }  ,fill opacity=1 ] (502.5,192.5) .. controls (502.5,189.19) and (505.19,186.5) .. (508.5,186.5) .. controls (511.81,186.5) and (514.5,189.19) .. (514.5,192.5) .. controls (514.5,195.81) and (511.81,198.5) .. (508.5,198.5) .. controls (505.19,198.5) and (502.5,195.81) .. (502.5,192.5) -- cycle ;
			\draw  [fill={rgb, 255:red, 255; green, 255; blue, 255 }  ,fill opacity=1 ] (502.5,224.5) .. controls (502.5,221.19) and (505.19,218.5) .. (508.5,218.5) .. controls (511.81,218.5) and (514.5,221.19) .. (514.5,224.5) .. controls (514.5,227.81) and (511.81,230.5) .. (508.5,230.5) .. controls (505.19,230.5) and (502.5,227.81) .. (502.5,224.5) -- cycle ;
			\draw    (608,194) -- (640.5,226.5) ;
			\draw    (608,226) -- (640.5,194.5) ;
			\draw    (608,194) -- (608,226) ;
			\draw    (608,226) -- (640.5,226.5) ;
			\draw    (640.5,194.5) -- (640.5,226.5) ;
			\draw    (608,194) -- (640.5,194.5) ;
			\draw  [fill={rgb, 255:red, 255; green, 255; blue, 255 }  ,fill opacity=1 ] (602,194) .. controls (602,190.69) and (604.69,188) .. (608,188) .. controls (611.31,188) and (614,190.69) .. (614,194) .. controls (614,197.31) and (611.31,200) .. (608,200) .. controls (604.69,200) and (602,197.31) .. (602,194) -- cycle ;
			\draw  [fill={rgb, 255:red, 255; green, 255; blue, 255 }  ,fill opacity=1 ] (602,226) .. controls (602,222.69) and (604.69,220) .. (608,220) .. controls (611.31,220) and (614,222.69) .. (614,226) .. controls (614,229.31) and (611.31,232) .. (608,232) .. controls (604.69,232) and (602,229.31) .. (602,226) -- cycle ;
			\draw  [fill={rgb, 255:red, 255; green, 255; blue, 255 }  ,fill opacity=1 ] (634.5,194.5) .. controls (634.5,191.19) and (637.19,188.5) .. (640.5,188.5) .. controls (643.81,188.5) and (646.5,191.19) .. (646.5,194.5) .. controls (646.5,197.81) and (643.81,200.5) .. (640.5,200.5) .. controls (637.19,200.5) and (634.5,197.81) .. (634.5,194.5) -- cycle ;
			\draw  [fill={rgb, 255:red, 255; green, 255; blue, 255 }  ,fill opacity=1 ] (634.5,226.5) .. controls (634.5,223.19) and (637.19,220.5) .. (640.5,220.5) .. controls (643.81,220.5) and (646.5,223.19) .. (646.5,226.5) .. controls (646.5,229.81) and (643.81,232.5) .. (640.5,232.5) .. controls (637.19,232.5) and (634.5,229.81) .. (634.5,226.5) -- cycle ;
			\draw    (542,255) -- (574.5,287.5) ;
			\draw    (542,287) -- (574.5,255.5) ;
			\draw    (542,255) -- (542,287) ;
			\draw    (542,287) -- (574.5,287.5) ;
			\draw    (574.5,255) -- (574.5,287) ;
			\draw    (542,255) -- (574.5,255.5) ;
			\draw  [fill={rgb, 255:red, 255; green, 255; blue, 255 }  ,fill opacity=1 ] (536,255) .. controls (536,251.69) and (538.69,249) .. (542,249) .. controls (545.31,249) and (548,251.69) .. (548,255) .. controls (548,258.31) and (545.31,261) .. (542,261) .. controls (538.69,261) and (536,258.31) .. (536,255) -- cycle ;
			\draw  [fill={rgb, 255:red, 255; green, 255; blue, 255 }  ,fill opacity=1 ] (536,287) .. controls (536,283.69) and (538.69,281) .. (542,281) .. controls (545.31,281) and (548,283.69) .. (548,287) .. controls (548,290.31) and (545.31,293) .. (542,293) .. controls (538.69,293) and (536,290.31) .. (536,287) -- cycle ;
			\draw  [fill={rgb, 255:red, 255; green, 255; blue, 255 }  ,fill opacity=1 ] (568.5,255.5) .. controls (568.5,252.19) and (571.19,249.5) .. (574.5,249.5) .. controls (577.81,249.5) and (580.5,252.19) .. (580.5,255.5) .. controls (580.5,258.81) and (577.81,261.5) .. (574.5,261.5) .. controls (571.19,261.5) and (568.5,258.81) .. (568.5,255.5) -- cycle ;
			\draw  [fill={rgb, 255:red, 255; green, 255; blue, 255 }  ,fill opacity=1 ] (568.5,287.5) .. controls (568.5,284.19) and (571.19,281.5) .. (574.5,281.5) .. controls (577.81,281.5) and (580.5,284.19) .. (580.5,287.5) .. controls (580.5,290.81) and (577.81,293.5) .. (574.5,293.5) .. controls (571.19,293.5) and (568.5,290.81) .. (568.5,287.5) -- cycle ;
			\draw  [fill={rgb, 255:red, 255; green, 255; blue, 255 }  ,fill opacity=1 ] (318,145) .. controls (318,141.69) and (320.69,139) .. (324,139) .. controls (327.31,139) and (330,141.69) .. (330,145) .. controls (330,148.31) and (327.31,151) .. (324,151) .. controls (320.69,151) and (318,148.31) .. (318,145) -- cycle ;
			\draw  [fill={rgb, 255:red, 255; green, 255; blue, 255 }  ,fill opacity=1 ] (382.5,209.5) .. controls (382.5,206.19) and (385.19,203.5) .. (388.5,203.5) .. controls (391.81,203.5) and (394.5,206.19) .. (394.5,209.5) .. controls (394.5,212.81) and (391.81,215.5) .. (388.5,215.5) .. controls (385.19,215.5) and (382.5,212.81) .. (382.5,209.5) -- cycle ;
			\draw  [fill={rgb, 255:red, 255; green, 255; blue, 255 }  ,fill opacity=1 ] (253.5,209) .. controls (253.5,205.69) and (256.19,203) .. (259.5,203) .. controls (262.81,203) and (265.5,205.69) .. (265.5,209) .. controls (265.5,212.31) and (262.81,215) .. (259.5,215) .. controls (256.19,215) and (253.5,212.31) .. (253.5,209) -- cycle ;
			\draw  [fill={rgb, 255:red, 255; green, 255; blue, 255 }  ,fill opacity=1 ] (318,273) .. controls (318,269.69) and (320.69,267) .. (324,267) .. controls (327.31,267) and (330,269.69) .. (330,273) .. controls (330,276.31) and (327.31,279) .. (324,279) .. controls (320.69,279) and (318,276.31) .. (318,273) -- cycle ;
			\draw  [color={rgb, 255:red, 208; green, 2; blue, 27 }  ,draw opacity=1 ][dash pattern={on 5.63pt off 4.5pt}][line width=1.5]  (443.5,89) .. controls (507.5,85) and (598.5,96) .. (597.5,140) .. controls (596.5,184) and (539.5,185) .. (484.5,152) .. controls (429.5,119) and (344.5,170) .. (323.5,162) .. controls (302.5,154) and (299.5,138) .. (314.5,125) .. controls (329.5,112) and (379.5,93) .. (443.5,89) -- cycle ;
			\draw    (7,209) -- (71.5,273) ;
			\draw    (71.5,273) -- (136,209.5) ;
			\draw    (7,209) -- (71.5,145) ;
			\draw    (136,209.5) -- (71.5,145) ;
			\draw  [fill={rgb, 255:red, 255; green, 255; blue, 255 }  ,fill opacity=1 ] (65.5,145) .. controls (65.5,141.69) and (68.19,139) .. (71.5,139) .. controls (74.81,139) and (77.5,141.69) .. (77.5,145) .. controls (77.5,148.31) and (74.81,151) .. (71.5,151) .. controls (68.19,151) and (65.5,148.31) .. (65.5,145) -- cycle ;
			\draw  [fill={rgb, 255:red, 255; green, 255; blue, 255 }  ,fill opacity=1 ] (130,209.5) .. controls (130,206.19) and (132.69,203.5) .. (136,203.5) .. controls (139.31,203.5) and (142,206.19) .. (142,209.5) .. controls (142,212.81) and (139.31,215.5) .. (136,215.5) .. controls (132.69,215.5) and (130,212.81) .. (130,209.5) -- cycle ;
			\draw  [fill={rgb, 255:red, 255; green, 255; blue, 255 }  ,fill opacity=1 ] (1,209) .. controls (1,205.69) and (3.69,203) .. (7,203) .. controls (10.31,203) and (13,205.69) .. (13,209) .. controls (13,212.31) and (10.31,215) .. (7,215) .. controls (3.69,215) and (1,212.31) .. (1,209) -- cycle ;
			\draw  [fill={rgb, 255:red, 255; green, 255; blue, 255 }  ,fill opacity=1 ] (65.5,273) .. controls (65.5,269.69) and (68.19,267) .. (71.5,267) .. controls (74.81,267) and (77.5,269.69) .. (77.5,273) .. controls (77.5,276.31) and (74.81,279) .. (71.5,279) .. controls (68.19,279) and (65.5,276.31) .. (65.5,273) -- cycle ;
			
			\draw (83,132.4) node [anchor=north west][inner sep=0.75pt]    {$u$};
			\draw (280,80) node [anchor=north west][inner sep=0.75pt]  [color={rgb, 255:red, 208; green, 2; blue, 27 }  ,opacity=1 ]  {$V_{u}$};
			\draw (60,302.4) node [anchor=north west][inner sep=0.75pt]  [font=\large]  {$G$};
			\draw (384,375.4) node [anchor=north west][inner sep=0.75pt]  [font=\large]  {$G'$};

		\end{tikzpicture}
		
		\caption{The graph $G'$ when $k=4$ and $G$ is a cycle of length 4. The thick double edges between the cliques represent the antimatchings.}
			
              \end{figure}
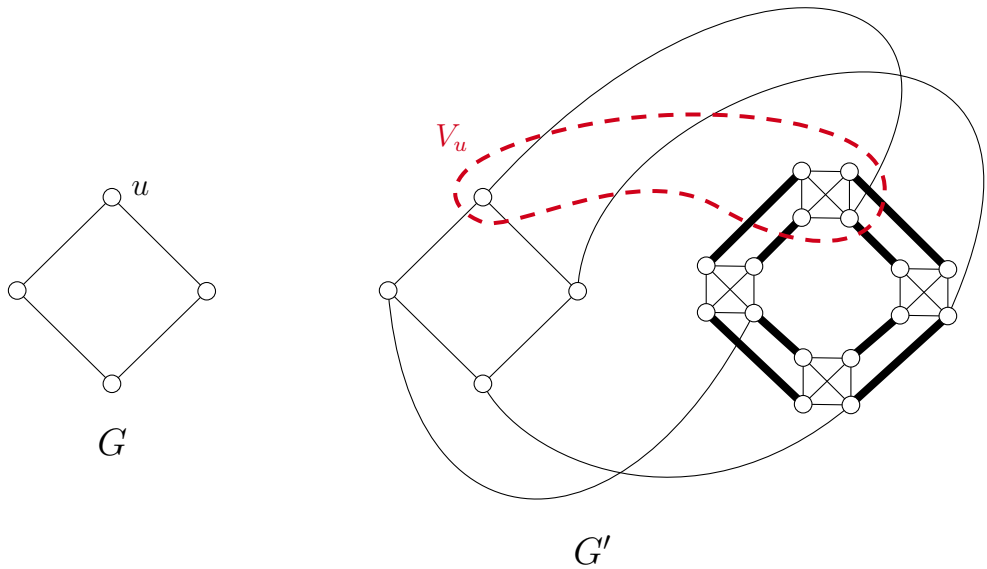

              We have thus proved the following extension of Theorems~\ref{thm:non3col_quadratique} and~\ref{thm:non3coldegmax4_lineaire}.

              	\begin{theorem}
		\label{thm:nonkcol_quadratique}
		For any fixed $k\ge 3$, non-$k$-colorability has local
                complexity at least $\Omega(n^2/\log
			n)$ in general graphs, and at least $\Omega(n/\log
			n)$ in graphs of maximum degree~$5k-4$.
                  \end{theorem}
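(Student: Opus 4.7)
The plan is to formalize the reduction described at length in the preceding section. Starting from an arbitrary graph $G$ on $n$ vertices whose non-3-colorability we want to certify, I would build the auxiliary graph $G'$ as the disjoint union of a copy $G_0$ of $G$ and a second graph $G_1$ whose vertices are $k$-cliques $K_u = \{K_u[1], \ldots, K_u[k]\}$ indexed by $u \in V(G)$, with an antimatching between $K_u$ and $K_v$ for every $uv \in E(G)$, together with cross-edges joining each $u \in V(G_0)$ to $K_u[i]$ for $i \in \{4, \ldots, k\}$. A short computation gives $|V(G')| = (k+1)n$, and if $G$ has maximum degree~$4$ then $G'$ has maximum degree exactly $5k-4$, attained at the vertices $K_u[i]$ with $i \geq 4$ and $d_G(u) = 4$. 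The first routine check is the equivalence that $G$ is 3-colorable if and only if $G'$ is $k$-colorable: the antimatchings force that in any $k$-coloring of $G_1$ the color of $K_u[i]$ depends only on $i$, and the cross-edges then leave exactly three admissible colors for the copy $G_0$.

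Next, given a proof labeling scheme $(\mathcal{P}, \mathcal{A})$ for non-$k$-colorability of local complexity $f$, I would derive a proof labeling scheme for non-3-colorability as follows: the prover applies $\mathcal{P}$ to $G'$ and assigns to each $u \in V(G)$ the identifiers and certificates of the $k+1$ vertices in $V_u := \{u_0\} \cup K_u$, where $u_0$ is the copy of $u$ in $G_0$. The verifier at $u$ then simulates $\mathcal{A}$ at every vertex of $V_u$ and accepts if and only if all these simulations accept. The critical locality check is that for every $v \in V_u$, all $G'$-neighbors of $v$ lie in $V_u \cup \bigcup_{u' \in N_G(u)} V_{u'}$: this is straightforward from the construction, since neighbors of $u_0$ in $G_0$ correspond to $N_G(u)$, the neighbors of $K_u[i]$ within the clique $K_u$ are already in $V_u$, and the antimatching-neighbors of $K_u[i]$ are of the form $K_{u'}[j] \in V_{u'}$ for $u' \in N_G(u)$. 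This is the main point requiring care; completeness and soundness then transfer immediately from $(\mathcal{P}, \mathcal{A})$.

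The per-vertex certificate size of the derived scheme is $(k+1)\bigl(f((k+1)n) + O(\log n)\bigr)$, which is $O(f(n) + \log n)$ for fixed $k$. Combining this bound with Theorem~\ref{thm:non3col_quadratique} applied to arbitrary $G$ yields $f(n) = \Omega(n^2/\log n)$ in general graphs, while combining it with Theorem~\ref{thm:non3coldegmax4_lineaire} applied to $G$ of maximum degree~$4$ yields $f(n) = \Omega(n/\log n)$ for non-$k$-colorability in graphs of maximum degree~$5k-4$, completing the proof.
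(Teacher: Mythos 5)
Your proposal is correct and follows the paper's own argument (the ``second attempt'' in Section~\ref{sec:example}) essentially verbatim: same construction of $G'$ from $G_0$ and $G_1$ with antimatchings and cross-edges, same degree bookkeeping giving $|V(G')|=(k+1)n$ and maximum degree $5k-4$, same choice $V_u=C_u=\{u_0\}\cup K_u$, same locality and simulation argument, and the same final comparison with Theorems~\ref{thm:non3col_quadratique} and~\ref{thm:non3coldegmax4_lineaire}. The only point worth noting is that your equivalence argument (``the color of $K_u[i]$ depends only on $i$'') implicitly uses the standing assumption that $G$ is connected, which the paper makes globally.
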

              
              \begin{remark}
              	\label{rem:k+sqrtk-1}
              	In Section~\ref{sec:variants}, we will prove that
                non-$k$-colorability has local complexity at least
                $\Omega(n/\log n)$ even in graphs of maximum degree~$k
                + \lceil\sqrt{k}\rceil - 1$ (see Theorem~\ref{thm:nonkcol_lineaire}).
              \end{remark}

	In the next section, we identify the key components of the
       proof above and explain how it can be used to define a general
       class of local reductions between graph problems. In particular
       we formalize the necessary and sufficient conditions on the
       sets $V_u$ and $C_u$ allowing the local simulation of the
       verification process, and show
       how the relative sizes of these two sets influence the relation
       between the local complexities of the two problems.
	


	\section{Reductions from $\p$ to $\p'$}\label{sec:reduc}

		Let $\C$, $\C'$ be two graph classes, and $\p$, $\p'$ be two graph properties
		on $\C$, $\C'$ respectively. Let $\alpha, \beta :
                \mathbb{N} \to \mathbb{N}$, with $\beta(n) \geqslant n$ for every integer~$n$.

                \smallskip
                
		We say that there exists a \emph{local reduction} from~$\p$ to~$\p'$ with
		local expansion $\alpha$ and global expansion $\beta$ if there exists a function $f_{\p,\p'}$ which takes as input
		a graph $G \in \C$ with $n$ vertices and unique identifiers in $\{1, \ldots,
		n\}$ \footnote{We emphasize  that the range of identifiers is
                  more restricted than in the definition of local
                  certification, where they were assumed to be in  $\{1, \ldots,
		n^c\}$, for some $c\ge 1$. This is just to make the
                applications of our framework easier (but we will show that we do not
                lose any level of generality).}, and outputs a graph (with unique identifiers) $G'=f_{\p,\p'}(G) \in \C'$ having at most $\beta(n)$ vertices, such that the following properties are satisfied for all~$G$:
		\begin{enumerate}[(R1)]
			\item the identifiers of $G'$ are written on $O(\log n)$ bits, \label[lab]{l1}
			\item $G$ satisfies $\p$ if and only if $G'$ satisfies $\p'$, \label[lab]{l2}
			\item for every vertex $u$ of $G$, there exist two sets $C_u, V_u \subseteq
			V(G')$, such that, for every $u \in V(G)$, the following properties
			are satisfied: \label[lab]{l3}
			\begin{enumerate}
				\item $V_u \subseteq C_u$,\label[lab]{l3a}
				\item $\bigcup_{v \in V(G)} V_v = V(G')$, \label[lab]{l3b}
				\item $|C_u| \leqslant \alpha(n)$, \label[lab]{l3c}
				\item for every $t \in V(G')$, the subgraph of $G$ induced by $\{v
				\in V(G) \; | \; t \in C_v\}$ is connected, \label[lab]{l3d}
				\item for every $t \in V_u$, $N_{G'}[t]$ is included in $\bigcup_{v \in N_G[u]}
				C_v$, \label[lab]{l3e}
				\item $V_u$ and its neighborhood (resp.\ $C_u$) only depend on the identifiers of the vertices in $N_G[u]$. In
				other words: if $H$ is another
                                $n$-vertex graph from $\C$ with unique identifiers in
				$\{1, \ldots, n\}$, and if the subgraph with identifiers formed by $u$ and its adjacent edges is the same in $G$ and
				$H$, then the sets $V_u$ and the subgraphs with identifiers formed by the vertices in $V_u$ and their adjacent edges in
				$G'$ and $H'=f_{\p,\p'}(H)$ are the same (resp.\ the sets $C_u$ are the same in $G'$
				and $H'$). \label[lab]{l3f}
			\end{enumerate}
                      \end{enumerate}

	Let us elaborate on the different parts of the definition of a
        local reduction, in view of the results from the previous
        section. Condition \cref{l1}  says that there is no significant blow-up
        in the size of $G'$ compared to that of $G$ ($G'$ has size
        polynomial in that of $G$). It also states implicitly that
        there is a way to describe the identifiers of $G'$ as a function
        of the identifiers of $G$. Condition \cref{l2} is the underlying principle of
        a reduction. In condition \cref{l3}, the set $V_u$  again stands for the
        subsets of vertices of $G'$ whose verification in $G'$ will be simulated by
        the vertex $u$ of $G$, and $C_u$ corresponds to the set of
        vertices of $G'$ whose certificates will be given to
        $u$. It is reasonable to require that $u$ is given the certificates of
        the vertices it is supposed to simulate (that is $V_u\subseteq
        C_u$, according to \cref{l3a}). It is not strictly necessary, as $u$
        could also find the certificates of some vertices of $V_u$ in the certificates
        assigned to its neighbors in~$G$, but assuming $V_u\subseteq
        C_u$ makes some of our
        arguments simpler. Condition \cref{l3b} simply says that every
        vertex of $G'$ has to be simulated by at least one vertex
        of~$G$. Condition \cref{l3d} is necessary to make sure that if the
        certificate of $t$ in $G'$ is given to several vertices of
        $G$, then all these vertices are given the same
        certificate. Condition \cref{l3e} is simply saying that in order to
        simulate the verification process of every vertex of $V_u$ in $G'$,
        $u$ has all the required certificates in its neighborhood in
        $G$. Finally, condition \cref{l3f} makes sure $u$ can reconstruct
        the local views of all vertices of $V_u$ in $G'$, only based on
        its local view in $G$.

        \smallskip

        It is perhaps useful to comment on the local and global
        expansion functions $\alpha$ and $\beta$. In the first attempt
        in Section \ref{sec:example}, the global expansion was
        reasonable (the size of $G'$ was linear in the size of $G$),
        but the local expansion was maximal, as $C_u$ was the set of
        all vertices. In our second attempt, the global expansion was
        still reasonable (the size of $G'$ was again linear in the size of
        $G$), but the local expansion was bounded by a constant. 
	
	\begin{remark}
		\label{rem:def local reduction}
		From the definition of a local reduction, we can make the following observations:
		\begin{itemize}
			\item Every local reduction from $\p$ to $\p'$
                          with local expansion $\alpha$ has global
                          expansion at most~$n \alpha(n)$. This
                          follows from properties \cref{l3a}, \cref{l3b} and \cref{l3c}.
			In particular, for every $\delta > 0$, every local reduction with local expansion $O(n^\delta)$ has global expansion $O(n^{\delta+1})$.
			\item The definition of local reduction is symmetric between properties and their complement. In other words: there exists a local reduction
			from $\p$ to $\p'$ if and only if there exists a local
			reduction from $\overline{\p}$ to $\overline{\p'}$, the local and global expansions are the same, and we have $f_{\p,\p'} = f_{\np,\overline{\p'}}$.
		\end{itemize}
	\end{remark}
	
	We now prove that, as expected, whenever we have a local reduction from a
        problem $\p$ to a problem $\p'$, any local certification
        scheme for $\p'$ can be turned into a local certification
        scheme for $\p$. This implies that lower bounds on the local
        complexity of  $\p$ can
        be used to prove lower bounds on the local complexity of
        $\p'$, assuming the local and global expansion factors in the
        local reduction are not
        too large.
	
	\begin{theorem}
		\label{thm:main theorem local reduction}
		Assume that:
		\begin{itemize}
			\item there exists a certification scheme for $\p'$ with certificates of size
			$s(n)$, where $s : \mathbb{N} \to \mathbb{N}$ is an increasing function satisfying $s(n) = \Omega(\log n)$, and
			\item there exists a local reduction from $\p$ to $\p'$ with local expansion
			$\alpha$ and global expansion~$\beta$.
		\end{itemize}
		Then there exists a certification scheme for $\p$ with certificates of size
		$O\big(\alpha(n) \cdot s(\beta(n))\big)$.
	\end{theorem}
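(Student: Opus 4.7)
The plan is to build a proof labeling scheme for $\p$ on an $n$-vertex graph $G \in \C$ by simulating locally the given scheme for $\p'$ on the reduced graph $G' = f_{\p,\p'}(G) \in \C'$, which has at most $\beta(n)$ vertices. Let $P'$ be the prover assignment for $G'$ in the scheme of size $s$ for $\p'$. For each vertex $u \in V(G)$, the prover for $\p$ writes on $u$ the list of pairs $(\mathrm{id}(t), P'(t))_{t \in C_u}$. By \cref{l3c}, $|C_u| \leq \alpha(n)$; by \cref{l1} each identifier occupies $O(\log n)$ bits; and $P'(t)$ has size at most $s(\beta(n))$. Since $s$ is $\Omega(\log n)$ and $\beta(n) \geq n$, we have $s(\beta(n)) \geq \log \beta(n) \geq \log n$, so the total certificate size fits in $O(\alpha(n) \cdot s(\beta(n)))$ bits, as advertised.

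Next I would describe the local verifier. Using only its own identifier and those of $N_G[u]$, vertex $u$ can, by \cref{l3f}, reconstruct $C_u$, $V_u$, and the subgraph of $G'$ (with identifiers) formed by the vertices of $V_u$ together with their incident edges, so in particular the set $N_{G'}[V_u]$. Vertex $u$ then performs three checks on its local view: (i) \emph{well-formedness}, that the identifiers listed in its own certificate are exactly those of the vertices of $C_u$; (ii) \emph{consistency}, that for every neighbor $v \in N_G(u)$ and every identifier appearing in the lists stored at both $u$ and $v$, the attached certificates agree; and (iii) \emph{simulation}, that for every $t \in V_u$ the local verifier of the scheme for $\p'$ accepts at $t$ when fed $\mathrm{id}(t)$ and the collected pairs $(\mathrm{id}(w), P'(w))$ for $w \in N_{G'}[t]$. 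The certificates needed for (iii) are indeed available because $N_{G'}[t] \subseteq \bigcup_{v \in N_G[u]} C_v$ by \cref{l3e}, so each such $w$ is covered by some $v \in N_G[u]$ whose list contains it.

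For completeness, if $G \models \p$ then $G' \models \p'$ by \cref{l2}, and the assignment above inherits well-formedness, consistency, and simulation directly from $P'$. For soundness, suppose $G \not\models \p$, hence $G' \not\models \p'$, and that some prover assignment on $G$ is accepted everywhere. Well-formedness forces, for every $u$, the identifiers stored on $u$ to be exactly those of $C_u$; consistency combined with \cref{l3d}, which states that $\{v : t \in C_v\}$ induces a connected subgraph of $G$, then propagates agreement along this subgraph, so all copies of the certificate attached to any given $t$ coincide. By \cref{l3b}, every $t \in V(G')$ lies in some $V_u \subseteq C_u$, hence the common value defines a global proof $\widetilde{P'}$ on $G'$. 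The simulation check at any $u$ with $t \in V_u$ then amounts to running the $\p'$-verifier at $t$ with $\widetilde{P'}$, so global acceptance in the new scheme would force global acceptance in the $\p'$-scheme, contradicting soundness there.

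The one delicate point in this argument, and the whole raison d'être of conditions \cref{l3d}--\cref{l3f}, is preventing the prover from storing inconsistent certificates for the same vertex $t \in V(G')$ across different $C_v$'s: this is exactly what \cref{l3d} is engineered for, as agreement can then be propagated along a connecting path in $G$. A secondary subtlety is that $u$ cannot by itself recompute $C_v$ for a neighbor $v$ (this would require identifiers of $N_G[v]$ that $u$ does not see), but well-formedness, checked at $v$, guarantees that the list of identifiers in $v$'s certificate is precisely that of $C_v$, so $u$ can safely compare by matching identifiers that appear in both lists. Everything else is routine bookkeeping.
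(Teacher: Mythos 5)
Your proof is essentially the paper's: store a table of $\p'$-certificates indexed by $C_u$, verify well-formedness, cross-neighbor consistency, and simulate the $\p'$-verifier at each $t\in V_u$, then argue soundness by propagating agreement along the connected set $\{v:t\in C_v\}$ guaranteed by~\cref{l3d}. The size accounting and the two ``delicate points'' you flag are also exactly the ones the paper relies on.

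There is one omission worth noting. You claim that, by~\cref{l3f}, vertex $u$ can reconstruct $C_u$, $V_u$, and the relevant subgraph of $G'$ from ``its own identifier and those of $N_G[u]$''. But the local-reduction map $f_{\p,\p'}$ is only defined on $n$-vertex graphs whose identifiers lie in $\{1,\dots,n\}$, and \cref{l3f} is stated relative to that range; in the certification model the identifiers range over $\{1,\dots,\mathrm{poly}(n)\}$, and a vertex does not a priori know $n$. The paper's proof begins by (a) applying a renaming technique to compress the identifiers into $\{1,\dots,n\}$ and (b) certifying $n$ via a spanning tree, each costing an extra $O(\log n)$ bits, which is then absorbed because $s(\beta(n))=\Omega(\log n)$. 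Without this preprocessing step, $u$ cannot actually apply $f_{\p,\p'}$ and the invocation of~\cref{l3f} is not justified. The rest of your argument goes through unchanged once this is patched.
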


	Before proving Theorem~\ref{thm:main theorem local reduction},
        let us give the following immediate corollary (which is the result that we will use in practice to prove lower bounds).

	\begin{corollary}
		\label{cor:reduction3col}
		
		Let $\p$ be a graph property (on some graph class $\C$) and let $k \geqslant 3$.
		\begin{enumerate}
			\item If there exists a local reduction from $k$-colorability (in general
			graphs) to $\p$ with local expansion
                        $O(n^\delta)$ for some $0 \leqslant \delta <
                        2$, and global expansion $O(n^\gamma)$, then
                        $\np$ has local complexity
			$\Omega\big(n^{(2-\delta)/\gamma}/\log
                          n\big)$. In particular, since $\gamma
                        \leqslant \delta+1$, $\np$ has local complexity $\Omega\big(n^{\frac{3}{\delta+1}-1}/\log n\big)$.
			\item If there exists a local reduction from
                          3-colorability in graphs of maximum degree
                          4 (or from $k$-colorability in graphs of
			maximum degree~$5k-4$) to $\p$ with local expansion
                        $O(n^\delta)$ for some $0 \leqslant \delta <
                        1$, and global expansion $O(n^\gamma)$, then
                        $\np$ has local complexity 
			$\Omega\big(n^{(1-\delta)/\gamma}/\log
                          n\big)$. In particular, since $\gamma
                        \leqslant \delta+1$, $\np$ has local complexity $\Omega\big(n^{\frac{2}{\delta+1}-1}/\log n\big)$.
		\end{enumerate}
		
	\end{corollary}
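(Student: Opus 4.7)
The plan is to argue by contradiction, applying Theorem \ref{thm:main theorem local reduction} contrapositively. The starting observation is the symmetry part of Remark \ref{rem:def local reduction}: a local reduction from $k$-colorability to $\p$ with local expansion $\alpha$ and global expansion $\beta$ is automatically a local reduction from non-$k$-colorability to $\np$ with the same expansions. Thus any certification scheme for $\np$ yields, via Theorem \ref{thm:main theorem local reduction}, a certification scheme for non-$k$-colorability, and the known lower bounds from Theorem \ref{thm:nonkcol_quadratique} (for part~(1)) and Theorem \ref{thm:nonkcol_lineaire} (for part~(2)) transfer directly to~$\np$.

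Concretely, for part~(1) I would suppose towards a contradiction that $\np$ admits a certification scheme of size $s(n) = o\bigl(n^{(2-\delta)/\gamma}/\log n\bigr)$. Since the universal bound $O(n^2)$ is always available, I may replace $s$ by $\max(s,\log n)$ and monotonize, so that the hypothesis of Theorem \ref{thm:main theorem local reduction} ($s$ increasing, $s(n)=\Omega(\log n)$) is satisfied without changing the asymptotics. Theorem \ref{thm:main theorem local reduction} then gives a certification scheme for non-$k$-colorability of size
\[
O\bigl(\alpha(n)\cdot s(\beta(n))\bigr) = O(n^\delta)\cdot o\bigl(n^{2-\delta}/\log n\bigr) = o\bigl(n^2/\log n\bigr),
\]
using that $s(n^\gamma) = o(n^{2-\delta}/\log n)$. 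This contradicts Theorem \ref{thm:nonkcol_quadratique}. The ``In particular'' statement is then the purely algebraic observation that $\gamma \leqslant \delta+1$ (first item of Remark \ref{rem:def local reduction}) gives $(2-\delta)/\gamma \geqslant (2-\delta)/(\delta+1) = 3/(\delta+1)-1$.

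Part~(2) is structurally identical, with Theorem \ref{thm:nonkcol_quadratique} replaced by Theorem \ref{thm:nonkcol_lineaire}: assuming $s(n) = o\bigl(n^{(1-\delta)/\gamma}/\log n\bigr)$, the same computation produces a certification scheme for non-$k$-colorability on graphs of maximum degree~$k+\lceil\sqrt{k}\rceil-1$ of size $o(n/\log n)$, contradicting Theorem \ref{thm:nonkcol_lineaire}; the simplified exponent again follows from $(1-\delta)/(\delta+1) = 2/(\delta+1)-1$. There is essentially no obstacle once Theorem \ref{thm:main theorem local reduction} is in place: the corollary is a bookkeeping exercise on how the local and global expansions compose with the lower-bound functions. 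The only mildly delicate point is ensuring that the hypotheses on $s$ (increasing, $\Omega(\log n)$) can always be arranged, which is addressed by the truncation/monotonization step above.
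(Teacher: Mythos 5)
Your proof is correct and takes essentially the same route as the paper: contradiction via Theorem~\ref{thm:main theorem local reduction}, using the symmetry observation from Remark~\ref{rem:def local reduction} to convert a reduction from $k$-colorability to $\p$ into one from non-$k$-colorability to $\np$, and then plugging in the lower bounds of Theorem~\ref{thm:nonkcol_quadratique} (resp.\ Theorem~\ref{thm:nonkcol_lineaire}). Your explicit mention of the monotonization/padding of $s$ to satisfy the hypotheses of Theorem~\ref{thm:main theorem local reduction} is a small but legitimate technical point that the paper elides; otherwise the arguments coincide.
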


	\begin{proof}
		\begin{enumerate}
			\item By contradiction, assume that there exists a certification scheme for
			$\np$ with certificates of size $o\big(n^{(2-\delta)/\gamma}/\log n\big)$.
			By Theorem~\ref{thm:main theorem local reduction}, there exists a
			certification scheme for non-$k$-colorability with certificates of size
			\[o\left(n^\delta \cdot \frac{(n^{\gamma})^{(2-\delta)/\gamma}}{\log n}\right) =
			o\left(\frac{n^2}{\log n}\right),\] which contradicts
			Theorem~\ref{thm:nonkcol_quadratique}
			\item By contradiction, assume that there exists a certification scheme for
			$\np$ with certificates of size $o\big(n^{(1-\delta)/\gamma}/\log n\big)$.
			By Theorem~\ref{thm:main theorem local reduction}, there exists a
			certification scheme for non-3-colorability in
                        graphs of maximum degree 4, or for non-$k$-colorability in graphs of maximum degree~$k + \lceil\sqrt{k}\rceil-1$, with
			certificates of size $o\big(n^\delta \cdot \frac{(n^{\gamma})^{(1-\delta)/\gamma}}{\log n}\big) =
			o\big(\frac{n}{\log n}\big)$. This
                        contradicts Theorem
                        \ref{thm:non3coldegmax4_lineaire} and
                        Theorem~\ref{thm:nonkcol_quadratique}, respectively.
		\end{enumerate}
	\end{proof}

	\begin{proof}[Proof of Theorem~\ref{thm:main theorem local reduction}.]
	Let $G \in \C$ be an $n$-vertex graph in which the vertices are given
		unique identifiers on $O(\log n)$ bits. First, using a
                renaming technique from~\cite{BousquetEFZ24}, we can
                assume that the unique identifiers
                $(\mathrm{id}(u))_{u\in V(G)}$ of~$G$ are in $\{1,
                \ldots, n\}$, at a cost of $O(\log n)$ additional bits
                per certificates. Since $s(n) = \Omega(\log n)$ and $\beta(n) \geqslant n$, we
                have $\log n=O\big(\alpha(n) \cdot s(\beta(n))\big)$, so the renaming
                does not affect the final bound. We also assume that
                every vertex knows~$n$ (because it can also be
                certified at a cost of $O(\log n)$ bits in the
                certificates using a spanning tree,
                see~\cite{Feuilloley21} for more details).

                \medskip
                
\noindent {\bf Certification.}	 The certificate given by the prover
to each vertex of~$G$ consists in a table. For every $u \in V(G)$,
this table is denoted by $\tab(u)$. To construct this table, the prover first computes the graph $G'=f_{\p, \p'}(G)$, and for every vertex $u \in V(G)$, it determines the sets~$C_u$ and~$V_u$. The entries of $\tab(u)$ are indexed by the identifiers
		in~$G'$ of the vertices in~$C_u$. For every $t \in C_u$, we will
		denote by $\tab(u)[t]$ the entry of $\tab(u)$ indexed by the identifier of~$t$.
		The prover writes in $\tab(u)[t]$ the certificate that $t$ would get in the
		certification scheme for $\p'$ in~$G'$.

		Let us show that these certificates have size $O\big(\alpha(n) \cdot  s(\beta(n))\big)$.
		By definition of global expansion, the graph~$G'$ has at most $\beta(n)$ vertices. Thus, each
		certificate (corresponding to the proof labeling scheme for~$\p'$
		in~$G'$) written in $\tab(u)$ has size at most $s(\beta(n))$. Since
		this table has at most $\alpha(n)$ entries (one for each vertex in $C_u$), its
		size is $O\big(\alpha(n) \cdot s(\beta(n))\big)$.
		
		\medskip
		
		\noindent {\bf Verification.}
		The verification procedure of each vertex $u\in V(G)$ consists in the following steps.
		At each step, if the verification fails, the procedure
                stops and $u$ rejects
                the instance.
		
		\begin{enumerate}[(1)]
			\item First, by condition \cref{l3f}, $u$ can compute the identifiers of all the
			vertices in~$C_u$. It checks that every vertex in $C_u$ has exactly one entry in
			$\tab(u)$, and conversely that every entry in $\tab(u)$ is indexed with the
			identifier of a vertex in $C_u$.
			
			\item Then, $u$ checks that, for every $t \in C_u$ and every $v \in N_G[u]$ having an entry indexed by $t$ in $\tab(v)$, we have $\tab(u)[t] = \tab(v)[t]$.

			\item Finally, $u$ does the following. By condition \cref{l3f}, $u$ can compute
			the set $V_u$ and the subgraph (with identifiers) of $G'$
			formed by $V_u$ and the adjacent edges. In particular, for every $t_0 \in V_u$, $u$ can compute the
			subgraph  of $G'$ formed by $t_0$ and its incident edges.
			
			Let $t_0 \in V_u$. Since $V_u \subseteq C_u$, $u$ can determine the certificate
			that $t_0$ would have in the certification scheme for $\p'$ in $G'$,
			because it is written in $\tab(u)[t_0]$.
			Moreover, by condition \cref{l3e}, every neighbor $t$ of $t_0$ in $G'$
			is in $\bigcup_{v \in N_G[u]} C_v$, so $u$ can also determine the certificate of
			$t$ because it is written in $\tab(v)[t]$ for some $v \in N_G[u]$.
			
			Thus, for all $t_0 \in V_u$, $u$ can recover the certificates of all the
			vertices in $N_{G'}[t_0]$, so $u$ can compute
                        the local view of $t_0$ in $G'$,
			and can apply the verification procedure of
                        the proof labeling scheme for $\p'$ in $G'$. If this fails for some $t_0 \in
			V_{u}$, then $u$ rejects.
			
			\item If $u$ did not reject at any previous
                          step, $u$ accepts the instance.
		\end{enumerate}

		Let us prove that this proof labeling scheme for $\p$ is correct, by proving its
		completeness and its soundness.

                \medskip
		
		\noindent{\bf Completeness.}
		Assume that $G$ satisfies $\p$ (and thus $G'$
                satisfies $\p'$), and let us prove that with the assignment of
		the certificates described above, every vertex accepts
                the instance.
		First, no vertex can reject at steps~(i) or~(ii), because the prover correctly wrote in $\tab(u)$ the list of the certificates that all the vertices in $C_u$ would
		receive in the proof labeling scheme for $\p'$ in $G'$.
		Let us look at step~(iii). Since 
		$G'$ satisfies $\p'$, all the vertices of $G'$ accept
                the instance in the labeling scheme for $\p'$ in $G'$.
		Hence, all the vertices in $V_u$ accept, and so $u$
                cannot reject the instance at
		step~(iii).
		Finally, since it did not reject before, $u$ accepts
                the instance at step~(iv).
		
		\medskip
		
		\noindent{\bf Soundness.}
		Assume now that all the vertices of $G$ accept the
                instance given some assignment of 
		certificates, and let us show that in this case $G$
                satisfies~$\p$. By property~2 of the definition of a
                local reduction, it is enough to prove that $G'$ satisfies~$\p'$.
		By property \cref{l3f}, every vertex $u$ can indeed compute the sets $C_u$,
		$V_u$, and the subgraph of $G'$ formed by the vertices in $V_u$ and their incident edges.
		Since $u$ did not reject the instance at step~(i), $\tab(u)$ has exactly one entry for each
		vertex $t \in C_u$.
		
		To prove that $G'$ satisfies $\p'$, it is sufficient to show that
		there exists an assignment of certificates to the vertices of
		$G'$ such that the verification procedure for $\p'$ accepts at every
		vertex. Let $t \in V(G')$. By property \cref{l3a} and \cref{l3b}, there exists
		$u \in V(G)$ such that $t \in C_u$. Let us show that, for every other vertex $w
		\in V(G)$ such that $t \in C_w$, we have $\tab(u)[t] = \tab(w)[t]$.
		By property \cref{l3d}, the subgraph of $G$ induced by the set $\{v \in V(G) \; | \;
		t \in C_v\}$ is connected. So there exists a path $u_1, \ldots, u_k$ in $G$ such
		that $u_1=u$, $u_k=w$, and $t \in C_{u_i}$ for every $i \in \{1, \ldots, k\}$.
		For every $i \in \{1, \ldots, k-1\}$, since $u_i$ did not reject at step~(ii)
		of the verification procedure, we have $\tab(u_i)[t]=\tab(u_{i+1})[t]$. This
		implies $\tab(u)[t]=\tab(w)[t]$.
		
		Hence, for every $t \in V(G')$, we can define $c(t)$ as the
		certificate of $t$ written in $\tab(u)[t]$ for all the vertices $u \in V(G)$ such
		that $t \in C_u$. By the previous discussion, $c(t)$
                exists and is well defined.
		Let us show that with the certificate assignment $c$, all the vertices in
		$G'$ accept the instance.
		Let $t \in V(G')$, and let $u \in V(G)$ such that $t \in V_u$ (such a
		vertex $u$ exists by property \cref{l3b}). At step~(iii), $u$ computes the view of $t$
		in $G'$, with the certificate assignment $c$.
		Then, $u$ simulates the verification procedure of~$t$ (in the certification
		scheme for $\p'$ in $G'$). Since $u$ accepts, then $t$ must accept with the certificates
		given by $c$. Since this is the case for every $t \in V(G')$, this shows
		that $G'$ satisfies $\p'$, and thus $G$ satisfies $\p$, which concludes the proof.
		\qedhere
	\end{proof}

        We now explain how our local reduction framework, namely
        Theorem~\ref{thm:main theorem local reduction} and Corollary~\ref{cor:reduction3col}, can be
        combined with the lower bounds for non-3-colorability from Section~\ref{sec:3col} to provide optimal (or close to optimal) lower
        bounds on the local complexity of a number of
        \textsf{coNP}-hard problems.

        All the problems that we study later in	this section are
        mentioned in the book by Garey and Johnson~\cite{GareyJS76},
        but the reference for each of them is either an unpublished
        result or a personal communication. Thus, we present here some
        reductions that we found, which are maybe the same as the ones
        cited by Garey and Johnson~\cite{GareyJS76}, but to our knowledge, these reductions do not appear in any prior publication. Note that, for the problem of determining whether there exists a cubic subgraph, some reductions were already published (see e.g.~\cite{Stewart94,Stewart97}), but to our knowledge they are all different from the one  we present here (we do not use these previous reductions here, because they are not sufficiently local).

        \subsection{Domatic number}

        A \emph{dominating set} in a graph $G$ is a subset $S$ of
        vertices such that every vertex of $G$ is in $S$ or is
        adjacent to a vertex of $S$.
        The \emph{domatic number} of a graph $G$ is the maximum
        integer $k$
        such that the vertex set of $G$ can be partitioned into $k$
        dominating sets.

	\begin{theorem}
		\label{thm:domatic}
		Let $k \geqslant 2$. The property of having domatic number at most~$k$ has
                local complexity $\Omega(n/\log n)$, even in graphs of maximum degree
		at most~$5k^2-4k$.
	\end{theorem}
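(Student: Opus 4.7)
The plan is to construct a local reduction from $3$-colorability in graphs of maximum degree~$4$ (whose complement has local complexity $\Omega(n/\log n)$ by Theorem~\ref{thm:non3coldegmax4_lineaire}) to the property ``domatic number at least~$3$'' in graphs of maximum degree at most~$8$. By Remark~\ref{rem:def local reduction} and Corollary~\ref{cor:reduction3col}(2), a reduction with constant local expansion and linear global expansion then yields the desired $\Omega(n/\log n)$ lower bound on the complement property ``domatic number at most~$2$''.

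\medskip

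Given $G$ of maximum degree~$4$, we construct $G'$ as follows. For each vertex $v \in V(G)$, introduce three new vertices $v^*, v^a, v^b$ forming a triangle $T_v$. For each edge $uv \in E(G)$, introduce a single new vertex $w_{uv}$ adjacent to exactly $u^*$ and $v^*$. The resulting graph has at most $5n$ vertices and maximum degree at most $2+4=6\leq 8$. Correctness hinges on two observations: in any partition of $V(G')$ into three dominating sets, the closed neighborhood $\{v^*,v^a,v^b\}$ of $v^a$ has size exactly~$3$ and must meet all three parts, forcing these three vertices to lie in three distinct parts; similarly, $\{u^*,v^*,w_{uv}\}$ has size exactly~$3$, forcing $u^*$ and $v^*$ to lie in distinct parts. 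Hence $c(v) := $ the index of the part containing $v^*$ defines a proper $3$-coloring of $G$. Conversely, given a proper $3$-coloring $c$ of $G$, assign $v^*$ to part $c(v)$, assign $v^a$ and $v^b$ arbitrarily to the other two parts, and assign $w_{uv}$ to the unique part distinct from both $c(u)$ and $c(v)$; every vertex of $G'$ then sees all three parts in its closed neighborhood.

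\medskip

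For the locality conditions, set $V_u := T_u \cup \{w_{uv} : uv \in E(G)\}$ and $C_u := V_u \cup \{v^* : uv \in E(G)\}$. Both have constant size, so the local expansion is $\alpha = O(1)$ and, since $|V(G')| \leq 5n$, the global expansion is $\beta = O(n)$. The verification of conditions \cref{l1}--\cref{l3f} is routine: the set of vertices of $G$ whose $C$-set contains a given $t \in V(G')$ is $\{u\}$ when $t \in \{u^a, u^b\}$, $\{u,v\}$ when $t = w_{uv}$, and $N_G[u]$ when $t = u^*$, each of which is connected in $G$ (condition \cref{l3d}); condition \cref{l3e} holds because for every $t \in V_u$ we have $N_{G'}[t] \subseteq C_u$ by construction. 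Applying Corollary~\ref{cor:reduction3col}(2) with $\delta = 0$ and $\gamma = 1$ yields the claimed $\Omega(n/\log n)$ bound. The only real design choice is the edge gadget: a single degree-$2$ vertex $w_{uv}$ suffices because its closed neighborhood has size exactly~$3$, which simultaneously enforces the proper-coloring constraint and keeps the gadget fully visible from both endpoints, so that no communication beyond $N_G[u]$ is ever needed.
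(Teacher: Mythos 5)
Your proposal is correct, and the underlying mechanism (a degree-two edge gadget $w_{uv}$ whose closed neighborhood has size exactly three, forcing its three vertices into three distinct parts) is the same as the paper's. However, your vertex-side construction differs: the paper keeps $G$ as a subgraph of $G'$ and \emph{only} adds the edge vertices $w_{uv}$; domination of an original vertex $u$ then follows automatically because $u$ lies in some triangle $\{u,v,w_{uv}\}$ that spans all three parts. You instead discard the original edges and replace each vertex $v$ by a triangle $T_v=\{v^*,v^a,v^b\}$, attaching the edge gadgets only to the starred vertices. This makes the triangle gadget \emph{necessary} in your version (without it, $v^*$ would have no guaranteed source of domination from all three parts, since you removed the original edges), whereas the paper does not need it. Your version uses a few more vertices ($3n+m$ rather than $n+m$) but achieves maximum degree at most~$6$ rather than~$8$, so it proves a marginally stronger statement than claimed. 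Both constructions have constant local expansion, linear global expansion, and satisfy \cref{l1}--\cref{l3f}, so either yields the $\Omega(n/\log n)$ bound via Corollary~\ref{cor:reduction3col}(2). One small point worth making explicit in your write-up: the step where you conclude that the parts of the domatic partition are \emph{exactly} three in number (not more) relies on the same size-$3$-closed-neighborhood observation; you use this implicitly, while the paper spells it out.
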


	\begin{proof}
		Let $\p$ be the property of having domatic number at least~$k+1$ in graphs of maximum degree $5k^2-4k$. We will show that
		there exists a local reduction from $(k+1)$-colorability in graphs of maximum degree~$5k+1$
		to $\p$ with local expansion~$O(1)$, and the result will follow from
		Corollary~\ref{cor:reduction3col}. Let $G$ be a graph of maximum degree~at
		most~$5k+1$, with $n$ vertices having unique identifiers in $\{1, \ldots, n\}$.
		
		The construction of the graph $G'=f_{\text{$(k+1)$-col}, \p}(G)$ is the following. For each edge $uv\in
		E(G)$, we add a clique~$K_{uv}$ of size $k-1$, complete to both $u$
                and $v$ (see
		Figure~\ref{fig:domatic_edge} for an example).
		
		\begin{figure}[h]
			\centering
			\begin{tikzpicture}[x=0.75pt,y=0.75pt,yscale=-1,xscale=1]
				
				\draw    (350.5,122.5) -- (310.5,87) ;
				\draw    (350.5,122.5) -- (297.5,94) ;
				\draw    (350.5,122.5) -- (326.5,89) ;
				\draw    (335.5,96) -- (283.5,122.5) ;
				\draw    (326.5,83) -- (283.5,122.5) ;
				\draw    (302.5,89) -- (283.5,122.5) ;
				\draw    (68.5,122.5) -- (135.5,122.5) ;
				\draw  [fill={rgb, 255:red, 255; green, 255; blue, 255 }  ,fill opacity=1 ] (64,122.5) .. controls (64,120.01) and (66.01,118) .. (68.5,118) .. controls (70.99,118) and (73,120.01) .. (73,122.5) .. controls (73,124.99) and (70.99,127) .. (68.5,127) .. controls (66.01,127) and (64,124.99) .. (64,122.5) -- cycle ;
				\draw  [fill={rgb, 255:red, 255; green, 255; blue, 255 }  ,fill opacity=1 ] (131,122.5) .. controls (131,120.01) and (133.01,118) .. (135.5,118) .. controls (137.99,118) and (140,120.01) .. (140,122.5) .. controls (140,124.99) and (137.99,127) .. (135.5,127) .. controls (133.01,127) and (131,124.99) .. (131,122.5) -- cycle ;
				\draw    (283.5,122.5) -- (350.5,122.5) ;
				\draw  [fill={rgb, 255:red, 255; green, 255; blue, 255 }  ,fill opacity=1 ] (279,122.5) .. controls (279,120.01) and (281.01,118) .. (283.5,118) .. controls (285.99,118) and (288,120.01) .. (288,122.5) .. controls (288,124.99) and (285.99,127) .. (283.5,127) .. controls (281.01,127) and (279,124.99) .. (279,122.5) -- cycle ;
				\draw  [fill={rgb, 255:red, 255; green, 255; blue, 255 }  ,fill opacity=1 ] (346,122.5) .. controls (346,120.01) and (348.01,118) .. (350.5,118) .. controls (352.99,118) and (355,120.01) .. (355,122.5) .. controls (355,124.99) and (352.99,127) .. (350.5,127) .. controls (348.01,127) and (346,124.99) .. (346,122.5) -- cycle ;
				\draw [line width=1.5]    (178.5,104) -- (235.5,104) ;
				\draw [shift={(239.5,104)}, rotate = 180] [fill={rgb, 255:red, 0; green, 0; blue, 0 }  ][line width=0.08]  [draw opacity=0] (11.61,-5.58) -- (0,0) -- (11.61,5.58) -- cycle    ;
				\draw  [fill={rgb, 255:red, 255; green, 255; blue, 255 }  ,fill opacity=1 ] (292.5,88) .. controls (292.5,80.27) and (303.47,74) .. (317,74) .. controls (330.53,74) and (341.5,80.27) .. (341.5,88) .. controls (341.5,95.73) and (330.53,102) .. (317,102) .. controls (303.47,102) and (292.5,95.73) .. (292.5,88) -- cycle ;
				
				\draw (72.5,127.9) node [anchor=north west][inner sep=0.75pt]    {$u$};
				\draw (139.5,127.9) node [anchor=north west][inner sep=0.75pt]    {$v$};
				\draw (287.5,127.9) node [anchor=north west][inner sep=0.75pt]    {$u$};
				\draw (354.5,127.9) node [anchor=north west][inner sep=0.75pt]    {$v$};
				\draw (303,80.4) node [anchor=north west][inner sep=0.75pt]    {$K_{uv}$};

			\end{tikzpicture}
			
			\caption{The reduction from $(k+1)$-colorability to domatic number at least~$k+1$. The clique $K_{uv}$ contains $k-1$ vertices.}
			\label{fig:domatic_edge}
		\end{figure}
		
		The identifier of each vertex in $G'$ indicates if it is a vertex
		of $G$ (and which one), or a vertex in a clique $K_{uv}$ added for an edge $uv$ of $G$ (together with the identifiers of $u$ and~$v$, and the numbering of this vertex in $K_{uv}$).
		This information can be encoded on $O(\log n)$ bits.
		
		For each vertex $u$ of $G$, we define $C_u:=V_u$ as
                the set of vertices of $G'$ containing $u$ and
		all the vertices of $K_{uv}$ corresponding to an edge $uv$ in $G$ (see Figure~\ref{fig:domatic_graph} for an example with $k=2$).
		
		\begin{figure}[h]
			\centering
			\begin{tikzpicture}[x=0.65pt,y=0.65pt,yscale=-1,xscale=1]
				
				\draw    (332,58.5) -- (393,79.5) ;
				\draw    (393,79.5) -- (414,140.5) ;
				\draw    (373,24.5) -- (414,58.5) ;
				\draw    (373,24.5) -- (332,58.5) ;
				\draw    (297,99.5) -- (332,58.5) ;
				\draw    (297,99.5) -- (332,140.5) ;
				\draw    (414,58.5) -- (449,98.5) ;
				\draw    (414,140.5) -- (449,98.5) ;
				\draw    (332,140.5) -- (373,173.5) ;
				\draw    (373,173.5) -- (414,140.5) ;
				\draw    (58.5,57.5) -- (140.5,57.5) ;
				\draw    (58.5,57.5) -- (58.5,139.5) ;
				\draw    (140.5,57.5) -- (140.5,139.5) ;
				\draw    (58.5,139.5) -- (140.5,139.5) ;
				\draw    (58.5,57.5) -- (140.5,139.5) ;
				\draw  [fill={rgb, 255:red, 255; green, 255; blue, 255 }  ,fill opacity=1 ]
				(53,57.5) .. controls (53,54.46) and (55.46,52) .. (58.5,52) .. controls
				(61.54,52) and (64,54.46) .. (64,57.5) .. controls (64,60.54) and (61.54,63) ..
				(58.5,63) .. controls (55.46,63) and (53,60.54) .. (53,57.5) -- cycle ;
				\draw  [fill={rgb, 255:red, 255; green, 255; blue, 255 }  ,fill opacity=1 ]
				(135,57.5) .. controls (135,54.46) and (137.46,52) .. (140.5,52) .. controls
				(143.54,52) and (146,54.46) .. (146,57.5) .. controls (146,60.54) and
				(143.54,63) .. (140.5,63) .. controls (137.46,63) and (135,60.54) .. (135,57.5)
				-- cycle ;
				\draw  [fill={rgb, 255:red, 255; green, 255; blue, 255 }  ,fill opacity=1 ]
				(53,139.5) .. controls (53,136.46) and (55.46,134) .. (58.5,134) .. controls
				(61.54,134) and (64,136.46) .. (64,139.5) .. controls (64,142.54) and
				(61.54,145) .. (58.5,145) .. controls (55.46,145) and (53,142.54) .. (53,139.5)
				-- cycle ;
				\draw  [fill={rgb, 255:red, 255; green, 255; blue, 255 }  ,fill opacity=1 ,
				line width=1.5] (135,139.5) .. controls (135,136.46) and (137.46,134) ..
				(140.5,134) .. controls (143.54,134) and (146,136.46) .. (146,139.5) .. controls
				(146,142.54) and (143.54,145) .. (140.5,145) .. controls (137.46,145) and
				(135,142.54) .. (135,139.5) -- cycle ;
				\draw    (332,58.5) -- (414,58.5) ;
				\draw    (332,58.5) -- (332,140.5) ;
				\draw    (414,58.5) -- (414,140.5) ;
				\draw    (332,140.5) -- (414,140.5) ;
				\draw    (332,58.5) -- (414,140.5) ;
				\draw  [fill={rgb, 255:red, 255; green, 255; blue, 255 }  ,fill opacity=1 ]
				(326.5,58.5) .. controls (326.5,55.46) and (328.96,53) .. (332,53) .. controls
				(335.04,53) and (337.5,55.46) .. (337.5,58.5) .. controls (337.5,61.54) and
				(335.04,64) .. (332,64) .. controls (328.96,64) and (326.5,61.54) ..
				(326.5,58.5) -- cycle ;
				\draw  [fill={rgb, 255:red, 255; green, 255; blue, 255 }  ,fill opacity=1 ]
				(408.5,58.5) .. controls (408.5,55.46) and (410.96,53) .. (414,53) .. controls
				(417.04,53) and (419.5,55.46) .. (419.5,58.5) .. controls (419.5,61.54) and
				(417.04,64) .. (414,64) .. controls (410.96,64) and (408.5,61.54) ..
				(408.5,58.5) -- cycle ;
				\draw  [fill={rgb, 255:red, 255; green, 255; blue, 255 }  ,fill opacity=1 ]
				(326.5,140.5) .. controls (326.5,137.46) and (328.96,135) .. (332,135) ..
				controls (335.04,135) and (337.5,137.46) .. (337.5,140.5) .. controls
				(337.5,143.54) and (335.04,146) .. (332,146) .. controls (328.96,146) and
				(326.5,143.54) .. (326.5,140.5) -- cycle ;
				\draw  [fill={rgb, 255:red, 255; green, 255; blue, 255 }  ,fill opacity=1 ]
				(408.5,140.5) .. controls (408.5,137.46) and (410.96,135) .. (414,135) ..
				controls (417.04,135) and (419.5,137.46) .. (419.5,140.5) .. controls
				(419.5,143.54) and (417.04,146) .. (414,146) .. controls (410.96,146) and
				(408.5,143.54) .. (408.5,140.5) -- cycle ;
				\draw  [fill={rgb, 255:red, 255; green, 255; blue, 255 }  ,fill opacity=1 ]
				(367.5,24.5) .. controls (367.5,21.46) and (369.96,19) .. (373,19) .. controls
				(376.04,19) and (378.5,21.46) .. (378.5,24.5) .. controls (378.5,27.54) and
				(376.04,30) .. (373,30) .. controls (369.96,30) and (367.5,27.54) ..
				(367.5,24.5) -- cycle ;
				\draw  [fill={rgb, 255:red, 255; green, 255; blue, 255 }  ,fill opacity=1 ]
				(443.5,98.5) .. controls (443.5,95.46) and (445.96,93) .. (449,93) .. controls
				(452.04,93) and (454.5,95.46) .. (454.5,98.5) .. controls (454.5,101.54) and
				(452.04,104) .. (449,104) .. controls (445.96,104) and (443.5,101.54) ..
				(443.5,98.5) -- cycle ;
				\draw  [fill={rgb, 255:red, 255; green, 255; blue, 255 }  ,fill opacity=1 ]
				(291.5,99.5) .. controls (291.5,96.46) and (293.96,94) .. (297,94) .. controls
				(300.04,94) and (302.5,96.46) .. (302.5,99.5) .. controls (302.5,102.54) and
				(300.04,105) .. (297,105) .. controls (293.96,105) and (291.5,102.54) ..
				(291.5,99.5) -- cycle ;
				\draw  [fill={rgb, 255:red, 255; green, 255; blue, 255 }  ,fill opacity=1 ]
				(367.5,173.5) .. controls (367.5,170.46) and (369.96,168) .. (373,168) ..
				controls (376.04,168) and (378.5,170.46) .. (378.5,173.5) .. controls
				(378.5,176.54) and (376.04,179) .. (373,179) .. controls (369.96,179) and
				(367.5,176.54) .. (367.5,173.5) -- cycle ;
				\draw  [fill={rgb, 255:red, 255; green, 255; blue, 255 }  ,fill opacity=1 ]
				(387.5,79.5) .. controls (387.5,76.46) and (389.96,74) .. (393,74) .. controls
				(396.04,74) and (398.5,76.46) .. (398.5,79.5) .. controls (398.5,82.54) and
				(396.04,85) .. (393,85) .. controls (389.96,85) and (387.5,82.54) ..
				(387.5,79.5) -- cycle ;
				\draw  [dash pattern={on 5.63pt off 4.5pt}][line width=1.5]  (394.5,135) .. controls (400.5,124) and (363.5,72) .. (389.5,63) .. controls (415.5,54) and (398.5,102) .. (413.5,113) .. controls (428.5,124) and (435.5,72) .. (456.5,83) .. controls (477.5,94) and (442.5,137.25) .. (425.5,150) .. controls (408.5,162.75) and (367.5,202) .. (357.5,185) .. controls (347.5,168) and (388.5,146) .. (394.5,135) -- cycle ;

				\draw (142.5,148.4) node [anchor=north west][inner sep=0.75pt]    {$u$};
				\draw (450,140) node [anchor=north west][inner sep=0.75pt]    {$V_u$};
				\draw (92,162) node [anchor=north west][inner sep=0.75pt]  [font=\large] 
				{$G$};
				\draw (332,195.4) node [anchor=north west][inner sep=0.75pt]  [font=\large] 
				{$G'$};
				
			\end{tikzpicture}
			\caption{Some graphs $G$ and $G'=f_{\text{$(k+1)$-col}, \p}(G)$, with $k=2$. 
			}
			\label{fig:domatic_graph}
		\end{figure}
		
		Let us show that $G$ admits a proper $(k+1)$-coloring if and only if the domatic
		number of $G'$ is at least~$k+1$. Assume that $G$ is $(k+1)$-colorable. Then,
		we define a ($k+1)$-coloring of the vertices of $G'$ in the following
		way: each vertex of $G$ retains its color in $G'$, and
                for each edge $uv$ in $G$, we color arbitrarily the
                clique $K_{uv}$ with the $k-1$ colors distinct from
                that of $u$ and $v$ (each vertex of the clique is
                assigned a distinct color).
		See Figure~\ref{fig:domatic_partition} for an
		example with $k=2$. Note that by construction, each color class is a dominating
                set in $G'$.
                
		Conversely, assume that the vertices of $G'$ can be
                partitioned into $k+1$ dominating sets. Then, we
                obtain a proper $(k+1)$-coloring of~$G$ by considering
                the restriction of
		this partition to the original vertices of~$G$: indeed, let $uv$ be an edge of~$G$,
		and $w$ be a vertex of $G'$ in the $(k-1)$-clique $K_{uv}$. Since
		there are only $k+1$ vertices in $N[w]$ (which are $u$, $v$ and all the vertices of~$K_{uv}$), these
		$k+1$ vertices must be in different parts of the
                partition, since otherwise $w$ would not be dominated
                by one of the $k+1$ dominating sets. In particular, $u$ and $v$ are in different parts, and the corresponding
		$(k+1)$-coloring of $G$ is indeed proper.
		
		\begin{figure}[h]
			\centering
			\begin{tikzpicture}[x=0.65pt,y=0.65pt,yscale=-1,xscale=1]
				
				\draw    (332,58.5) -- (393,79.5) ;
				\draw    (393,79.5) -- (414,140.5) ;
				\draw    (373,24.5) -- (414,58.5) ;
				\draw    (373,24.5) -- (332,58.5) ;
				\draw    (297,99.5) -- (332,58.5) ;
				\draw    (297,99.5) -- (332,140.5) ;
				\draw    (414,58.5) -- (449,98.5) ;
				\draw    (414,140.5) -- (449,98.5) ;
				\draw    (332,140.5) -- (373,173.5) ;
				\draw    (373,173.5) -- (414,140.5) ;
				\draw    (58.5,57.5) -- (140.5,57.5) ;
				\draw    (58.5,57.5) -- (58.5,139.5) ;
				\draw    (140.5,57.5) -- (140.5,139.5) ;
				\draw    (58.5,139.5) -- (140.5,139.5) ;
				\draw    (58.5,57.5) -- (140.5,139.5) ;
				\draw  [fill={rgb, 255:red, 80; green, 227; blue, 194 }  ,fill opacity=1 ]
				(53,57.5) .. controls (53,54.46) and (55.46,52) .. (58.5,52) .. controls
				(61.54,52) and (64,54.46) .. (64,57.5) .. controls (64,60.54) and (61.54,63) ..
				(58.5,63) .. controls (55.46,63) and (53,60.54) .. (53,57.5) -- cycle ;
				\draw  [fill={rgb, 255:red, 245; green, 166; blue, 35 }  ,fill opacity=1 ]
				(135,57.5) .. controls (135,54.46) and (137.46,52) .. (140.5,52) .. controls
				(143.54,52) and (146,54.46) .. (146,57.5) .. controls (146,60.54) and
				(143.54,63) .. (140.5,63) .. controls (137.46,63) and (135,60.54) .. (135,57.5)
				-- cycle ;
				\draw  [fill={rgb, 255:red, 245; green, 166; blue, 35 }  ,fill opacity=1 ]
				(53,139.5) .. controls (53,136.46) and (55.46,134) .. (58.5,134) .. controls
				(61.54,134) and (64,136.46) .. (64,139.5) .. controls (64,142.54) and
				(61.54,145) .. (58.5,145) .. controls (55.46,145) and (53,142.54) .. (53,139.5)
				-- cycle ;
				\draw  [fill={rgb, 255:red, 144; green, 19; blue, 254 }  ,fill opacity=1 ]
				(135,139.5) .. controls (135,136.46) and (137.46,134) .. (140.5,134) .. controls
				(143.54,134) and (146,136.46) .. (146,139.5) .. controls (146,142.54) and
				(143.54,145) .. (140.5,145) .. controls (137.46,145) and (135,142.54) ..
				(135,139.5) -- cycle ;
				\draw    (332,58.5) -- (414,58.5) ;
				\draw    (332,58.5) -- (332,140.5) ;
				\draw    (414,58.5) -- (414,140.5) ;
				\draw    (332,140.5) -- (414,140.5) ;
				\draw    (332,58.5) -- (414,140.5) ;
				\draw  [fill={rgb, 255:red, 80; green, 227; blue, 194 }  ,fill opacity=1 ]
				(326.5,58.5) .. controls (326.5,55.46) and (328.96,53) .. (332,53) .. controls
				(335.04,53) and (337.5,55.46) .. (337.5,58.5) .. controls (337.5,61.54) and
				(335.04,64) .. (332,64) .. controls (328.96,64) and (326.5,61.54) ..
				(326.5,58.5) -- cycle ;
				\draw  [fill={rgb, 255:red, 245; green, 166; blue, 35 }  ,fill opacity=1 ]
				(408.5,58.5) .. controls (408.5,55.46) and (410.96,53) .. (414,53) .. controls
				(417.04,53) and (419.5,55.46) .. (419.5,58.5) .. controls (419.5,61.54) and
				(417.04,64) .. (414,64) .. controls (410.96,64) and (408.5,61.54) ..
				(408.5,58.5) -- cycle ;
				\draw  [fill={rgb, 255:red, 245; green, 166; blue, 35 }  ,fill opacity=1 ]
				(326.5,140.5) .. controls (326.5,137.46) and (328.96,135) .. (332,135) ..
				controls (335.04,135) and (337.5,137.46) .. (337.5,140.5) .. controls
				(337.5,143.54) and (335.04,146) .. (332,146) .. controls (328.96,146) and
				(326.5,143.54) .. (326.5,140.5) -- cycle ;
				\draw  [fill={rgb, 255:red, 144; green, 19; blue, 254 }  ,fill opacity=1 ]
				(408.5,140.5) .. controls (408.5,137.46) and (410.96,135) .. (414,135) ..
				controls (417.04,135) and (419.5,137.46) .. (419.5,140.5) .. controls
				(419.5,143.54) and (417.04,146) .. (414,146) .. controls (410.96,146) and
				(408.5,143.54) .. (408.5,140.5) -- cycle ;
				\draw  [fill={rgb, 255:red, 144; green, 19; blue, 254 }  ,fill opacity=1 ]
				(367.5,24.5) .. controls (367.5,21.46) and (369.96,19) .. (373,19) .. controls
				(376.04,19) and (378.5,21.46) .. (378.5,24.5) .. controls (378.5,27.54) and
				(376.04,30) .. (373,30) .. controls (369.96,30) and (367.5,27.54) ..
				(367.5,24.5) -- cycle ;
				\draw  [fill={rgb, 255:red, 80; green, 227; blue, 194 }  ,fill opacity=1 ]
				(443.5,98.5) .. controls (443.5,95.46) and (445.96,93) .. (449,93) .. controls
				(452.04,93) and (454.5,95.46) .. (454.5,98.5) .. controls (454.5,101.54) and
				(452.04,104) .. (449,104) .. controls (445.96,104) and (443.5,101.54) ..
				(443.5,98.5) -- cycle ;
				\draw  [fill={rgb, 255:red, 144; green, 19; blue, 254 }  ,fill opacity=1 ]
				(291.5,99.5) .. controls (291.5,96.46) and (293.96,94) .. (297,94) .. controls
				(300.04,94) and (302.5,96.46) .. (302.5,99.5) .. controls (302.5,102.54) and
				(300.04,105) .. (297,105) .. controls (293.96,105) and (291.5,102.54) ..
				(291.5,99.5) -- cycle ;
				\draw  [fill={rgb, 255:red, 80; green, 227; blue, 194 }  ,fill opacity=1 ]
				(367.5,173.5) .. controls (367.5,170.46) and (369.96,168) .. (373,168) ..
				controls (376.04,168) and (378.5,170.46) .. (378.5,173.5) .. controls
				(378.5,176.54) and (376.04,179) .. (373,179) .. controls (369.96,179) and
				(367.5,176.54) .. (367.5,173.5) -- cycle ;
				\draw  [fill={rgb, 255:red, 245; green, 166; blue, 35 }  ,fill opacity=1 ]
				(387.5,79.5) .. controls (387.5,76.46) and (389.96,74) .. (393,74) .. controls
				(396.04,74) and (398.5,76.46) .. (398.5,79.5) .. controls (398.5,82.54) and
				(396.04,85) .. (393,85) .. controls (389.96,85) and (387.5,82.54) ..
				(387.5,79.5) -- cycle ;
				
				\draw (92,199.4) node [anchor=north west][inner sep=0.75pt]  [font=\large] 
				{$G$};
				\draw (332,195.4) node [anchor=north west][inner sep=0.75pt]  [font=\large] 
				{$G'$};
				
			\end{tikzpicture}
			\caption{The transformation from a proper $(k+1)$-coloring of~$G$ to a partition into
				$k+1$ dominating sets of~$G'=f_{(k+1)\text{-col},\p}(G)$, with $k=2$. In the drawing of $G'$, the
				colors correspond to the parts of the partition.}
			\label{fig:domatic_partition}
		\end{figure}
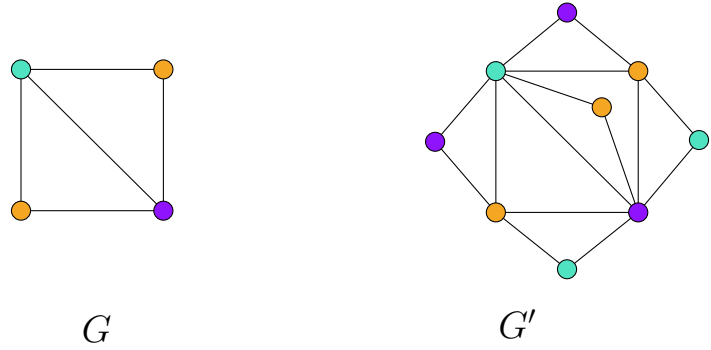
		
		Let us show that the conditions of the
                definition of local reduction are satisfied. We have already seen that \cref{l1} and \cref{l2} both hold. Properties
		\cref{l3a}, \cref{l3b}, \cref{l3e} and \cref{l3f} simply follow from the definition of $V_u$ and $C_u$.
		For \cref{l3c}, since $G$ has maximum degree~at most~$5k+1$,  for all $u \in V(G)$ we
		have $|C_u| \leqslant 1 + (k-1)(5k +1)=5k^2-4k$ and thus $|C_u| = O(1)$. For \cref{l3d}, if $C_u \cap C_v \neq
		\emptyset$, then $u$ and $v$ are neighbors. So all
                conditions of a local reduction are verified, and
		Corollary~\ref{cor:reduction3col} gives us a lower
                bound of order $\Omega(n/\log
			n)$ on the local complexity of the problem. Finally, since $G$ has maximum degree~at most~$5k+1$,  $G'=f_{\text{3-col},
			\p}(G)$ has maximum degree~at most~$5k^2-4k$.
	\end{proof}

        \begin{remark}
		In Section~\ref{sec:variants}, we will prove with Corollary~\ref{cor:reduction3coldeg} that the maximum degree in the statement of Theorem~\ref{thm:domatic} can be decreased to $k^2 + k\lceil\sqrt{k+1}\rceil$.
	\end{remark}
	
	\subsection{Cubic subgraph}
	
	Let $G$ and $H$ be two graphs. We say that $H$ is a \emph{subgraph} (resp.\ \emph{induced subgraph}) of~$G$ if $H$ can be obtained from~$G$ by deleting vertices and edges (resp.\ vertices only). A graph is \emph{cubic} if all its vertices have degree~3.
	
	\begin{theorem}
		\label{thm:cubic_subgraph}
		The local complexity of the property of not containing
                a cubic subgraph is $\Omega(n/\log n)$, even in graphs of maximum degree
		at most~7.
	\end{theorem}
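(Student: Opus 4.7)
The strategy is to apply the local reduction framework of Section~\ref{sec:reduc}. Let $\p$ denote the property ``contains a cubic subgraph''. By Corollary~\ref{cor:reduction3col}(2) (with $k=3$, so $k+\lceil\sqrt{k}\rceil-1=4$), combined with Remark~\ref{rem:def local reduction}, it is enough to construct a local reduction from $3$-colorability in graphs of maximum degree~$4$ to $\p$ with local expansion $O(1)$; since $G$ has bounded degree and every gadget in the construction will have constant size, the global expansion will be $O(n)$, yielding the claimed $\Omega(n/\log n)$ lower bound on the local complexity of $\np$.

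Given an $n$-vertex graph $G$ of maximum degree~$4$ with identifiers in $\{1,\ldots,n\}$, I construct $G'=f(G)$ from two families of constant-size gadgets. For each $u\in V(G)$ there is a \emph{vertex gadget} $W_u$ equipped with, for each color $i\in\{1,2,3\}$ and each neighbor $v\in N_G(u)$, a distinguished \emph{port} $p^{i,v}_u$. The gadget $W_u$ will be designed so that it admits exactly three ways of being extended to a cubic subgraph, one per color $i$ of $u$, and in the extension corresponding to color $i$ precisely the ports $p^{i,v}_u$ are ``active'' (i.e., require an incident external edge to reach degree~$3$), while the other ports do not. For each edge $uv\in E(G)$ there is an \emph{edge gadget} $E_{uv}$ linking the corresponding ports of $W_u$ and $W_v$, designed so that $E_{uv}$ admits a cubic completion if and only if, for each color $i$, not both $p^{i,v}_u$ and $p^{i,u}_v$ are active, that is, if and only if the colors chosen at $u$ and $v$ differ. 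With these gadgets, $G'$ has a cubic subgraph if and only if $G$ admits a proper $3$-coloring.

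For the reduction data, let $V_u$ consist of the vertices of $W_u$ and of the incident edge gadgets $E_{uv}$, and let $C_u$ additionally contain, for each $v\in N_G(u)$, the ports of $W_v$ attached to $E_{uv}$. Since $|N_G(u)|\le 4$ and each gadget is of constant size, $|C_u|=O(1)$ and the local expansion is $O(1)$. Conditions \cref{l3a}, \cref{l3b} and \cref{l3e} are immediate from these definitions; \cref{l3d} holds because, for each $t\in V(G')$, the set $\{u\in V(G):t\in C_u\}$ is a single vertex, an edge, or the closed neighborhood of a vertex of $G$ (all connected in $G$); and \cref{l3f} holds because $W_u$ and the incident edge gadgets depend only on $u$ and the identifiers of its neighbors. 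Identifiers in $G'$ are encoded on $O(\log n)$ bits by tagging each gadget vertex with the underlying $G$-vertex or $G$-edge.

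The main obstacle will be designing the vertex and edge gadgets so that (a)~their cubic completions correctly encode the ``one color per vertex, distinct colors along edges'' semantics, (b)~the maximum degree of $G'$ is at most~$7$, and (c)~the gadgets remain of constant size. The tightest constraint is the degree budget: a vertex of $W_u$ connected to an external edge gadget has only a few internal edges available to reach degree~$3$, so the port structure must be implemented as a small ``cubic-minus-one-edge'' stub that forces the external edge to be present in the cubic subgraph iff the port is active, while a small internal ``$1$-out-of-$3$'' selector subgadget inside $W_u$ enforces that exactly one color is chosen. Once such gadgets are designed and their local behavior verified, all conditions of a local reduction hold by inspection and Corollary~\ref{cor:reduction3col}(2) concludes the proof.
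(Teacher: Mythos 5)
Your high-level strategy matches the paper's: reduce from $3$-colorability in graphs of maximum degree~$4$ with constant local expansion, and invoke Corollary~\ref{cor:reduction3col}. However, the proposal leaves the actual content of the argument unwritten. You describe what the vertex gadget $W_u$ and edge gadget $E_{uv}$ \emph{should} do (``admits exactly three ways of being extended to a cubic subgraph,'' ``admits a cubic completion if and only if the colors chosen at $u$ and $v$ differ''), but you never specify what these gadgets \emph{are}, and your final paragraph explicitly acknowledges that ``the main obstacle will be designing the vertex and edge gadgets'' subject to three nontrivial constraints. In a reduction of this kind the gadget construction and the verification of its cubic-completion semantics \emph{is} the proof; what remains in your write-up is a specification, not a construction.

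Two more concrete issues. First, a cubic subgraph $H$ of $G'$ need not a priori touch every vertex gadget: you must argue that if $H$ is nonempty then it intersects \emph{all} the gadgets, otherwise no $3$-coloring of $G$ can be extracted. The paper handles this with a propagation argument across $G$ (its Lemma~\ref{lem:cubic_subgraph main lemma}): once $H$ meets the gadget of $u$ it is forced, via the degree-$3$ constraint at the terminal vertices, to use a connecting edge to a neighbor's gadget, and connectedness of $G$ does the rest. Your proposal does not mention this step. Second, your ``edge gadget'' layer is an extra abstraction the paper does not need: the paper connects the terminals $L_u[i,v]$ and $L_v[j,u]$ directly, for all $i\neq j$, so the ``forbidden same-color pairing'' is encoded simply by the absence of an edge between $L_u[i,v]$ and $L_v[i,u]$. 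Your separate $E_{uv}$ gadget would have to enforce a more delicate constraint while itself being completable to a cubic graph, and keeping within the degree-$7$ budget -- another place where the proposal asserts feasibility without demonstrating it. As written, this is an outline with the hard parts deferred, not a proof.
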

	
	\begin{proof}
		Let $\p$ be the property of having a cubic subgraph. We will show that there is
		a local reduction from 3-colorability in graphs of maximum degree~at most~4 to
		$\p$ with local expansion $O(1)$. The result will then follow from
		Corollary~\ref{cor:reduction3col}. Let $G$ be a graph of maximum degree~at
		most~4, with $n$ vertices having unique identifiers in $\{1, \ldots, n\}$.
		
		Let us describe the graph $G'=f_{\text{3-col}, \p}(G)$. First, we define a \emph{linking
			component with $d$ outputs}. It consists in a cycle of length $2d+1$, and for
		$d$ pairs of consecutive vertices, we add a new vertex linked to both vertices
		of this pair, to form a triangle. The vertex of the cycle which does not belong
		to these pairs is called the \emph{root}. The vertices which are not in the
		cycle are called the \emph{terminals}. The remaining vertices are called the
		\emph{internal vertices}. Note that all internal vertices have degree~3. See
		Figure~\ref{fig:linking} for an example.
		
		\begin{figure}[h]
			\centering
			\begin{tikzpicture}[x=0.6pt,y=0.6pt,yscale=-1,xscale=1]
				
				\draw   (441,202.8) .. controls (441,199.04) and (444.04,196) .. (447.8,196)
				-- (597.7,196) .. controls (601.46,196) and (604.5,199.04) .. (604.5,202.8) --
				(604.5,223.2) .. controls (604.5,226.96) and (601.46,230) .. (597.7,230) --
				(447.8,230) .. controls (444.04,230) and (441,226.96) .. (441,223.2) -- cycle ;
				\draw    (447.8,230) -- (522.75,263) ;
				\draw    (522.75,263) -- (597.7,230) ;
				\draw    (12,184.5) -- (188.75,291.25) ;
				\draw    (188.75,291.25) -- (365.5,184.5) ;
				\draw    (12,184.5) -- (37.25,153) ;
				\draw    (113,184.5) -- (138.25,153) ;
				\draw    (214,184.5) -- (239.25,153) ;
				\draw    (315,184.5) -- (340.25,153) ;
				\draw    (62.5,184.5) -- (37.25,153) ;
				\draw    (163.5,184.5) -- (138.25,153) ;
				\draw    (264.5,184.5) -- (239.25,153) ;
				\draw    (365.5,184.5) -- (340.25,153) ;
				\draw    (12,184.5) -- (365.5,184.5) ;
				\draw  [fill={rgb, 255:red, 208; green, 2; blue, 27 }  ,fill opacity=1 ]
				(181.5,291.25) .. controls (181.5,287.25) and (184.75,284) .. (188.75,284) ..
				controls (192.75,284) and (196,287.25) .. (196,291.25) .. controls (196,295.25)
				and (192.75,298.5) .. (188.75,298.5) .. controls (184.75,298.5) and
				(181.5,295.25) .. (181.5,291.25) -- cycle ;
				\draw  [fill={rgb, 255:red, 248; green, 231; blue, 28 }  ,fill opacity=1 ]
				(206.75,184.5) .. controls (206.75,180.5) and (210,177.25) .. (214,177.25) ..
				controls (218,177.25) and (221.25,180.5) .. (221.25,184.5) .. controls
				(221.25,188.5) and (218,191.75) .. (214,191.75) .. controls (210,191.75) and
				(206.75,188.5) .. (206.75,184.5) -- cycle ;
				\draw  [fill={rgb, 255:red, 248; green, 231; blue, 28 }  ,fill opacity=1 ]
				(156.25,184.5) .. controls (156.25,180.5) and (159.5,177.25) .. (163.5,177.25)
				.. controls (167.5,177.25) and (170.75,180.5) .. (170.75,184.5) .. controls
				(170.75,188.5) and (167.5,191.75) .. (163.5,191.75) .. controls (159.5,191.75)
				and (156.25,188.5) .. (156.25,184.5) -- cycle ;
				\draw  [fill={rgb, 255:red, 248; green, 231; blue, 28 }  ,fill opacity=1 ]
				(307.75,184.5) .. controls (307.75,180.5) and (311,177.25) .. (315,177.25) ..
				controls (319,177.25) and (322.25,180.5) .. (322.25,184.5) .. controls
				(322.25,188.5) and (319,191.75) .. (315,191.75) .. controls (311,191.75) and
				(307.75,188.5) .. (307.75,184.5) -- cycle ;
				\draw  [fill={rgb, 255:red, 248; green, 231; blue, 28 }  ,fill opacity=1 ]
				(257.25,184.5) .. controls (257.25,180.5) and (260.5,177.25) .. (264.5,177.25)
				.. controls (268.5,177.25) and (271.75,180.5) .. (271.75,184.5) .. controls
				(271.75,188.5) and (268.5,191.75) .. (264.5,191.75) .. controls (260.5,191.75)
				and (257.25,188.5) .. (257.25,184.5) -- cycle ;
				\draw  [fill={rgb, 255:red, 248; green, 231; blue, 28 }  ,fill opacity=1 ]
				(105.75,184.5) .. controls (105.75,180.5) and (109,177.25) .. (113,177.25) ..
				controls (117,177.25) and (120.25,180.5) .. (120.25,184.5) .. controls
				(120.25,188.5) and (117,191.75) .. (113,191.75) .. controls (109,191.75) and
				(105.75,188.5) .. (105.75,184.5) -- cycle ;
				\draw  [fill={rgb, 255:red, 248; green, 231; blue, 28 }  ,fill opacity=1 ]
				(55.25,184.5) .. controls (55.25,180.5) and (58.5,177.25) .. (62.5,177.25) ..
				controls (66.5,177.25) and (69.75,180.5) .. (69.75,184.5) .. controls
				(69.75,188.5) and (66.5,191.75) .. (62.5,191.75) .. controls (58.5,191.75) and
				(55.25,188.5) .. (55.25,184.5) -- cycle ;
				\draw  [fill={rgb, 255:red, 248; green, 231; blue, 28 }  ,fill opacity=1 ]
				(4.75,184.5) .. controls (4.75,180.5) and (8,177.25) .. (12,177.25) .. controls
				(16,177.25) and (19.25,180.5) .. (19.25,184.5) .. controls (19.25,188.5) and
				(16,191.75) .. (12,191.75) .. controls (8,191.75) and (4.75,188.5) ..
				(4.75,184.5) -- cycle ;
				\draw  [fill={rgb, 255:red, 248; green, 231; blue, 28 }  ,fill opacity=1 ]
				(358.25,184.5) .. controls (358.25,180.5) and (361.5,177.25) .. (365.5,177.25)
				.. controls (369.5,177.25) and (372.75,180.5) .. (372.75,184.5) .. controls
				(372.75,188.5) and (369.5,191.75) .. (365.5,191.75) .. controls (361.5,191.75)
				and (358.25,188.5) .. (358.25,184.5) -- cycle ;
				\draw  [fill={rgb, 255:red, 255; green, 255; blue, 255 }  ,fill opacity=1 ]
				(30,153) .. controls (30,149) and (33.25,145.75) .. (37.25,145.75) .. controls
				(41.25,145.75) and (44.5,149) .. (44.5,153) .. controls (44.5,157) and
				(41.25,160.25) .. (37.25,160.25) .. controls (33.25,160.25) and (30,157) ..
				(30,153) -- cycle ;
				\draw  [fill={rgb, 255:red, 255; green, 255; blue, 255 }  ,fill opacity=1 ]
				(131,153) .. controls (131,149) and (134.25,145.75) .. (138.25,145.75) ..
				controls (142.25,145.75) and (145.5,149) .. (145.5,153) .. controls (145.5,157)
				and (142.25,160.25) .. (138.25,160.25) .. controls (134.25,160.25) and (131,157)
				.. (131,153) -- cycle ;
				\draw  [fill={rgb, 255:red, 255; green, 255; blue, 255 }  ,fill opacity=1 ]
				(232,153) .. controls (232,149) and (235.25,145.75) .. (239.25,145.75) ..
				controls (243.25,145.75) and (246.5,149) .. (246.5,153) .. controls (246.5,157)
				and (243.25,160.25) .. (239.25,160.25) .. controls (235.25,160.25) and (232,157)
				.. (232,153) -- cycle ;
				\draw  [fill={rgb, 255:red, 255; green, 255; blue, 255 }  ,fill opacity=1 ]
				(333,153) .. controls (333,149) and (336.25,145.75) .. (340.25,145.75) ..
				controls (344.25,145.75) and (347.5,149) .. (347.5,153) .. controls (347.5,157)
				and (344.25,160.25) .. (340.25,160.25) .. controls (336.25,160.25) and (333,157)
				.. (333,153) -- cycle ;
				\draw  [fill={rgb, 255:red, 255; green, 255; blue, 255 }  ,fill opacity=1 ]
				(455,213) .. controls (455,209) and (458.25,205.75) .. (462.25,205.75) ..
				controls (466.25,205.75) and (469.5,209) .. (469.5,213) .. controls (469.5,217)
				and (466.25,220.25) .. (462.25,220.25) .. controls (458.25,220.25) and (455,217)
				.. (455,213) -- cycle ;
				\draw  [fill={rgb, 255:red, 255; green, 255; blue, 255 }  ,fill opacity=1 ]
				(494.25,213) .. controls (494.25,209) and (497.5,205.75) .. (501.5,205.75) ..
				controls (505.5,205.75) and (508.75,209) .. (508.75,213) .. controls
				(508.75,217) and (505.5,220.25) .. (501.5,220.25) .. controls (497.5,220.25) and
				(494.25,217) .. (494.25,213) -- cycle ;
				\draw  [fill={rgb, 255:red, 255; green, 255; blue, 255 }  ,fill opacity=1 ]
				(533.5,213) .. controls (533.5,209) and (536.75,205.75) .. (540.75,205.75) ..
				controls (544.75,205.75) and (548,209) .. (548,213) .. controls (548,217) and
				(544.75,220.25) .. (540.75,220.25) .. controls (536.75,220.25) and (533.5,217)
				.. (533.5,213) -- cycle ;
				\draw  [fill={rgb, 255:red, 255; green, 255; blue, 255 }  ,fill opacity=1 ]
				(572.75,213) .. controls (572.75,209) and (576,205.75) .. (580,205.75) ..
				controls (584,205.75) and (587.25,209) .. (587.25,213) .. controls (587.25,217)
				and (584,220.25) .. (580,220.25) .. controls (576,220.25) and (572.75,217) ..
				(572.75,213) -- cycle ;
				\draw  [fill={rgb, 255:red, 208; green, 2; blue, 27 }  ,fill opacity=1 ]
				(515.5,263) .. controls (515.5,259) and (518.75,255.75) .. (522.75,255.75) ..
				controls (526.75,255.75) and (530,259) .. (530,263) .. controls (530,267) and
				(526.75,270.25) .. (522.75,270.25) .. controls (518.75,270.25) and (515.5,267)
				.. (515.5,263) -- cycle ;
				
			\end{tikzpicture}
			\caption{A linking-component with four outputs, and its symbolic
				representation. The red vertex is the root. The white vertices are the
				terminals. The yellow vertices are the internal vertices (and do not appear in
				the symbolic representation).}
			\label{fig:linking}
		\end{figure}
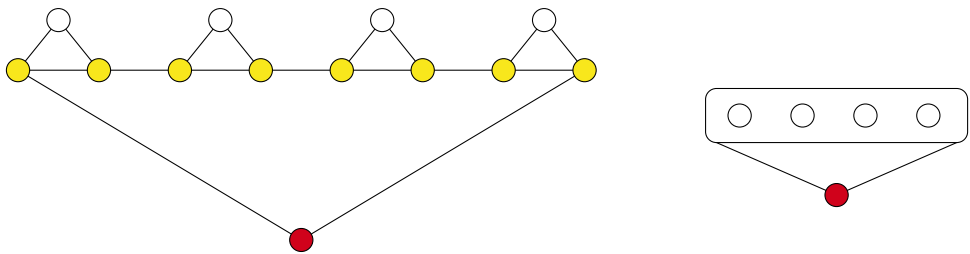
		
		Then, we define an \emph{tag}, represented on Figure~\ref{fig:tag} (this gadget
		has already been used by Stewart~\cite{Stewart94} in a
                different reduction, which was not sufficiently local
                for our purposes). A tag has
		only one vertex of degree~1, that we call the \emph{root}. All the other
		vertices have degree~3, and we call them the \emph{internal vertices}.
		
		\begin{figure}[h]
			\centering
			\begin{tikzpicture}[x=0.6pt,y=0.6pt,yscale=-1,xscale=1]
				
				\draw    (379.5,135) -- (379.5,202.5) ;
				\draw    (264,97.5) -- (311.5,97.5) ;
				\draw    (216.5,97.5) -- (264,97.5) ;
				\draw    (264,45) -- (311.5,97.5) ;
				\draw    (264,45) -- (264,97.5) ;
				\draw    (216.5,97.5) -- (264,45) ;
				\draw    (216.5,97.5) -- (264,150) ;
				\draw    (264,150) -- (311.5,97.5) ;
				\draw    (264,150) -- (264,202.5) ;
				\draw  [fill={rgb, 255:red, 208; green, 2; blue, 27 }  ,fill opacity=1 ]
				(256.75,202.5) .. controls (256.75,198.5) and (260,195.25) .. (264,195.25) ..
				controls (268,195.25) and (271.25,198.5) .. (271.25,202.5) .. controls
				(271.25,206.5) and (268,209.75) .. (264,209.75) .. controls (260,209.75) and
				(256.75,206.5) .. (256.75,202.5) -- cycle ;
				\draw  [fill={rgb, 255:red, 255; green, 255; blue, 255 }  ,fill opacity=1 ]
				(256.75,150) .. controls (256.75,146) and (260,142.75) .. (264,142.75) ..
				controls (268,142.75) and (271.25,146) .. (271.25,150) .. controls (271.25,154)
				and (268,157.25) .. (264,157.25) .. controls (260,157.25) and (256.75,154) ..
				(256.75,150) -- cycle ;
				\draw  [fill={rgb, 255:red, 255; green, 255; blue, 255 }  ,fill opacity=1 ]
				(256.75,97.5) .. controls (256.75,93.5) and (260,90.25) .. (264,90.25) ..
				controls (268,90.25) and (271.25,93.5) .. (271.25,97.5) .. controls
				(271.25,101.5) and (268,104.75) .. (264,104.75) .. controls (260,104.75) and
				(256.75,101.5) .. (256.75,97.5) -- cycle ;
				\draw  [fill={rgb, 255:red, 255; green, 255; blue, 255 }  ,fill opacity=1 ]
				(256.75,45) .. controls (256.75,41) and (260,37.75) .. (264,37.75) .. controls
				(268,37.75) and (271.25,41) .. (271.25,45) .. controls (271.25,49) and
				(268,52.25) .. (264,52.25) .. controls (260,52.25) and (256.75,49) ..
				(256.75,45) -- cycle ;
				\draw  [fill={rgb, 255:red, 255; green, 255; blue, 255 }  ,fill opacity=1 ]
				(304.25,97.5) .. controls (304.25,93.5) and (307.5,90.25) .. (311.5,90.25) ..
				controls (315.5,90.25) and (318.75,93.5) .. (318.75,97.5) .. controls
				(318.75,101.5) and (315.5,104.75) .. (311.5,104.75) .. controls (307.5,104.75)
				and (304.25,101.5) .. (304.25,97.5) -- cycle ;
				\draw  [fill={rgb, 255:red, 255; green, 255; blue, 255 }  ,fill opacity=1 ]
				(209.25,97.5) .. controls (209.25,93.5) and (212.5,90.25) .. (216.5,90.25) ..
				controls (220.5,90.25) and (223.75,93.5) .. (223.75,97.5) .. controls
				(223.75,101.5) and (220.5,104.75) .. (216.5,104.75) .. controls (212.5,104.75)
				and (209.25,101.5) .. (209.25,97.5) -- cycle ;
				\draw  [fill={rgb, 255:red, 0; green, 0; blue, 0 }  ,fill opacity=1 ]
				(379.5,143.25) -- (387.75,135) -- (379.5,126.75) -- (371.25,135) -- cycle ;
				\draw  [fill={rgb, 255:red, 208; green, 2; blue, 27 }  ,fill opacity=1 ]
				(372.25,202.5) .. controls (372.25,198.5) and (375.5,195.25) .. (379.5,195.25)
				.. controls (383.5,195.25) and (386.75,198.5) .. (386.75,202.5) .. controls
				(386.75,206.5) and (383.5,209.75) .. (379.5,209.75) .. controls (375.5,209.75)
				and (372.25,206.5) .. (372.25,202.5) -- cycle ;
				
			\end{tikzpicture}
			\caption{A tag, and its symbolic representation. The red vertex is the root.}
			\label{fig:tag}
		\end{figure}
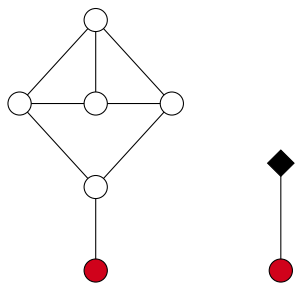
		
		The graph $G'=f_{\text{3-col},\p}(G)$ is constructed as follows. For every vertex $u \in V(G)$
		of degree~$d$ (with $d \in \{1, 2, 3, 4\}$), we create one tag and three
		linking-components with $d$ outputs associated to $u$, and we merge all the four
		roots into a single root vertex.
		We number from~1 to~3 the linking-components of~$u$. 
		In each of the three linking components, we associate one terminal to each
		neighbor of $u$. For each $v \in N(u)$, we denote by $L_u[i,v]$ the terminal of
		the $i$-th linking component of~$u$ associated to~$v$. For each $i,j \in
		\{1,2,3\}$ such that $i \neq j$, and for all $v \in
                N(u)$, we put an edge
		between $L_u[i,v]$ and $L_v[j,u]$. These edges are called the \emph{connecting edges}.
		Note that since $G$ has maximum degree~at most~4, $G'$ has maximum degree~at most~7.
		For each $u \in V(G)$, we define $C_u:=V_u$ as the set of vertices which are in the three linking components and in
		the tag corresponding to $u$.
		Finally, the identifier of every vertex of $G'$ indicates if it belongs to a tag or a linking component, to which vertex of $G$ is it associated, and which vertex it is precisely in this tag or linking component. All this information can be encoded on $O(\log n)$ bits.
		See Figure~\ref{fig:cubicsubgraph_reduction} for an example of this reduction.

		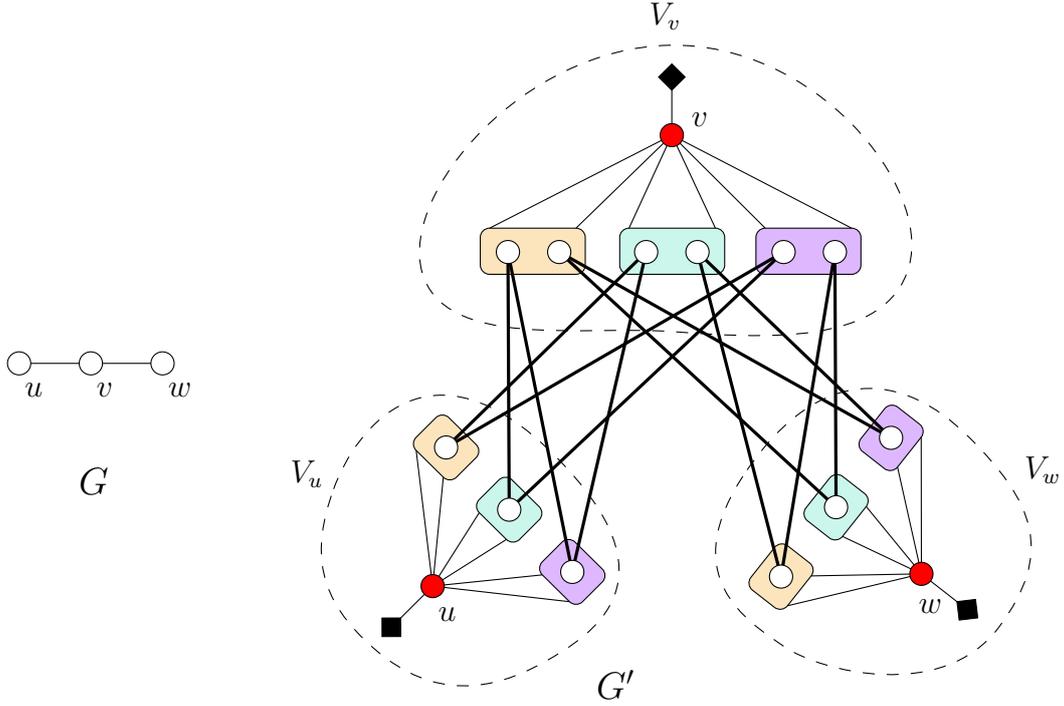
\begin{figure}[h]
			\centering
			\begin{tikzpicture}[x=0.6pt,y=0.6pt,yscale=-1,xscale=1]
				
				\draw  [fill={rgb, 255:red, 144; green, 19; blue, 254 }  ,fill opacity=0.3 ]
				(612.5,268.4) .. controls (609.98,266.42) and (609.55,262.77) .. (611.53,260.26)
				-- (626.31,241.48) .. controls (628.29,238.96) and (631.94,238.52) ..
				(634.45,240.5) -- (648.13,251.26) .. controls (650.65,253.25) and
				(651.08,256.89) .. (649.1,259.41) -- (634.32,278.19) .. controls (632.34,280.71)
				and (628.69,281.14) .. (626.17,279.16) -- cycle ;
				\draw  [fill={rgb, 255:red, 144; green, 19; blue, 254 }  ,fill opacity=0.3 ]
				(424.95,325.44) .. controls (427.21,323.17) and (430.88,323.16) ..
				(433.15,325.42) -- (450.1,342.27) .. controls (452.38,344.53) and (452.39,348.2)
				.. (450.13,350.47) -- (437.86,362.81) .. controls (435.6,365.08) and
				(431.93,365.09) .. (429.66,362.84) -- (412.71,345.99) .. controls
				(410.44,343.73) and (410.43,340.05) .. (412.68,337.78) -- cycle ;
				\draw  [fill={rgb, 255:red, 144; green, 19; blue, 254 }  ,fill opacity=0.3 ]
				(546,133.8) .. controls (546,130.6) and (548.6,128) .. (551.8,128) --
				(605.7,128) .. controls (608.9,128) and (611.5,130.6) .. (611.5,133.8) --
				(611.5,151.2) .. controls (611.5,154.4) and (608.9,157) .. (605.7,157) --
				(551.8,157) .. controls (548.6,157) and (546,154.4) .. (546,151.2) -- cycle ;
				\draw  [fill={rgb, 255:red, 80; green, 227; blue, 194 }  ,fill opacity=0.3 ]
				(578.18,312.02) .. controls (575.66,310.04) and (575.23,306.39) ..
				(577.21,303.87) -- (591.99,285.09) .. controls (593.97,282.57) and
				(597.62,282.14) .. (600.13,284.12) -- (613.81,294.88) .. controls
				(616.32,296.86) and (616.76,300.51) .. (614.78,303.02) -- (600,321.81) ..
				controls (598.02,324.32) and (594.37,324.76) .. (591.85,322.78) -- cycle ;
				\draw  [fill={rgb, 255:red, 80; green, 227; blue, 194 }  ,fill opacity=0.3 ]
				(385.59,286.32) .. controls (387.85,284.04) and (391.52,284.03) ..
				(393.79,286.29) -- (410.74,303.14) .. controls (413.01,305.4) and
				(413.03,309.07) .. (410.77,311.34) -- (398.5,323.68) .. controls (396.24,325.95)
				and (392.57,325.97) .. (390.3,323.71) -- (373.35,306.86) .. controls
				(371.08,304.6) and (371.07,300.93) .. (373.32,298.66) -- cycle ;
				\draw  [fill={rgb, 255:red, 80; green, 227; blue, 194 }  ,fill opacity=0.3 ]
				(461,133.8) .. controls (461,130.6) and (463.6,128) .. (466.8,128) --
				(520.7,128) .. controls (523.9,128) and (526.5,130.6) .. (526.5,133.8) --
				(526.5,151.2) .. controls (526.5,154.4) and (523.9,157) .. (520.7,157) --
				(466.8,157) .. controls (463.6,157) and (461,154.4) .. (461,151.2) -- cycle ;
				\draw  [fill={rgb, 255:red, 245; green, 166; blue, 35 }  ,fill opacity=0.3 ]
				(543.86,355.63) .. controls (541.34,353.65) and (540.91,350) .. (542.89,347.49)
				-- (557.67,328.71) .. controls (559.65,326.19) and (563.29,325.75) ..
				(565.81,327.73) -- (579.49,338.49) .. controls (582,340.48) and (582.44,344.12)
				.. (580.46,346.64) -- (565.68,365.42) .. controls (563.7,367.94) and
				(560.05,368.37) .. (557.53,366.39) -- cycle ;
				\draw  [fill={rgb, 255:red, 245; green, 166; blue, 35 }  ,fill opacity=0.3 ]
				(346.23,247.19) .. controls (348.49,244.92) and (352.16,244.9) ..
				(354.43,247.16) -- (371.38,264.01) .. controls (373.65,266.27) and
				(373.66,269.94) .. (371.41,272.22) -- (359.14,284.56) .. controls
				(356.88,286.83) and (353.21,286.84) .. (350.94,284.58) -- (333.99,267.73) ..
				controls (331.72,265.47) and (331.7,261.8) .. (333.96,259.53) -- cycle ;
				\draw  [fill={rgb, 255:red, 245; green, 166; blue, 35 }  ,fill opacity=0.3 ]
				(374,133.8) .. controls (374,130.6) and (376.6,128) .. (379.8,128) --
				(433.7,128) .. controls (436.9,128) and (439.5,130.6) .. (439.5,133.8) --
				(439.5,151.2) .. controls (439.5,154.4) and (436.9,157) .. (433.7,157) --
				(379.8,157) .. controls (376.6,157) and (374,154.4) .. (374,151.2) -- cycle ;
				\draw[line width=1.2pt]    (391.25,143) -- (392.05,305) ;
				\draw[line width=1.2pt]    (391.25,143) -- (431.41,344.13) ;
				\draw[line width=1.2pt]    (423.25,143) -- (595.99,303.45) ;
				\draw[line width=1.2pt]    (423.25,143) -- (630.31,259.83) ;
				\draw[line width=1.2pt]    (477.5,143) -- (352.68,265.87) ;
				\draw[line width=1.2pt]    (477.5,143) -- (431.41,344.13) ;
				\draw[line width=1.2pt]    (509.5,143) -- (561.67,347.06) ;
				\draw[line width=1.2pt]    (509.5,143) -- (630.31,259.83) ;
				\draw[line width=1.2pt]    (563.25,143) -- (352.68,265.87) ;
				\draw[line width=1.2pt]    (563.25,143) -- (392.05,305) ;
				\draw[line width=1.2pt]    (595.25,143) -- (561.67,347.06) ;
				\draw[line width=1.2pt]   (595.25,143) -- (595.99,303.45) ;
				\draw    (86.25,213) -- (131,213) ;
				\draw    (131,213) -- (175.75,213) ;
				\draw    (344.28,353.05) -- (318.55,378.93) ;
				\draw    (333.99,267.73) -- (344.28,353.05) ;
				\draw    (350.94,284.58) -- (344.28,353.05) ;
				\draw    (373.35,306.86) -- (344.28,353.05) ;
				\draw    (390.3,323.71) -- (344.28,353.05) ;
				\draw    (412.71,345.99) -- (344.28,353.05) ;
				\draw    (429.66,362.84) -- (344.28,353.05) ;
				\draw  [fill={rgb, 255:red, 255; green, 0; blue, 0 }  ,fill opacity=1 ]
				(339.14,347.94) .. controls (341.96,345.1) and (346.55,345.08) ..
				(349.39,347.91) .. controls (352.23,350.73) and (352.25,355.32) ..
				(349.42,358.16) .. controls (346.6,361) and (342.01,361.01) .. (339.17,358.19)
				.. controls (336.33,355.37) and (336.32,350.78) .. (339.14,347.94) -- cycle ;
				\draw  [fill={rgb, 255:red, 0; green, 0; blue, 0 }  ,fill opacity=1 ]
				(312.73,384.78) -- (324.4,384.75) -- (324.36,373.08) -- (312.7,373.12) -- cycle
				;
				\draw  [fill={rgb, 255:red, 255; green, 255; blue, 255 }  ,fill opacity=1 ]
				(386.9,299.89) .. controls (389.73,297.05) and (394.32,297.03) ..
				(397.16,299.86) .. controls (400,302.68) and (400.01,307.27) .. (397.19,310.11)
				.. controls (394.36,312.95) and (389.77,312.96) .. (386.93,310.14) .. controls
				(384.09,307.32) and (384.08,302.73) .. (386.9,299.89) -- cycle ;
				\draw  [fill={rgb, 255:red, 255; green, 255; blue, 255 }  ,fill opacity=1 ]
				(426.26,339.02) .. controls (429.09,336.18) and (433.68,336.16) ..
				(436.52,338.99) .. controls (439.36,341.81) and (439.37,346.4) ..
				(436.55,349.24) .. controls (433.72,352.08) and (429.13,352.09) ..
				(426.29,349.27) .. controls (423.45,346.45) and (423.44,341.86) ..
				(426.26,339.02) -- cycle ;
				\draw  [fill={rgb, 255:red, 255; green, 255; blue, 255 }  ,fill opacity=1 ]
				(347.54,260.76) .. controls (350.37,257.92) and (354.96,257.91) ..
				(357.8,260.73) .. controls (360.64,263.55) and (360.65,268.14) ..
				(357.83,270.98) .. controls (355,273.82) and (350.41,273.84) .. (347.57,271.01)
				.. controls (344.73,268.19) and (344.72,263.6) .. (347.54,260.76) -- cycle ;
				\draw    (493.5,33) -- (493.5,69.5) ;
				\draw    (493.5,69.5) -- (379.8,128) ;
				\draw    (493.5,69.5) -- (433.7,128) ;
				\draw    (493.5,69.5) -- (466.8,128) ;
				\draw    (493.5,69.5) -- (520.7,128) ;
				\draw    (493.5,69.5) -- (551.8,128) ;
				\draw    (493.5,69.5) -- (605.7,128) ;
				\draw  [fill={rgb, 255:red, 255; green, 0; blue, 0 }  ,fill opacity=1 ]
				(486.25,69.5) .. controls (486.25,65.5) and (489.5,62.25) .. (493.5,62.25) ..
				controls (497.5,62.25) and (500.75,65.5) .. (500.75,69.5) .. controls
				(500.75,73.5) and (497.5,76.75) .. (493.5,76.75) .. controls (489.5,76.75) and
				(486.25,73.5) .. (486.25,69.5) -- cycle ;
				\draw  [fill={rgb, 255:red, 0; green, 0; blue, 0 }  ,fill opacity=1 ]
				(493.5,41.25) -- (501.75,33) -- (493.5,24.75) -- (485.25,33) -- cycle ;
				\draw  [fill={rgb, 255:red, 255; green, 255; blue, 255 }  ,fill opacity=1 ]
				(470.25,143) .. controls (470.25,139) and (473.5,135.75) .. (477.5,135.75) ..
				controls (481.5,135.75) and (484.75,139) .. (484.75,143) .. controls
				(484.75,147) and (481.5,150.25) .. (477.5,150.25) .. controls (473.5,150.25) and
				(470.25,147) .. (470.25,143) -- cycle ;
				\draw  [fill={rgb, 255:red, 255; green, 255; blue, 255 }  ,fill opacity=1 ]
				(502.25,143) .. controls (502.25,139) and (505.5,135.75) .. (509.5,135.75) ..
				controls (513.5,135.75) and (516.75,139) .. (516.75,143) .. controls
				(516.75,147) and (513.5,150.25) .. (509.5,150.25) .. controls (505.5,150.25) and
				(502.25,147) .. (502.25,143) -- cycle ;
				\draw  [fill={rgb, 255:red, 255; green, 255; blue, 255 }  ,fill opacity=1 ]
				(556,143) .. controls (556,139) and (559.25,135.75) .. (563.25,135.75) ..
				controls (567.25,135.75) and (570.5,139) .. (570.5,143) .. controls (570.5,147)
				and (567.25,150.25) .. (563.25,150.25) .. controls (559.25,150.25) and (556,147)
				.. (556,143) -- cycle ;
				\draw  [fill={rgb, 255:red, 255; green, 255; blue, 255 }  ,fill opacity=1 ]
				(588,143) .. controls (588,139) and (591.25,135.75) .. (595.25,135.75) ..
				controls (599.25,135.75) and (602.5,139) .. (602.5,143) .. controls (602.5,147)
				and (599.25,150.25) .. (595.25,150.25) .. controls (591.25,150.25) and (588,147)
				.. (588,143) -- cycle ;
				\draw  [fill={rgb, 255:red, 255; green, 255; blue, 255 }  ,fill opacity=1 ]
				(416,143) .. controls (416,139) and (419.25,135.75) .. (423.25,135.75) ..
				controls (427.25,135.75) and (430.5,139) .. (430.5,143) .. controls (430.5,147)
				and (427.25,150.25) .. (423.25,150.25) .. controls (419.25,150.25) and (416,147)
				.. (416,143) -- cycle ;
				\draw  [fill={rgb, 255:red, 255; green, 255; blue, 255 }  ,fill opacity=1 ]
				(384,143) .. controls (384,139) and (387.25,135.75) .. (391.25,135.75) ..
				controls (395.25,135.75) and (398.5,139) .. (398.5,143) .. controls (398.5,147)
				and (395.25,150.25) .. (391.25,150.25) .. controls (387.25,150.25) and (384,147)
				.. (384,143) -- cycle ;
				\draw    (649.23,345.35) -- (677.92,367.92) ;
				\draw    (565.68,365.42) -- (649.23,345.35) ;
				\draw    (580.46,346.64) -- (649.23,345.35) ;
				\draw    (600,321.81) -- (649.23,345.35) ;
				\draw    (614.78,303.02) -- (649.23,345.35) ;
				\draw    (634.32,278.19) -- (649.23,345.35) ;
				\draw    (649.1,259.41) -- (649.23,345.35) ;
				\draw  [fill={rgb, 255:red, 255; green, 0; blue, 0 }  ,fill opacity=1 ]
				(644.75,351.04) .. controls (641.6,348.57) and (641.06,344.01) ..
				(643.54,340.86) .. controls (646.01,337.72) and (650.57,337.17) ..
				(653.72,339.65) .. controls (656.86,342.12) and (657.41,346.68) ..
				(654.93,349.83) .. controls (652.46,352.98) and (647.9,353.52) ..
				(644.75,351.04) -- cycle ;
				\draw  [fill={rgb, 255:red, 0; green, 0; blue, 0 }  ,fill opacity=1 ]
				(684.4,373.02) -- (683.02,361.43) -- (671.44,362.82) -- (672.82,374.4) -- cycle
				;
				\draw  [fill={rgb, 255:red, 255; green, 255; blue, 255 }  ,fill opacity=1 ]
				(591.51,309.15) .. controls (588.36,306.67) and (587.82,302.11) ..
				(590.3,298.96) .. controls (592.77,295.82) and (597.33,295.27) ..
				(600.48,297.75) .. controls (603.62,300.23) and (604.17,304.78) ..
				(601.69,307.93) .. controls (599.21,311.08) and (594.66,311.62) ..
				(591.51,309.15) -- cycle ;
				\draw  [fill={rgb, 255:red, 255; green, 255; blue, 255 }  ,fill opacity=1 ]
				(625.83,265.53) .. controls (622.68,263.05) and (622.14,258.5) ..
				(624.62,255.35) .. controls (627.09,252.2) and (631.65,251.66) .. (634.8,254.14)
				.. controls (637.94,256.61) and (638.49,261.17) .. (636.01,264.32) .. controls
				(633.54,267.46) and (628.98,268.01) .. (625.83,265.53) -- cycle ;
				\draw  [fill={rgb, 255:red, 255; green, 255; blue, 255 }  ,fill opacity=1 ]
				(557.19,352.76) .. controls (554.04,350.28) and (553.5,345.73) ..
				(555.97,342.58) .. controls (558.45,339.43) and (563.01,338.89) ..
				(566.15,341.37) .. controls (569.3,343.84) and (569.84,348.4) .. (567.37,351.55)
				.. controls (564.89,354.69) and (560.33,355.24) .. (557.19,352.76) -- cycle ;
				\draw  [fill={rgb, 255:red, 255; green, 255; blue, 255 }  ,fill opacity=1 ]
				(79,213) .. controls (79,209) and (82.25,205.75) .. (86.25,205.75) .. controls
				(90.25,205.75) and (93.5,209) .. (93.5,213) .. controls (93.5,217) and
				(90.25,220.25) .. (86.25,220.25) .. controls (82.25,220.25) and (79,217) ..
				(79,213) -- cycle ;
				\draw  [fill={rgb, 255:red, 255; green, 255; blue, 255 }  ,fill opacity=1 ]
				(123.75,213) .. controls (123.75,209) and (127,205.75) .. (131,205.75) ..
				controls (135,205.75) and (138.25,209) .. (138.25,213) .. controls (138.25,217)
				and (135,220.25) .. (131,220.25) .. controls (127,220.25) and (123.75,217) ..
				(123.75,213) -- cycle ;
				\draw  [fill={rgb, 255:red, 255; green, 255; blue, 255 }  ,fill opacity=1 ]
				(168.5,213) .. controls (168.5,209) and (171.75,205.75) .. (175.75,205.75) ..
				controls (179.75,205.75) and (183,209) .. (183,213) .. controls (183,217) and
				(179.75,220.25) .. (175.75,220.25) .. controls (171.75,220.25) and (168.5,217)
				.. (168.5,213) -- cycle ;
				
				\draw  [dash pattern={on 4.5pt off 4.5pt}] (282.5,292) .. controls (296.5,253) and (343.5,193) .. (420.5,271) .. controls (497.5,349) and (445.5,383) .. (413.5,401) .. controls (381.5,419) and (345.5,425) .. (308.5,393) .. controls (271.5,361) and (268.5,331) .. (282.5,292) -- cycle ;
				\draw  [dash pattern={on 4.5pt off 4.5pt}] (337.5,155) .. controls (324.5,105) and (401,15) .. (491.75,13) .. controls (582.5,11) and (655.5,101) .. (641.5,152) .. controls (627.5,203) and (548.5,197) .. (493.5,193) .. controls (438.5,189) and (350.5,205) .. (337.5,155) -- cycle ;
				\draw  [dash pattern={on 4.5pt off 4.5pt}] (540.5,280) .. controls (586.5,213) and (637.5,218) .. (678.5,259) .. controls (719.5,300) and (732.5,332) .. (697.5,379) .. controls (662.5,426) and (586.5,407) .. (563.5,391) .. controls (540.5,375) and (494.5,347) .. (540.5,280) -- cycle ;

				\draw (88.25,223.65) node [anchor=north west][inner sep=0.75pt]    {$u$};
				\draw (133,223.65) node [anchor=north west][inner sep=0.75pt]    {$v$};
				\draw (177.75,223.65) node [anchor=north west][inner sep=0.75pt]    {$w$};
				\draw (346.17,364.59) node [anchor=north west][inner sep=0.75pt]    {$u$};
				\draw (504.25,53.65) node [anchor=north west][inner sep=0.75pt]    {$v$};
				\draw (646,359.44) node [anchor=north west][inner sep=0.75pt]    {$w$};
				\draw (122,277.4) node [anchor=north west][inner sep=0.75pt]    {\large $G$};
				\draw (445,404.4) node [anchor=north west][inner sep=0.75pt]    {\large
					$G'$};
				\draw (254,272.4) node [anchor=north west][inner sep=0.75pt]    {$V_{u}$};
				\draw (478,-15) node [anchor=north west][inner sep=0.75pt]    {$V_{v}$};
				\draw (712,270) node [anchor=north west][inner sep=0.75pt]    {$V_{w}$};

			\end{tikzpicture}
			\caption{The graph $G'=f_{\text{3-col},\p}(G)$, where $G$ is a path on three vertices.
				The red vertices are the roots. The thick edges are the connecting edges. The colors of the linking components represent their number.}
			\label{fig:cubicsubgraph_reduction}
		\end{figure}

		Let us prove that $G$ is 3-colorable if and only if $G'$ has a
		cubic subgraph. First, assume that $G$ is 3-colorable, and let $c : V(G)
		\to \{1,2,3\}$ be a proper 3-coloring of~$G$. Let us describe a
		cubic subgraph of $G'$. In fact, it will be an induced subgraph:
		for this reason, we just describe the subset $S \subseteq V(G')$
		which defines it.
		For each vertex $u \in V(G)$, we put in $S$ all the
                vertices from its tag, and all
		the vertices from its $c(u)$-th linking component. Let us prove that the subgraph
		of $G'$ induced by $S$ is cubic. We already know that all the
		internal vertices of the tags and linking components have degree~3. In
		$G'[S]$, each root also has degree~3, because its tag and exactly
		one of its linking components is included in $S$. Finally, for every $u \in
		V(G)$, let us show that all the terminals of the $c(u)$-th linking component of
		$u$ also have degree~3. Let $v$ be a neighbor of $u$ in~$G$. The vertex
		$L_u[c(u),v]$ has already two neighbors inside the $c(u)$-th linking component
		of $u$. Moreover, since $c$ is a proper coloring of~$G$, we have $c(u) \neq
		c(v)$. By definition of $G'$, there is an edge between
		$L_u[c(u),v]$ and $L_v[c(v),u]$, so $L_u[c(u),v]$ has indeed degree~3 in
		$G'[S]$, which is thus a cubic (induced) subgraph
		of~$G'$.
		
		Conversely, assume that $G'$ has a cubic subgraph, and let us
		prove that $G$ is 3-colorable. Let $H$ be a cubic subgraph of $G'$. We will need
		the following lemma.
		
		\begin{lemma}
			\label{lem:cubic_subgraph main lemma}
			For every vertex $u \in V(G)$, $H$ contains all the vertices and edges of the
			tag and exactly one linking component associated to it, and no other vertices in the two other linking components sharing the same root.
		\end{lemma}
		
		Lemma~\ref{lem:cubic_subgraph main lemma} will itself be proved using the
		following claim.
		
		\begin{claim}
			\label{claim:cubic_subgraph} The following holds.
			\begin{enumerate}[(1)]
				\item If $V(H)$ contains an internal vertex of a linking component (resp.\ of
				a tag), then $H$ contains all the vertices and edges of this linking component
				(resp.\ of this tag).\label{5.7.1}
				
				\item If $V(H)$ contains a root vertex, then $H$ contains all the vertices
				and edges of the tag adjacent to it, and all the vertices and edges of exactly
				one linking component adjacent to it. \label{5.7.2}
				
				\item If $V(H)$ contains a terminal vertex of some linking component, then
				$H$ contains all the vertices and edges of this linking component. \label{5.7.3}
			\end{enumerate}
		\end{claim}
		
		\begin{proof}[Proof of Claim~\ref{claim:cubic_subgraph}.]
			\begin{enumerate}[(1)]
				\item Assume that $V(H)$ contains an internal vertex $t$ of some linking
				component. Since $t$ has degree~3, and $H$ is cubic,  $H$ contains all the
				three edges adjacent to $t$, and all its three neighbors.
				Since some of the neighbors of $t$ are also internal vertices, we can apply
				the same argument for them.
				Finally, we obtain that $H$ contains all the internal vertices of this
				linking component, all the edges adjacent to them, and all their neighbors.
				Thus, it contains all the vertices and edges of the linking component.
				The same proof also holds for the tags.
				
				\item Assume that $V(H)$ contains a root vertex $t$. This vertex $t$ has two neighbors in every adjacent linking component, and one neighbor in the tag. Since $t$ has degree~3 in $H$,  $H$ should contain exactly one linking component adjacent to it, and the tag (by \eqref{5.7.1}, all the linking component and the tag are in $H$).
				
				\item Assume that $V(H)$ contains a terminal vertex $t$ of some linking component. Since $t$ is adjacent to only two connecting edges, $H$ contains at least one edge adjacent to $t$ which is in the linking component. It contains also the other endpoint of this edge, which is an internal vertex, so by \eqref{5.7.1} it contains the whole linking component.\qedhere
			\end{enumerate}
		\end{proof}
		
		\begin{proof}[Proof of Lemma~\ref{lem:cubic_subgraph main lemma}]
			By \eqref{5.7.1}, \eqref{5.7.2}, and
                        \eqref{5.7.3} of
                        Claim~\ref{claim:cubic_subgraph}, for every $u
                        \in V(G)$, if \mbox{$H \cap V_u \neq
                          \emptyset$}, then $H$ contains a whole
                        linking component associated to $u$, it
                        contains also the whole tag sharing the same
                        root, and it has no other vertex from the two
                        others linking components sharing the same
                        root. Thus, to prove
                        Lemma~\ref{lem:cubic_subgraph main lemma}, we
                        just have to show that $H \cap V_u \neq
                        \emptyset$ for all $u \in V(G)$. Since $G$ is
                        connected, we just have to show that $H \cap
                        V_u \neq \emptyset$ implies $H \cap V_v \neq
                        \emptyset$ for every vertex neighbor $v$ of
                        $u$ in $G$. So let $u \in V(G)$ such that $H \cap V_u \neq \emptyset$. Let $v$ be a neighbor of $u$, and let $i \in \{1,2,3\}$ such that $H$ contains a vertex in the $i$-th linking component associated to $u$. Then, $L_u[i,v] \in V(H)$, and since it has degree~3 in~$H$ but only two neighbors in this linking component, then $H$ contains some connecting edge adjacent to $L_u[i,v]$, which has $L_v[j,u]$ as the other endpoint, for some $j \in \{1,2,3\} \setminus \{i\}$. So $H \cap V_j \neq \emptyset$, which concludes the proof.
		\end{proof}

		Using Lemma~\ref{lem:cubic_subgraph main lemma}, we
                will show that $G$ is 3-colorable. By
                Lemma~\ref{lem:cubic_subgraph main lemma}, for every
                $u \in V(G)$, $H$ contains all the vertices and edges
                of a unique linking component associated to it. Let
                $c(u)$ be the number of this linking component. We
                will show that $c$ defines a proper 3-coloring of
                $G$. Let $u,v$ be two neighbors in~$G$. Since
                $L_u[c(u),v]$ and $L_v[c(v),u]$ both have only two
                neighbors in their linking component, and have
                degree~3 in $H$, then $H$ contains a connecting edge
                between them. In particular, by definition of~$H$, we
                have $c(u) \neq c(v)$. So $c$ is indeed a proper
                coloring.

                \smallskip
		
		To finish the proof of
                Theorem~\ref{thm:cubic_subgraph}, we just have to show
                that the conditions of the definition of a local
                reduction are satisfied. By definition of the
                identifiers of $G'$, \cref{l1} is satisfied. For
                \cref{l2}, we just proved that it holds as
                well. Conditions \cref{l3a}, \cref{l3b}, \cref{l3e}
                and \cref{l3f} hold by definition. For \cref{l3c},
                it is also true by definition with $\alpha$ being a
                constant function taking the value 42 (because, by
                assumption $G$ has maximum degree~4, and just by
                counting the maximum size of $C_u$). Finally,
                \cref{l3c} holds as well because if $u \neq v$, we have $C_u \neq C_v$.
	\end{proof}
	
	\begin{remark}
		The proof of Theorem~\ref{thm:cubic_subgraph} shows that $G'$ has a cubic subgraph if and only if it has a cubic \emph{induced} subgraph. Thus, we obtain the same $\Omega(n/\log n)$ lower bound to certify that a graph does not have a cubic induced subgraph, even for graphs of maximum degree~7.
	\end{remark}
	
\subsection{Acyclic partition}
        
	\begin{theorem}
		\label{thm:acyclic_partition}
		For any fixed $k \geqslant 3$, the property that  the
                vertex set of a given graph $G$ cannot be partitioned into $k$
                induced forests has local complexity  $\Omega(n/\log n)$.
	\end{theorem}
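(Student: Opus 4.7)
The plan is to give a local reduction from $k$-colorability to the property $\p'$ defined as ``the vertex set can be partitioned into $k$ induced forests'' (the complement of the property in the statement), and then apply \cref{cor:reduction3col}(1). The reduction is a single global operation: the join with $K_k$.

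Given an $n$-vertex input graph $G$ with identifiers in $\{1,\ldots,n\}$, I set $G':=G\vee K_k$, i.e.\ I add $k$ new pairwise adjacent vertices $v_1^\star,\ldots,v_k^\star$ with identifiers $n+1,\ldots,n+k$, each adjacent to every vertex of $G$. Then $|V(G')|=n+k=O(n)$. To verify the equivalence ``$G$ is $k$-colorable iff $G'$ has vertex arboricity at most $k$'', the forward direction is immediate: a proper $k$-coloring $c$ of $G$ gives the partition $V_i=c^{-1}(i)\cup\{v_i^\star\}$ of $V(G')$, with each $V_i$ inducing a star centered at $v_i^\star$, hence a tree. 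For the converse, the elementary bound $\chi(H)\le 2\,a(H)$ (obtained by 2-coloring each induced forest in an arboricity partition) together with the identity $\chi(G\vee K_k)=\chi(G)+k$ gives that if $a(G')\le k$, then $\chi(G)+k=\chi(G')\le 2k$, hence $\chi(G)\le k$.

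For \cref{l3}, I take $C_u:=V(G')$ for every $u\in V(G)$ (so each vertex receives the certificates of all vertices of $G'$, using the known value of $n$ as in the proof of \cref{thm:main theorem local reduction}), and distribute the simulation of $V(K_k)$ across $V(G)$ using identifiers: $V_u:=\{u,v_{\mathrm{id}(u)}^\star\}$ when $\mathrm{id}(u)\le k$, and $V_u:=\{u\}$ otherwise. Since the identifiers exhaust $\{1,\ldots,n\}$ (for $n\geq k$, the other cases being absorbed in the $\Omega$ notation), $\bigcup_u V_u=V(G')$. Conditions \cref{l3a}--\cref{l3f} follow routinely: for every $t\in V(G')$, the set $\{v\in V(G):t\in C_v\}=V(G)$ induces the connected graph $G$; $N_{G'}[t]\subseteq V(G')=\bigcup_{v\in N_G[u]}C_v$; and $V_u$, $C_u$, together with the subgraph of $G'$ formed by $V_u$ and its incident edges, depend only on $\mathrm{id}(u)$, $n$, and the identifiers in $N_G[u]$.

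This yields a local reduction with local expansion $\alpha(n)=O(n)$ and global expansion $\beta(n)=O(n)$, i.e.\ $\delta=\gamma=1$ in \cref{cor:reduction3col}(1), giving a lower bound of $\Omega(n^{(2-1)/1}/\log n)=\Omega(n/\log n)$ on the local complexity of $\np'$, which is exactly the property in the statement. The key insight is that the elementary inequality $\chi\le 2\,a$ already converts the chromatic-number constraint into an arboricity constraint via a single global join with $K_k$, bypassing the need for intricate edge-by-edge gadgets; the main subtlety is to satisfy \cref{l3f} despite the global nature of $G'$, which is handled by spreading the simulation of the $K_k$ vertices across $V(G)$ via identifiers, so that each $V_u$ remains small and locally determined.
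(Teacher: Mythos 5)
Your proof is correct and takes essentially the same approach as the paper's: the same construction $G' = G\vee K_k$ (the paper phrases it as ``adding a set $U$ of $k$ universal vertices''), the same forward direction via stars, and the same converse reasoning (the paper observes that in any proper $2k$-coloring of $G'$ each universal vertex occupies its own color class, which is exactly the combination of $\chi(G')\le 2a(G')$ and $\chi(G\vee K_k)=\chi(G)+k$ that you make explicit). The only cosmetic difference is your choice $V_u=\{u,v_{\mathrm{id}(u)}^\star\}$ for $\mathrm{id}(u)\le k$ versus the paper's $V_u=U\cup\{u\}$ for all $u$, which is immaterial since both satisfy \cref{l3} with $C_u=V(G')$, local expansion $O(n)$, and global expansion $O(n)$.
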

	
	\begin{proof}
		Let $\p$ be the property of having a partition of the
                vertex set into $k$ acyclic induced subgraphs (i.e.,
                $k$ induced forests). We will show that there is a local reduction from $k$-colorability to $\p$ with local expansion~$O(n)$ and global expansion~$O(n)$, and the result will then follow from Corollary~\ref{cor:reduction3col}.
		Let $G$ be an $n$-vertex graph having unique identifiers in~$\{1, \ldots, n\}$. The graph $G' = f_{k-col,\p}(G)$ is obtained by adding a set $U$ of $k$ universal vertices to~$G$. For every $u \in V(G)$, we define $V_u:=U \cup \{u\}$ and $C_u:=V(G')$.
		
		We only prove that Property~\cref{l2} of the definition of a
                local reduction  is satisfied, since the other
                properties are straightforward to verify. Assume that
                $G$ is $k$-colorable. Then, we can partition the
                vertex set of $G'$ into $k$ acyclic induced subgraphs,
                each of which corresponds to one color class together
                with one of the universal vertices in~$U$. Indeed, by
                doing so, each resulting induced subgraph is a star, which is acyclic.
		Conversely, assume that $V(G')$ can be partitioned
                into $k$ acyclic induced subgraphs. Then, $G'$ is
                $2k$-colorable (since each of these acyclic subgraphs
                is $2$-colorable). Consider any proper $2k$-coloring
                of $G'$. Then, its restriction to $G$ is a proper $k$-coloring of $G$, since each vertex in~$U$ must be alone in its color class (because it is a universal vertex), so $G$ is $k$-colorable.
	\end{proof}

	\subsection{Monochromatic triangles}

	\begin{theorem}
		\label{thm:monochromatic_triangle}
		The property of having no 2-edge-coloring without monochromatic triangles has local
                complexity $\Omega(n/\log n)$, even in graphs of maximum degree~40.
	\end{theorem}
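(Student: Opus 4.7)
The plan is to apply Corollary~\ref{cor:reduction3col}(2) by constructing a local reduction from 3-colorability in graphs of maximum degree $4 = 3 + \lceil\sqrt{3}\rceil - 1$ to the property $\p$ of admitting a 2-edge-coloring with no monochromatic triangle. Given $G$ with $n$ vertices and unique identifiers in $\{1,\dots,n\}$, I would produce $G' = f_{\text{3-col},\p}(G)$ by replacing each vertex $v$ of $G$ with a constant-size \emph{color gadget} $T_v$, and each edge $uv$ of $G$ with a constant-size \emph{disagreement gadget} $E_{uv}$ linking $T_u$ and $T_v$.

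A natural choice for the color gadget is a single triangle $T_v$: its six non-monochromatic 2-edge-colorings partition, up to swapping the two colors, into three \emph{states} indexed by which edge carries the minority color, and such a state will play the role of a color of $v$ in $G$. The disagreement gadget $E_{uv}$ should consist of a bounded number of additional triangles glued to carefully chosen edges of $T_u$ and $T_v$, together with a few auxiliary vertices, arranged so that the 2-edge-colorings of $E_{uv}$ without monochromatic triangles are precisely those extending pairs of states with $\mathrm{state}(T_u) \neq \mathrm{state}(T_v)$. Choosing the gadgets small enough, $G'$ has $O(n)$ vertices and maximum degree at most $40$, and each vertex of $G'$ can be given an identifier on $O(\log n)$ bits encoding which gadget it belongs to and its position inside.

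Setting $V_u := C_u := V(T_u) \cup \bigcup_{v \in N_G(u)} V(E_{uv})$, the local-reduction properties (R1), (R3a), (R3b), (R3e), and (R3f) are immediate from the construction; (R3c) holds with $\alpha = O(1)$ because $G$ has bounded degree and the gadgets are of constant size; and (R3d) holds because $C_u \cap C_{u'} \neq \emptyset$ forces $u$ and $u'$ to be equal or adjacent in $G$. Property (R2) reduces to checking that a proper 3-coloring of $G$ lifts to a valid 2-edge-coloring of $G'$ by selecting the corresponding state on each $T_v$ and extending through each $E_{uv}$, and conversely that any valid 2-edge-coloring of $G'$ yields a proper 3-coloring of $G$ via the states of the $T_v$'s. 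Corollary~\ref{cor:reduction3col}(2) with $\delta = 0$ and $\gamma = 1$ then gives the $\Omega(n/\log n)$ lower bound. The main obstacle is the explicit design of $E_{uv}$: it must encode a symmetric ternary ``not equal'' constraint on three states using only monochromatic-triangle avoidance, while contributing a small enough amount to the degree of each vertex of $T_u$ and $T_v$ to keep the final maximum degree at most $40$. Once such a gadget is exhibited and the correspondence between its monochromatic-triangle-free 2-edge-colorings and pairs of distinct states is verified by case analysis, the rest of the proof is routine.
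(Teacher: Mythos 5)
Your outline identifies the correct meta-strategy --- a local reduction of constant local expansion from 3-colorability in graphs of maximum degree~4, applied via Corollary~\ref{cor:reduction3col}(2) --- but as written it is not a proof: the entire mathematical content lies in constructing the disagreement gadget $E_{uv}$, which you leave unbuilt. You conclude that ``once such a gadget is exhibited \ldots the rest of the proof is routine,'' which is true, but that gadget \emph{is} the theorem; asserting it should exist does not establish the lower bound.

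Moreover, the sketch glosses over a genuine obstruction that explains why the paper's construction is considerably heavier than one triangle per vertex. You define the ``state'' of $T_v$ as the position of its minority-color edge, well-defined only up to swapping the two colors. A gadget attached to $T_u$ and $T_v$, however, sees actual edge colors, not equivalence classes: for a fixed pair of states $(i,j)$ there are four color-orientation combinations (two per triangle), and the gadget's extendability must agree on all four. Nothing in the sketch explains why a constant-size gadget with this four-fold invariance exists, and it is far from obvious that one does. The paper instead introduces an explicit synchronization mechanism: an anchor triangle $T^0_u$ for each vertex and an edge $e_{uv}$ for each edge, complete to both $T^0_u$ and $T^0_v$, so that $T^0_u\cup e_{uv}$ induces a $K_5$. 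The key lemma on monochromatic-triangle-free $2$-edge-colorings of $K_5$ (Lemma~\ref{lem:coloring K5}) shows that the edge disjoint from a triangle of the $K_5$ must carry that triangle's majority (``main'') color; by connectivity this forces all the $T^0_u$ to share a common main color. Only with that global orientation fixed do the additional per-vertex triangles $T^i_u$, $T^{ij}_u$, $T^{123}_u$ and the per-edge triangles $T^i_{uv}$ encode the ternary ``not-equal'' constraint, by comparing main colors against the shared reference. That orientation-fixing step is exactly what your single-triangle-per-vertex plan is missing, and without it the rest of the plan does not go through.
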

	
	\begin{proof}
		Let $\p$ be the property of the existence of a $2$-edge-coloring without any monochromatic triangle. We will prove that there exists a local reduction of local expansion~$O(1)$ from 3-colorability in graphs of maximum degree~$4$ to $\p$. The result will then follow from Corollary~\ref{cor:reduction3col}.
		
		Let $G$ be a graph of maximum degree~at most~$4$, with $n$ vertices having unique identifiers in $\{1, \ldots, n\}$.
		Let us describe the graph $G' = f_{\text{3-col},
                  \p}(G)$. This graph has two parts, denoted by $G_0$
                and $G_1$, which are the following. $G_0$ is obtained
                by taking a triangle~$T^0_u$ for each vertex~$u$
                of~$G$, and creating a new edge $e_{uv}$ that is
                complete to both $T^0_u$ and $T^0_v$ for every edge
                $uv$ of $G$. See Figure~\ref{fig:G0_triangles} for an example.

		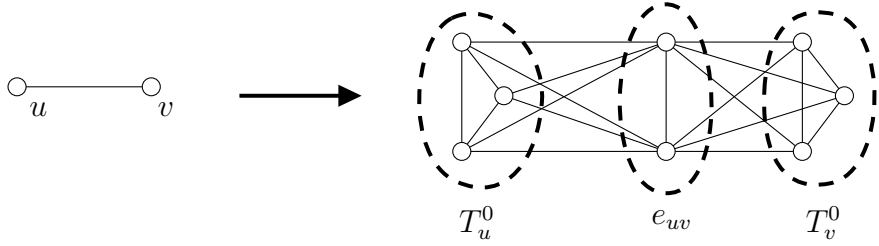
\begin{figure}[h!]
			\centering
			
			\begin{tikzpicture}[x=0.75pt,y=0.75pt,yscale=-1,xscale=1]
				
				\draw    (479.5,97) -- (411.5,152) ;
				\draw    (479.5,97) -- (411.5,97) ;
				\draw    (479.5,152) -- (411.5,152) ;
				\draw    (500.5,124) -- (411.5,97) ;
				\draw    (500.5,124) -- (411.5,152) ;
				\draw    (479.5,152) -- (411.5,97) ;
				\draw    (411.5,97) -- (309.5,97) ;
				\draw    (411.5,152) -- (309.5,97) ;
				\draw    (411.5,152) -- (330.5,124) ;
				\draw    (411.5,97) -- (330.5,124) ;
				\draw    (411.5,97) -- (309.5,152) ;
				\draw    (411.5,152) -- (309.5,152) ;
				\draw    (330.5,124) -- (309.5,152) ;
				\draw    (309.5,97) -- (330.5,124) ;
				\draw    (479.5,152) -- (500.5,124) ;
				\draw    (87.5,119.5) -- (154.5,119.5) ;
				\draw  [fill={rgb, 255:red, 255; green, 255; blue, 255 }  ,fill opacity=1 ] (83,119.5) .. controls (83,117.01) and (85.01,115) .. (87.5,115) .. controls (89.99,115) and (92,117.01) .. (92,119.5) .. controls (92,121.99) and (89.99,124) .. (87.5,124) .. controls (85.01,124) and (83,121.99) .. (83,119.5) -- cycle ;
				\draw  [fill={rgb, 255:red, 255; green, 255; blue, 255 }  ,fill opacity=1 ] (150,119.5) .. controls (150,117.01) and (152.01,115) .. (154.5,115) .. controls (156.99,115) and (159,117.01) .. (159,119.5) .. controls (159,121.99) and (156.99,124) .. (154.5,124) .. controls (152.01,124) and (150,121.99) .. (150,119.5) -- cycle ;
				\draw    (309.5,97) -- (309.5,152) ;
				\draw  [fill={rgb, 255:red, 255; green, 255; blue, 255 }  ,fill opacity=1 ] (305,152) .. controls (305,149.51) and (307.01,147.5) .. (309.5,147.5) .. controls (311.99,147.5) and (314,149.51) .. (314,152) .. controls (314,154.49) and (311.99,156.5) .. (309.5,156.5) .. controls (307.01,156.5) and (305,154.49) .. (305,152) -- cycle ;
				\draw  [fill={rgb, 255:red, 255; green, 255; blue, 255 }  ,fill opacity=1 ] (326,124) .. controls (326,121.51) and (328.01,119.5) .. (330.5,119.5) .. controls (332.99,119.5) and (335,121.51) .. (335,124) .. controls (335,126.49) and (332.99,128.5) .. (330.5,128.5) .. controls (328.01,128.5) and (326,126.49) .. (326,124) -- cycle ;
				\draw  [fill={rgb, 255:red, 255; green, 255; blue, 255 }  ,fill opacity=1 ] (305,97) .. controls (305,94.51) and (307.01,92.5) .. (309.5,92.5) .. controls (311.99,92.5) and (314,94.51) .. (314,97) .. controls (314,99.49) and (311.99,101.5) .. (309.5,101.5) .. controls (307.01,101.5) and (305,99.49) .. (305,97) -- cycle ;
				\draw [line width=1.5]    (198.5,124) -- (255.5,124) ;
				\draw [shift={(259.5,124)}, rotate = 180] [fill={rgb, 255:red, 0; green, 0; blue, 0 }  ][line width=0.08]  [draw opacity=0] (11.61,-5.58) -- (0,0) -- (11.61,5.58) -- cycle    ;
				\draw    (411.5,97) -- (411.5,152) ;
				\draw    (479.5,97) -- (479.5,152) ;
				\draw    (479.5,97) -- (500.5,124) ;
				\draw  [fill={rgb, 255:red, 255; green, 255; blue, 255 }  ,fill opacity=1 ] (407,97) .. controls (407,94.51) and (409.01,92.5) .. (411.5,92.5) .. controls (413.99,92.5) and (416,94.51) .. (416,97) .. controls (416,99.49) and (413.99,101.5) .. (411.5,101.5) .. controls (409.01,101.5) and (407,99.49) .. (407,97) -- cycle ;
				\draw  [fill={rgb, 255:red, 255; green, 255; blue, 255 }  ,fill opacity=1 ] (407,152) .. controls (407,149.51) and (409.01,147.5) .. (411.5,147.5) .. controls (413.99,147.5) and (416,149.51) .. (416,152) .. controls (416,154.49) and (413.99,156.5) .. (411.5,156.5) .. controls (409.01,156.5) and (407,154.49) .. (407,152) -- cycle ;
				\draw  [fill={rgb, 255:red, 255; green, 255; blue, 255 }  ,fill opacity=1 ] (475,97) .. controls (475,94.51) and (477.01,92.5) .. (479.5,92.5) .. controls (481.99,92.5) and (484,94.51) .. (484,97) .. controls (484,99.49) and (481.99,101.5) .. (479.5,101.5) .. controls (477.01,101.5) and (475,99.49) .. (475,97) -- cycle ;
				\draw  [fill={rgb, 255:red, 255; green, 255; blue, 255 }  ,fill opacity=1 ] (475,152) .. controls (475,149.51) and (477.01,147.5) .. (479.5,147.5) .. controls (481.99,147.5) and (484,149.51) .. (484,152) .. controls (484,154.49) and (481.99,156.5) .. (479.5,156.5) .. controls (477.01,156.5) and (475,154.49) .. (475,152) -- cycle ;
				\draw  [fill={rgb, 255:red, 255; green, 255; blue, 255 }  ,fill opacity=1 ] (496,124) .. controls (496,121.51) and (498.01,119.5) .. (500.5,119.5) .. controls (502.99,119.5) and (505,121.51) .. (505,124) .. controls (505,126.49) and (502.99,128.5) .. (500.5,128.5) .. controls (498.01,128.5) and (496,126.49) .. (496,124) -- cycle ;
				\draw  [dash pattern={on 5.63pt off 4.5pt}][line width=1.5]  (304.5,85) .. controls (324.5,75) and (351.5,99) .. (349.5,127) .. controls (347.5,155) and (332.5,179) .. (306.5,167) .. controls (280.5,155) and (284.5,95) .. (304.5,85) -- cycle ;
				\draw  [dash pattern={on 5.63pt off 4.5pt}][line width=1.5]  (411,78) .. controls (439.5,77) and (444,174) .. (411.5,173) .. controls (379,172) and (382.5,79) .. (411,78) -- cycle ;
				\draw  [dash pattern={on 5.63pt off 4.5pt}][line width=1.5]  (460.5,122) .. controls (461.5,100) and (468.5,77) .. (495.5,83) .. controls (522.5,89) and (522.5,162) .. (493.5,168) .. controls (464.5,174) and (459.5,144) .. (460.5,122) -- cycle ;
				
				\draw (92.5,125.9) node [anchor=north west][inner sep=0.75pt]    {$u$};
				\draw (156.5,125.9) node [anchor=north west][inner sep=0.75pt]    {$v$};
				\draw (307,178.4) node [anchor=north west][inner sep=0.75pt]    {$T^0_{u}$};
				\draw (480,178.4) node [anchor=north west][inner sep=0.75pt]    {$T^0_{v}$};
				\draw (403,180.4) node [anchor=north west][inner sep=0.75pt]    {$e_{uv}$};

			\end{tikzpicture}
			
			\caption{The construction of $G_0$.}
			\label{fig:G0_triangles}
		\end{figure}
		
		Then, $G_1$ is obtained by taking seven triangles
                $T^1_u$, $T^2_u$, $T^3_u$, $T^{12}_u$, $T^{23}_u$,
                $T^{13}_u$, $T^{123}_u$ for every vertex~$u$ of~$G$,
                and adding three triangles $T^1_{uv}$, $T^2_{uv}$
                $T^3_{uv}$ for every edge $uv$ of $G$. Finally, we add the following edges in $G_1$, and between $G_0$ and $G_1$:
		\begin{itemize}
			\item for every $u \in V(G)$, we make one edge of $T^{123}_u$ complete to $T^1_u$, one complete to $T^2_u$ and one complete to $T^3_u$ (see Figure~\ref{fig:G1_triangles} for an illustration);
			\item for every $u \in V(G)$ and $1 \leqslant i < j \leqslant 3$, we make one edge of $T_u^{ij}$ complete to $T^i_u$, one complete to $T^j_u$, and one complete to $T^0_u$;
			\item for every $\{u,v\} \in E(G)$ and $i \in \{1,2,3\}$, we make one edge of $T^i_{uv}$ complete to $T^i_u$, one complete to $T^i_v$, and one complete to both $T^0_u$ and $T^0_v$.
		\end{itemize}
		
		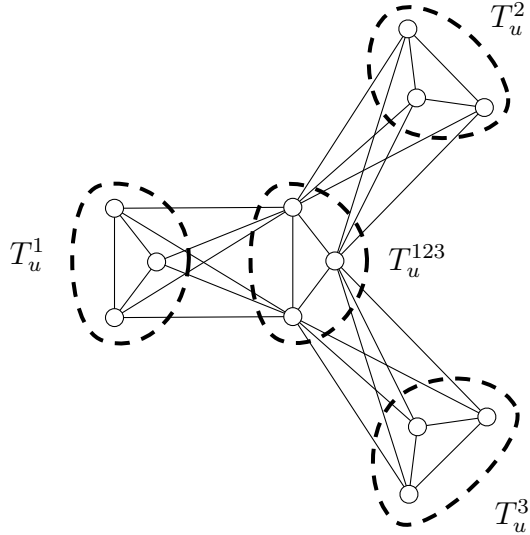
\begin{figure}[h!]
			\centering
			
			\begin{tikzpicture}[x=0.75pt,y=0.75pt,yscale=-1,xscale=1]
				
				\draw    (300.5,228) -- (358.48,317.37) ;
				\draw    (300.5,228) -- (362.72,283.43) ;
				\draw    (300.5,228) -- (397.37,278.48) ;
				\draw    (321.5,200) -- (358.48,317.37) ;
				\draw    (321.5,200) -- (362.72,283.43) ;
				\draw    (321.5,200) -- (397.37,278.48) ;
				\draw    (357.92,83.28) -- (300.5,173) ;
				\draw    (362.26,118.01) -- (300.5,173) ;
				\draw    (396.12,122.84) -- (300.5,173) ;
				\draw    (357.92,83.28) -- (321.5,200) ;
				\draw    (362.26,118.01) -- (321.5,200) ;
				\draw    (396.12,122.84) -- (321.5,200) ;
				\draw    (211.5,173.5) -- (300.5,173) ;
				\draw    (232.5,200.5) -- (300.5,173) ;
				\draw    (211.5,228.5) -- (300.5,173) ;
				\draw    (211.5,173.5) -- (300.5,228) ;
				\draw    (232.5,200.5) -- (300.5,228) ;
				\draw    (211.5,228.5) -- (300.5,228) ;
				\draw    (232.5,200.5) -- (211.5,228.5) ;
				\draw    (211.5,173.5) -- (232.5,200.5) ;
				\draw    (211.5,173.5) -- (211.5,228.5) ;
				\draw  [fill={rgb, 255:red, 255; green, 255; blue, 255 }  ,fill opacity=1 ] (207,228.5) .. controls (207,226.01) and (209.01,224) .. (211.5,224) .. controls (213.99,224) and (216,226.01) .. (216,228.5) .. controls (216,230.99) and (213.99,233) .. (211.5,233) .. controls (209.01,233) and (207,230.99) .. (207,228.5) -- cycle ;
				\draw  [fill={rgb, 255:red, 255; green, 255; blue, 255 }  ,fill opacity=1 ] (228,200.5) .. controls (228,198.01) and (230.01,196) .. (232.5,196) .. controls (234.99,196) and (237,198.01) .. (237,200.5) .. controls (237,202.99) and (234.99,205) .. (232.5,205) .. controls (230.01,205) and (228,202.99) .. (228,200.5) -- cycle ;
				\draw  [fill={rgb, 255:red, 255; green, 255; blue, 255 }  ,fill opacity=1 ] (207,173.5) .. controls (207,171.01) and (209.01,169) .. (211.5,169) .. controls (213.99,169) and (216,171.01) .. (216,173.5) .. controls (216,175.99) and (213.99,178) .. (211.5,178) .. controls (209.01,178) and (207,175.99) .. (207,173.5) -- cycle ;
				\draw    (362.26,118.01) -- (357.92,83.28) ;
				\draw    (396.12,122.84) -- (362.26,118.01) ;
				\draw    (396.12,122.84) -- (357.92,83.28) ;
				\draw  [fill={rgb, 255:red, 255; green, 255; blue, 255 }  ,fill opacity=1 ] (361.16,80.15) .. controls (362.88,81.94) and (362.83,84.79) .. (361.04,86.52) .. controls (359.26,88.24) and (356.41,88.19) .. (354.68,86.4) .. controls (352.95,84.62) and (353,81.77) .. (354.79,80.04) .. controls (356.58,78.31) and (359.43,78.36) .. (361.16,80.15) -- cycle ;
				\draw  [fill={rgb, 255:red, 255; green, 255; blue, 255 }  ,fill opacity=1 ] (365.5,114.88) .. controls (367.23,116.67) and (367.18,119.52) .. (365.39,121.24) .. controls (363.6,122.97) and (360.75,122.92) .. (359.03,121.13) .. controls (357.3,119.35) and (357.35,116.5) .. (359.14,114.77) .. controls (360.92,113.04) and (363.77,113.09) .. (365.5,114.88) -- cycle ;
				\draw  [fill={rgb, 255:red, 255; green, 255; blue, 255 }  ,fill opacity=1 ] (399.36,119.72) .. controls (401.09,121.5) and (401.04,124.35) .. (399.25,126.08) .. controls (397.46,127.81) and (394.61,127.76) .. (392.89,125.97) .. controls (391.16,124.18) and (391.21,121.33) .. (393,119.6) .. controls (394.79,117.88) and (397.63,117.93) .. (399.36,119.72) -- cycle ;
				
				\draw    (321.5,200) -- (300.5,228) ;
				\draw    (300.5,173) -- (321.5,200) ;
				\draw    (300.5,173) -- (300.5,228) ;
				\draw  [fill={rgb, 255:red, 255; green, 255; blue, 255 }  ,fill opacity=1 ] (296,228) .. controls (296,225.51) and (298.01,223.5) .. (300.5,223.5) .. controls (302.99,223.5) and (305,225.51) .. (305,228) .. controls (305,230.49) and (302.99,232.5) .. (300.5,232.5) .. controls (298.01,232.5) and (296,230.49) .. (296,228) -- cycle ;
				\draw  [fill={rgb, 255:red, 255; green, 255; blue, 255 }  ,fill opacity=1 ] (317,200) .. controls (317,197.51) and (319.01,195.5) .. (321.5,195.5) .. controls (323.99,195.5) and (326,197.51) .. (326,200) .. controls (326,202.49) and (323.99,204.5) .. (321.5,204.5) .. controls (319.01,204.5) and (317,202.49) .. (317,200) -- cycle ;
				\draw  [fill={rgb, 255:red, 255; green, 255; blue, 255 }  ,fill opacity=1 ] (296,173) .. controls (296,170.51) and (298.01,168.5) .. (300.5,168.5) .. controls (302.99,168.5) and (305,170.51) .. (305,173) .. controls (305,175.49) and (302.99,177.5) .. (300.5,177.5) .. controls (298.01,177.5) and (296,175.49) .. (296,173) -- cycle ;
				\draw    (362.72,283.43) -- (397.37,278.48) ;
				\draw    (358.48,317.37) -- (362.72,283.43) ;
				\draw    (358.48,317.37) -- (397.37,278.48) ;
				\draw  [fill={rgb, 255:red, 255; green, 255; blue, 255 }  ,fill opacity=1 ] (400.55,281.66) .. controls (398.79,283.42) and (395.95,283.42) .. (394.19,281.66) .. controls (392.43,279.9) and (392.43,277.05) .. (394.19,275.3) .. controls (395.95,273.54) and (398.79,273.54) .. (400.55,275.3) .. controls (402.31,277.05) and (402.31,279.9) .. (400.55,281.66) -- cycle ;
				\draw  [fill={rgb, 255:red, 255; green, 255; blue, 255 }  ,fill opacity=1 ] (365.9,286.61) .. controls (364.15,288.37) and (361.3,288.37) .. (359.54,286.61) .. controls (357.78,284.85) and (357.78,282) .. (359.54,280.25) .. controls (361.3,278.49) and (364.15,278.49) .. (365.9,280.25) .. controls (367.66,282) and (367.66,284.85) .. (365.9,286.61) -- cycle ;
				\draw  [fill={rgb, 255:red, 255; green, 255; blue, 255 }  ,fill opacity=1 ] (361.66,320.55) .. controls (359.9,322.31) and (357.05,322.31) .. (355.3,320.55) .. controls (353.54,318.79) and (353.54,315.95) .. (355.3,314.19) .. controls (357.05,312.43) and (359.9,312.43) .. (361.66,314.19) .. controls (363.42,315.95) and (363.42,318.79) .. (361.66,320.55) -- cycle ;
				
				\draw  [dash pattern={on 5.63pt off 4.5pt}][line width=1.5]  (248.5,201) .. controls (250.5,173) and (223.5,157) .. (204.5,162) .. controls (185.5,167) and (185.5,233) .. (205.5,240) .. controls (225.5,247) and (246.5,229) .. (248.5,201) -- cycle ;
				\draw  [dash pattern={on 5.63pt off 4.5pt}][line width=1.5]  (411.5,273) .. controls (401.5,256) and (370.5,255) .. (353.5,268) .. controls (336.5,281) and (336.5,319) .. (351.5,331) .. controls (366.5,343) and (421.5,290) .. (411.5,273) -- cycle ;
				\draw  [dash pattern={on 5.63pt off 4.5pt}][line width=1.5]  (407.5,115) .. controls (401.5,92) and (372.5,68) .. (353.5,73) .. controls (334.5,78) and (333.5,117) .. (350.5,131) .. controls (367.5,145) and (413.5,138) .. (407.5,115) -- cycle ;
				\draw  [dash pattern={on 5.63pt off 4.5pt}][line width=1.5]  (337.5,201) .. controls (338.5,177) and (312.5,157) .. (293.5,162) .. controls (274.5,167) and (274.5,233) .. (294.5,240) .. controls (314.5,247) and (336.5,225) .. (337.5,201) -- cycle ;
				
				\draw (158,187.4) node [anchor=north west][inner sep=0.75pt]    {$T_{u}^{1}$};
				\draw (398,68.4) node [anchor=north west][inner sep=0.75pt]    {$T_{u}^{2}$};
				\draw (400,318.4) node [anchor=north west][inner sep=0.75pt]    {$T_{u}^{3}$};
				\draw (347,190.4) node [anchor=north west][inner sep=0.75pt]    {$T_{u}^{123}$};

			\end{tikzpicture}
			
			\caption{Some of the edges in the construction of $G_1$.}
			\label{fig:G1_triangles}
		\end{figure}

		Finally, for every vertex $u \in G$, we define $V_u$ and $C_u$ as being both equal to the set containing the vertices in $T^0_u$, $e_{uv}$, $T^i_u$,  $T^{ij}_u$, $T^{123}_u$, $T^i_{uv}$ for some $i,j \in \{1,2,3\}$ and $v \in N(u)$.
		
		Since $G$ has maximum degree~4, the size of $C_u$ is bounded by a constant for every $u \in V(G)$, so the local reduction has expansion~$O(1)$. Moreover, a simple verification shows that every vertex of $G'$ has degree~at most~40 (and the vertices that have degree~40 are the ones in the triangles $T^0_u$ for some $u \in V(G)$ which has degree~4). The only property of the definition of local reduction that is non-trivial to verify is Property~\cref{l2}: namely, we will show that $G$ is 3-colorable if and only if $G'$ admits a $2$-edge-coloring without monochromatic triangles.
		
		So, first, assume that $G'$ admits a $2$-edge-coloring
                without monochromatic triangles, and let us show that
                $G$ is $3$-colorable. Let us say that the edges of~$G'$ are colored \emph{red} and \emph{blue}. In a
                $2$-edge-coloring of a triangle $T$ which is not
                monochromatic, we will say that the color assigned to two out of the three edges of $T$ is the \emph{main color} of~$T$. We start by proving the following lemma:

		\begin{lemma}
			\label{lem:coloring K5}
			Let us consider a clique $K_5$ on $5$ vertices, denoted by $s,t,u,v,w$. In any $2$-edge-coloring of $K_5$ without any monochromatic triangle, the color of the edge $st$ is the same as the main color of the triangle $uvw$. Conversely, any $2$-edge-coloring of the edge $st$ and of the triangle $uvw$ which makes $uvw$ non-monochromatic and which satisfies the previous condition can be completed into a $2$-edge-coloring of $K_5$ without any monochromatic triangle.
		\end{lemma}
		
		\begin{proof}
			Consider a $2$-edge-coloring of $K_5$ without
                        any monochromatic triangle, with colors red
                        and blue. By symmetry, assume that the main
                        color of $uvw$ is blue, and that $uv$ is blue,
                        $uw$ is blue, and $vw$ is red. Note that the
                        blue-degree of every vertex is at most two
                        (since otherwise there would be a
                        monochromatic triangle), and similarly the
                        red-degree of every vertex is at most two. As
                        $K_5$ is 4-regular, the blue and red subgraphs
                        are both 5-cycles. It remains to observe that
                        any 5-cycle of $K_5$ containing the edges $uv$ and
                        $uw$ also contains the edge $st$.
			See Figure~\ref{fig:coloring K5} for an illustration.
		\end{proof}
		
		\begin{figure}[h!]
			\centering
				\scalebox{0.8}{
					\begin{tikzpicture}[x=0.75pt,y=0.75pt,yscale=-1,xscale=1]
						
						\draw [color={rgb, 255:red, 208; green, 2; blue, 27 }  ,draw opacity=1 ][line width=3]    (122.5,172.5) -- (163.5,65.5) ;
						\draw [color={rgb, 255:red, 208; green, 2; blue, 27 }  ,draw opacity=1 ][line width=3]    (204.5,172.5) -- (163.5,65.5) ;
						\draw [color={rgb, 255:red, 208; green, 2; blue, 27 }  ,draw opacity=1 ][line width=3]    (103.5,109.5) -- (223.5,109.5) ;
						\draw    (103.5,109.5) -- (204.5,172.5) ;
						\draw    (122.5,172.5) -- (223.5,109.5) ;
						\draw [color={rgb, 255:red, 74; green, 144; blue, 226 }  ,draw opacity=1 ][line width=3]    (122.5,172.5) -- (204.5,172.5) ;
						\draw    (103.5,109.5) -- (122.5,172.5) ;
						\draw [color={rgb, 255:red, 74; green, 144; blue, 226 }  ,draw opacity=1 ][line width=3]    (103.5,109.5) -- (163.5,65.5) ;
						\draw [color={rgb, 255:red, 74; green, 144; blue, 226 }  ,draw opacity=1 ][line width=3]    (163.5,65.5) -- (223.5,109.5) ;
						\draw    (204.5,172.5) -- (223.5,109.5) ;
						\draw  [fill={rgb, 255:red, 255; green, 255; blue, 255 }  ,fill opacity=1 ] (98,109.5) .. controls (98,106.46) and (100.46,104) .. (103.5,104) .. controls (106.54,104) and (109,106.46) .. (109,109.5) .. controls (109,112.54) and (106.54,115) .. (103.5,115) .. controls (100.46,115) and (98,112.54) .. (98,109.5) -- cycle ;
						\draw  [fill={rgb, 255:red, 255; green, 255; blue, 255 }  ,fill opacity=1 ] (218,109.5) .. controls (218,106.46) and (220.46,104) .. (223.5,104) .. controls (226.54,104) and (229,106.46) .. (229,109.5) .. controls (229,112.54) and (226.54,115) .. (223.5,115) .. controls (220.46,115) and (218,112.54) .. (218,109.5) -- cycle ;
						\draw  [fill={rgb, 255:red, 255; green, 255; blue, 255 }  ,fill opacity=1 ] (117,172.5) .. controls (117,169.46) and (119.46,167) .. (122.5,167) .. controls (125.54,167) and (128,169.46) .. (128,172.5) .. controls (128,175.54) and (125.54,178) .. (122.5,178) .. controls (119.46,178) and (117,175.54) .. (117,172.5) -- cycle ;
						\draw  [fill={rgb, 255:red, 255; green, 255; blue, 255 }  ,fill opacity=1 ] (199,172.5) .. controls (199,169.46) and (201.46,167) .. (204.5,167) .. controls (207.54,167) and (210,169.46) .. (210,172.5) .. controls (210,175.54) and (207.54,178) .. (204.5,178) .. controls (201.46,178) and (199,175.54) .. (199,172.5) -- cycle ;
						\draw  [fill={rgb, 255:red, 255; green, 255; blue, 255 }  ,fill opacity=1 ] (158,65.5) .. controls (158,62.46) and (160.46,60) .. (163.5,60) .. controls (166.54,60) and (169,62.46) .. (169,65.5) .. controls (169,68.54) and (166.54,71) .. (163.5,71) .. controls (160.46,71) and (158,68.54) .. (158,65.5) -- cycle ;
						
						\draw (166,50) node [anchor=north west][inner sep=0.75pt]    {$u$};
						\draw (84,112.4) node [anchor=north west][inner sep=0.75pt]    {$v$};
						\draw (230,112.9) node [anchor=north west][inner sep=0.75pt]    {$w$};
						\draw (124.5,180) node [anchor=north west][inner sep=0.75pt]    {$s$};
						\draw (206.5,178) node [anchor=north west][inner sep=0.75pt]    {$t$};
						
						\draw [draw opacity=0]   (29.5,109.5) -- (103.5,109.5) ;
						\draw [draw opacity=0]   (223.5,109.5) -- (297.5,109.5) ;
						
					\end{tikzpicture}
					}
				
			\hspace{3cm}
			
				\scalebox{0.8}{
					\begin{tikzpicture}[x=0.75pt,y=0.75pt,yscale=-1,xscale=1]
						
						\draw [color={rgb, 255:red, 208; green, 2; blue, 27 }  ,draw opacity=1 ][line width=3]    (122.5,172.5) -- (163.5,65.5) ;
						\draw [color={rgb, 255:red, 208; green, 2; blue, 27 }  ,draw opacity=1 ][line width=3]    (204.5,172.5) -- (163.5,65.5) ;
						\draw [color={rgb, 255:red, 208; green, 2; blue, 27 }  ,draw opacity=1 ][line width=3]    (103.5,109.5) -- (223.5,109.5) ;
						\draw [color={rgb, 255:red, 208; green, 2; blue, 27 }  ,draw opacity=1 ][line width=3]    (103.5,109.5) -- (204.5,172.5) ;
						\draw [color={rgb, 255:red, 208; green, 2; blue, 27 }  ,draw opacity=1 ][line width=3]    (122.5,172.5) -- (223.5,109.5) ;
						\draw [color={rgb, 255:red, 74; green, 144; blue, 226 }  ,draw opacity=1 ][line width=3]    (122.5,172.5) -- (204.5,172.5) ;
						\draw [color={rgb, 255:red, 74; green, 144; blue, 226 }  ,draw opacity=1 ][line width=3]    (103.5,109.5) -- (122.5,172.5) ;
						\draw [color={rgb, 255:red, 74; green, 144; blue, 226 }  ,draw opacity=1 ][line width=3]    (103.5,109.5) -- (163.5,65.5) ;
						\draw [color={rgb, 255:red, 74; green, 144; blue, 226 }  ,draw opacity=1 ][line width=3]    (163.5,65.5) -- (223.5,109.5) ;
						\draw [color={rgb, 255:red, 74; green, 144; blue, 226 }  ,draw opacity=1 ][line width=3]    (204.5,172.5) -- (223.5,109.5) ;
						\draw  [fill={rgb, 255:red, 255; green, 255; blue, 255 }  ,fill opacity=1 ] (98,109.5) .. controls (98,106.46) and (100.46,104) .. (103.5,104) .. controls (106.54,104) and (109,106.46) .. (109,109.5) .. controls (109,112.54) and (106.54,115) .. (103.5,115) .. controls (100.46,115) and (98,112.54) .. (98,109.5) -- cycle ;
						\draw  [fill={rgb, 255:red, 255; green, 255; blue, 255 }  ,fill opacity=1 ] (218,109.5) .. controls (218,106.46) and (220.46,104) .. (223.5,104) .. controls (226.54,104) and (229,106.46) .. (229,109.5) .. controls (229,112.54) and (226.54,115) .. (223.5,115) .. controls (220.46,115) and (218,112.54) .. (218,109.5) -- cycle ;
						\draw  [fill={rgb, 255:red, 255; green, 255; blue, 255 }  ,fill opacity=1 ] (117,172.5) .. controls (117,169.46) and (119.46,167) .. (122.5,167) .. controls (125.54,167) and (128,169.46) .. (128,172.5) .. controls (128,175.54) and (125.54,178) .. (122.5,178) .. controls (119.46,178) and (117,175.54) .. (117,172.5) -- cycle ;
						\draw  [fill={rgb, 255:red, 255; green, 255; blue, 255 }  ,fill opacity=1 ] (199,172.5) .. controls (199,169.46) and (201.46,167) .. (204.5,167) .. controls (207.54,167) and (210,169.46) .. (210,172.5) .. controls (210,175.54) and (207.54,178) .. (204.5,178) .. controls (201.46,178) and (199,175.54) .. (199,172.5) -- cycle ;
						\draw  [fill={rgb, 255:red, 255; green, 255; blue, 255 }  ,fill opacity=1 ] (158,65.5) .. controls (158,62.46) and (160.46,60) .. (163.5,60) .. controls (166.54,60) and (169,62.46) .. (169,65.5) .. controls (169,68.54) and (166.54,71) .. (163.5,71) .. controls (160.46,71) and (158,68.54) .. (158,65.5) -- cycle ;
						
						\draw (166,50) node [anchor=north west][inner sep=0.75pt]    {$u$};
						\draw (84,112.4) node [anchor=north west][inner sep=0.75pt]    {$v$};
						\draw (230,112.9) node [anchor=north west][inner sep=0.75pt]    {$w$};
						\draw (124.5,180) node [anchor=north west][inner sep=0.75pt]    {$s$};
						\draw (206.5,178) node [anchor=north west][inner sep=0.75pt]    {$t$};
						
						\draw [draw opacity=0]   (29.5,109.5) -- (103.5,109.5) ;
						\draw [draw opacity=0]   (223.5,109.5) -- (297.5,109.5) ;
						
					\end{tikzpicture}
					
				}
			
			\caption{(top) The main color of $uvw$ and
                          the color of $st$ are the same; (bottom) A coloring of $K_5$ without monochromatic triangle completing the previous one.}
			\label{fig:coloring K5}
		\end{figure}
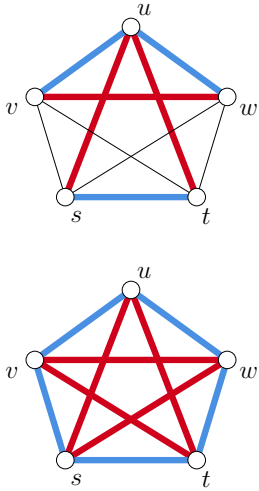

		By Lemma~\ref{lem:coloring K5}, in the 2-edge-coloring
                of $G'$ without monochromatic triangles, for every $uv
                \in E(G)$, since the union of $T^0_u$ and $e_{uv}$ induces a $K_5$, the main color of $T^0_u$ and the color of $e_{uv}$ are the same. Thus, the main colors of $T^0_u$ and $T^0_v$ are the same. Since $G$ is connected, the main color of $T^0_u$ is the same for all the vertices $u \in V(G)$. By symmetry, assume that this color is blue.
		Let us now prove the following claim to get the result.
		
		\begin{claim}
			\label{claim:main color}
			For every $u \in V(G)$, there is exactly one $i \in \{1,2,3\}$ such that the main color of $T^i_u$ is blue.
		\end{claim}
		
		\begin{proof}
			By Lemma~\ref{lem:coloring K5}, for every $1
                        \leqslant i < j \leqslant 3$, the colors of
                        the edges of $T^{ij}_u$ are respectively the
                        main colors of $T^i_u$,  $T^j_u$ and $T^0_u$
                        (which is blue). Since $T^{ij}_u$ is not
                        monochromatic,  the main color of either
                        $T^i_u$ or $T^j_u$ (or both) is red. Thus, at
                        least two of the main colors of $T^1_u$,
                        $T^2_u$, $T^3_u$ are red. But it cannot be red
                        for the 3 triangles, since othewise the triangle $T^{123}_u$ would be monochromatic (because its edges are colored with the main colors of $T^1_u$, $T^2_u$, $T^3_u$). So exactly one has blue as its main color.
		\end{proof}
		
		Let us define a proper 3-coloring of $G$ in the following way. By Claim~\ref{claim:main color}, for every $u \in V(G)$, there is a unique $i \in \{1,2,3\}$ such that the main color of $T^i_u$ is blue. Let $c(u)$ be this unique $i \in \{1,2,3\}$, and let us prove that $c$ is a proper 3-coloring of~$G$. By contradiction, assume that there exists $uv\in E(G)$ such that $c(u)=c(v)$. Then, the triangle $T^{c(u)}_{uv}$ is monochromatic (because its edges are colored with the main colors of $T^{c(u)}_u$, $T^{c(v)}_v$ and $T^0_u$, respectively, which are all blue), which is a contradiction. So $G$ is indeed 3-colorable.

		Conversely, assume that $G$ is 3-colorable. Then, we can define a $2$-edge-coloring of $G'$ without monochromatic triangle in the following way: for each $u \in V(G)$, we color two out of the three edges of $T^0_u$ by blue, and the other one by red. We also color two out of the three edges of $T^{c(u)}_u$ by blue, the other one by red, and we do the converse for $T^i_u$ if $i \neq c(u)$. Finally, using Lemma~\ref{lem:coloring K5}, we color the remaining edges to obtain a $2$-edge-coloring of $G'$ which does not have any monochromatic triangle.
	\end{proof}

        While every \textsf{coNP}-hard problem has a polynomial
        reduction from non-3-colorability, these reductions
        are often non-local, and sometimes reductions are much easier to perform from
        problems such as 3-SAT (which is not a purely graph theoretic problem). In the next section, we describe a
        mild modification of  Theorem~\ref{thm:main theorem local
          reduction} in the specific context of local reduction from
        3-SAT. Our main applications are non-Hamiltonicity and non-$\Delta$-edge-colorability.

	\section{Reductions from 3-SAT to $\p$}
	\label{sec:red3SAT}

	Let us recall the following reduction from 3-coloring to 3-SAT.
	Let $G$ be a graph. We construct a 3-CNF formula $\varphi_G$ which is
	satisfiable if and only if $G$ is 3-colorable.
	For every vertex $u \in V(G)$, we create three variables $u_1, u_2, u_3$, to
	express the fact that $u$ is colored with color 1, 2 or 3.
	
	For every $u \in V(G)$, we add the following clauses to $\varphi_G$:
	$$(u_1 \vee u_2 \vee u_3) \wedge (\neg u_1 \vee \neg u_2) \wedge (\neg u_1 \vee
	\neg u_3) \wedge (\neg u_2 \vee \neg u_3)$$
	
	For every edge $\{u,v\} \in E(G)$, we add the following clauses to $\varphi_G$:
	$$(\neg u_1 \vee \neg v_1) \wedge (\neg u_2 \vee \neg v_2) \wedge (\neg u_3 \vee
	\neg v_3)$$
	
	Note that $\varphi_G$ has $3|V(G)|$ variables and $4|V(G)| + 3|E(G)|$ clauses.
	
	\begin{claim}
		$G$ has a proper 3-coloring if and only if $\varphi_G$ is satisfiable.
	\end{claim}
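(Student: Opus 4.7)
The plan is to exhibit the natural bijective correspondence between proper 3-colorings of $G$ and satisfying assignments of $\varphi_G$, where the variable $u_i$ is interpreted as the statement ``vertex $u$ receives color $i$''.

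For the forward direction, I would start from a proper 3-coloring $c:V(G)\to \{1,2,3\}$ and define a truth assignment by setting $u_i$ to \textbf{true} if and only if $c(u)=i$. Then I would check each family of clauses in turn. The clause $(u_1\vee u_2\vee u_3)$ is satisfied because $c(u)$ takes some value in $\{1,2,3\}$, so at least one of $u_1,u_2,u_3$ is true. Each clause of the form $(\neg u_i\vee \neg u_j)$ with $i\ne j$ is satisfied because $c(u)$ equals at most one of $i,j$, so at most one of $u_i,u_j$ is true. Finally, for an edge $\{u,v\}$ and a color $i$, the clause $(\neg u_i\vee \neg v_i)$ holds because $c$ is a proper coloring and hence $c(u)\ne c(v)$, so $u_i$ and $v_i$ cannot both be true.

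For the reverse direction, I would start from a satisfying assignment of $\varphi_G$ and first prove the following intermediate fact: for every vertex $u$, exactly one of the three variables $u_1,u_2,u_3$ is true. At least one is true by the clause $(u_1\vee u_2\vee u_3)$; at most one is true because for any two distinct $i,j\in\{1,2,3\}$ the clause $(\neg u_i\vee \neg u_j)$ is satisfied, forbidding $u_i$ and $u_j$ to be simultaneously true. This defines a map $c:V(G)\to \{1,2,3\}$ by letting $c(u)$ be the unique index $i$ such that $u_i$ is true. It remains to verify that $c$ is a proper coloring: if $\{u,v\}\in E(G)$, then the clause $(\neg u_{c(u)}\vee \neg v_{c(u)})$ is satisfied, and since $u_{c(u)}$ is true by construction, $v_{c(u)}$ must be false, so $c(v)\ne c(u)$.

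Both directions are direct verifications and I do not expect any real obstacle; the only mild subtlety is the at-most-one-color argument, which uses all three ``at-most-one'' clauses rather than just one of them.
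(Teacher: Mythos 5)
Your proof is correct and is exactly the standard argument one would expect; the paper in fact states this claim without proof, treating it as routine, and your write-up supplies precisely the natural bijection (with the right care on the ``exactly one of $u_1,u_2,u_3$ is true'' step) that the authors had in mind.
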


	Let $\varphi$ be a 3-CNF formula, and $V(\varphi)$ be the set of variables of
	$\varphi$. 
	We say that two variables $x,y \in V(\varphi)$ \emph{have a clause in common} if
	there exists a clause of $\varphi$ which contains both $x$ or its negation, and
	$y$ or its negation (in particular, $x$ has a clause in common with itself).
	We define $\F_0$ as the set of 3-CNF formulae $\varphi$ such that each variable
	appears in at most $|V(\varphi)|+1$ clauses, and $\F_1$ as the set of 3-CNF
	formulae $\varphi$ such that each variable appears in at most 7~clauses. Note
	that for any graph $G$, we have $\varphi_G \in \F_0$, and for every $G$ of
	maximum degree~at most~4, we have $\varphi_G \in \F_1$.
	
	\subsection*{Local reductions from 3-SAT}
		Let $\F$ be a class of 3-CNF formulae, $\C$ be a graph class, and $\p$ be a
		property on $\C$. Let $\alpha, \beta : \mathbb{N} \to \mathbb{N}$. We say that
		there exists a \emph{local reduction from 3-SAT (in $\F$) to $\p$} with
		local expansion $\alpha$ and global expansion $\beta$ if there exists a function $f_\p$ which takes as input a 3-CNF
		formula $\varphi \in \F$ having the following
                properties:
                
		\begin{enumerate}[(P1)]
			\item the variables of $\varphi$ have identifiers in $\{1, \ldots,
			n\}$, where $n:=|V(\varphi)|$,\label[pr]{p1}
			\item for any variables $x,y \in V(\varphi)$, there exists a sequence of
			variables $x_1, \ldots, x_k$ such that $x=x_1$,$y=x_k$ and $x_i,x_{i+1}$ have a
			clause in common for all $i < k$,\label[pr]{p2}
                      \end{enumerate}
                      
		and outputs a graph $f_\p(\varphi) \in \C$ having at most $\beta(n)$ vertices, with identifiers, and such that the
		following properties are satisfied:
		
		\begin{enumerate}[(S1)]
			\item the vertices of $f_\p(\varphi)$ have unique identifiers
			written on $O(\log n)$ bits,\label[sat]{s1}
			
			\item $\varphi$ is satisfiable if and only if $f_\p(\varphi)$ satisfies $\p$, \label[sat]{s2}
			
			\item for each variable $x$ of $\varphi$, there exist two sets $C_x, V_x
			\subseteq V(f_\p(\varphi))$, such that, for every $x \in V(\varphi)$, the
			following properties are satisfied: \label[sat]{s3}
			\begin{enumerate}
				\item $V_x \subseteq C_x$, \label[sat]{s3a}
				\item $\bigcup_{y \in V(\varphi)} V_y = V(f_\p(\varphi))$, \label[sat]{s3b}
				\item $|C_x| \leqslant \alpha(n)$, \label[sat]{s3c}
				\item for every $t \in V(f_\p(\varphi))$, and for every variables $y$, $y'$
				such that $t \in C_y \cap C_{y'}$, there exists a sequence of variables $y_1 ,
				\ldots, y_k$ such that: $y_1 = y$, $y_k = y'$, $t \in \bigcap_{1 \leqslant i
					\leqslant k} C_{y_i}$, and for every $i \in \{1, \ldots, k-1\}$, $y_i$ and
				$y_{i+1}$ have a clause in common, \label[sat]{s3d}
				\item for every $t \in V_x$, $N[t]$ is included in $\bigcup_{y \in X} C_y$,
				where $X$ denotes the set of variables of $\varphi$ which have a clause in
				common with $x$, \label[sat]{s3e}
				\item $V_x$ and its neighborhood (resp.\ $C_x$) only depends on the set of
				clauses in which~$x$ (or its negation) appears. In other words: if $\varphi'$ is
				another formula on the same set of variables with identifiers $\{1, \ldots,
				n\}$, and if the set of clauses in which $x$ appears is the same in $\varphi$ and $\varphi'$, then the sets $V_x$ and the subgraphs with identifiers formed by
				by the vertices in $V_x$ and their adjacent edges are the same in $f_\p(\varphi)$ and $f_\p(\varphi')$ (resp.\ the sets $C_x$ in
				$f_\p(\varphi)$ and $f_\p(\varphi')$ are the same). \label[sat]{s3f}
			\end{enumerate}
		\end{enumerate}

	\begin{remark}
		We can make the same observations as in Remark~\ref{rem:def local reduction}. Namely: every local reduction from 3-SAT to $\p$ with local expansion $\alpha$ has global expansion at most $n \alpha(n)$, and the definition is symmetric between properties and their complement.
              \end{remark}

              We obtain the following variant of Theorem~\ref{thm:main theorem local
          reduction}, tailored to local reductions from 3-SAT.
	
	\begin{theorem}
		\label{thm:local reduction 3SAT}
		Let $\C, \C'$ be two graph classes, and $\F$ be a class of 3-CNF formulae. Let
		$\p$ be a graph property on $\C$. If there exists a local reduction from 3-SAT
		(in $\F$) to $\p$ with local expansion $\alpha$ and global expansion $\beta$, and if $\varphi_G \in \F$ for all $G
		\in \C'$, then there exists a local reduction from 3-colorability in $\C'$
		to $\p$ with local expansion~$\alpha'$ and global expansion~$\beta'$, where $\alpha'(n)=3\alpha(3n)$ and $\beta'(n)=\beta(3n)$.
	\end{theorem}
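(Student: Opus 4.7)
The plan is to define a local reduction $f_{\text{3-col},\p}$ from 3-colorability in $\C'$ to $\p$ by composing the classical reduction $G \mapsto \varphi_G$ from 3-colorability to 3-SAT with the assumed local reduction $f_\p$ from 3-SAT to $\p$. Concretely, given $G\in\C'$ with $n$ vertices having identifiers in $\{1,\ldots,n\}$, we form $\varphi_G$ (which has $3n$ variables $u_1,u_2,u_3$ for $u\in V(G)$, whose identifiers in $\{1,\ldots,3n\}$ are derived from those of $G$ and from the index in $\{1,2,3\}$, so they fit on $O(\log n)$ bits), and set $f_{\text{3-col},\p}(G):=f_\p(\varphi_G)$. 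For every $u\in V(G)$ we define
\[
C_u := C_{u_1}\cup C_{u_2}\cup C_{u_3}, \qquad V_u := V_{u_1}\cup V_{u_2}\cup V_{u_3},
\]
where the right-hand sides refer to the sets provided by the local reduction from 3-SAT.

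First I would check that $\varphi_G$ is an admissible input, that is, that it satisfies \cref{p1} and~\cref{p2}. \cref{p1} is immediate from the labelling of variables described above. For~\cref{p2}, observe that in $\varphi_G$ the variables $u_1,u_2,u_3$ share the clause $(u_1\vee u_2\vee u_3)$, and for every edge $\{u,v\}\in E(G)$ and every $i\in\{1,2,3\}$ the variables $u_i$ and $v_i$ share the clause $(\neg u_i\vee \neg v_i)$; so, because $G$ is connected, any two variables are joined by the required sequence. The classical equivalence between 3-colorability of $G$ and satisfiability of $\varphi_G$, combined with \cref{s2}, then gives~\cref{l2} of the local reduction; and~\cref{l1} follows from~\cref{s1}. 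The global size bound is $\beta(3n)=\beta'(n)$ and, since $|C_u|\le 3\alpha(3n)=\alpha'(n)$, we also get~\cref{l3c}.

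Properties~\cref{l3a} and \cref{l3b} are straightforward from~\cref{s3a}, \cref{s3b} applied to each $u_i$. For~\cref{l3f}, note that the set of clauses in which $u_i$ appears only depends on the identifiers of $u$ and of its neighbors in $G$, so the invariance granted by~\cref{s3f} for each $u_i$ translates into invariance under changing $G$ outside $N_G[u]$. For~\cref{l3e}, let $t\in V_u$; then $t\in V_{u_i}$ for some $i$, so by~\cref{s3e} the neighborhood $N[t]$ in $f_\p(\varphi_G)$ lies in $\bigcup_{y\in X}C_y$ where $X$ is the set of variables sharing a clause with $u_i$. Inspecting the clauses of $\varphi_G$ shows $X\subseteq\{u_1,u_2,u_3\}\cup\{v_i:v\in N_G(u)\}$, whence $\bigcup_{y\in X}C_y\subseteq \bigcup_{v\in N_G[u]}C_v$, as required.

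The step I expect to need most care is~\cref{l3d}: showing that for every $t\in V(f_\p(\varphi_G))$ the set $S_t=\{v\in V(G):t\in C_v\}$ induces a connected subgraph of $G$. Take $u,w\in S_t$; then $t\in C_{u_i}\cap C_{w_j}$ for some $i,j$. Apply~\cref{s3d} at the level of variables to obtain a sequence $y_1=u_i,\ldots,y_k=w_j$ with $t\in C_{y_\ell}$ and with $y_\ell,y_{\ell+1}$ sharing a clause for each $\ell$. The crucial observation is that, in $\varphi_G$, two variables sharing a clause correspond either to the same vertex of $G$ or to two adjacent vertices of $G$. Consequently, projecting $y_1,\ldots,y_k$ to their underlying vertices of $G$ yields a walk from $u$ to $w$ entirely contained in $S_t$, proving the connectedness required by~\cref{l3d}. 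This completes the verification and establishes that the composed map is a local reduction with the claimed expansion parameters.
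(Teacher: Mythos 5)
Your proposal is correct and follows essentially the same route as the paper: you compose $G\mapsto\varphi_G$ with $f_\p$, define $C_u$ and $V_u$ as the unions over $u_1,u_2,u_3$, and for (R3d) use exactly the paper's key observation that two variables of $\varphi_G$ sharing a clause come either from the same vertex or from adjacent vertices of $G$, so that the sequence from (S3d) projects to a walk in $G$. The only cosmetic difference is that you skip the paper's ``without loss of generality'' normalization to $i=j=1$; everything else matches.
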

	
Again, the following immediate  corollary of Theorem~\ref{thm:local
          reduction 3SAT}  and Corollary~\ref{cor:reduction3col} will be the result we use in practice
to obtain lower bounds.

	\begin{corollary}
		\label{cor:reduction3SAT}
		Let $\p$ be a graph property (on some graph class~$\C$).
		\begin{enumerate}
			\item If there exists a local reduction from 3-SAT in $\F_0$ to $\p$ with
			local expansion $O(n^\delta)$ for some $0 \leqslant \delta < 2$, and global expansion $O(n^\gamma)$, then $\np$ requires certificates of size
			$\Omega\big(n^{(2-\delta)/\gamma}/\log n\big)$. In particular, since $\gamma \leqslant \delta+1$, $\np$ requires certificates of size $\Omega\big(n^{\frac{3}{\delta+1}-1}/\log n\big)$.
			
			\item If there exists a local reduction from 3-SAT in $\F_1$ to $\p$ with
			local expansion $O(n^\delta)$ for some $0 \leqslant \delta < 1$, and global expansion $O(n^\gamma)$, then $\np$ requires certificates of size
			$\Omega\big(n^{(1-\delta)/\gamma}/\log n\big)$. In particular, since $\gamma \leqslant \delta+1$, it requires certificates of size $\Omega\big(n^{\frac{2}{\delta+1}-1}/\log n\big)$.
		\end{enumerate}
	\end{corollary}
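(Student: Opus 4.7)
The plan is to derive both statements by simply chaining the two results already at hand: Theorem~\ref{thm:local reduction 3SAT} converts a local reduction from 3-SAT into a local reduction from 3-colorability (in an appropriate graph class), and then Corollary~\ref{cor:reduction3col} converts the latter into a lower bound on the local complexity of $\np$.

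More precisely, for item~(1) I would take $\C'$ to be the class of all (connected) graphs. The key observation is that for every graph $G \in \C'$, the standard formula $\varphi_G$ defined at the beginning of Section~\ref{sec:red3SAT} belongs to~$\F_0$, so the hypothesis of Theorem~\ref{thm:local reduction 3SAT} is met. Applying it, we obtain a local reduction from 3-colorability in general graphs to $\p$ with local expansion $\alpha'(n) = 3\alpha(3n) = O(n^\delta)$ and global expansion $\beta'(n) = \beta(3n) = O(n^\gamma)$. The hypothesis $\delta < 2$ is preserved, and Corollary~\ref{cor:reduction3col}(1) then yields the lower bound $\Omega\big(n^{(2-\delta)/\gamma}/\log n\big)$ on the local complexity of $\np$. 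The weaker bound $\Omega\big(n^{3/(\delta+1)-1}/\log n\big)$ follows from the general inequality $\gamma \le \delta+1$ recorded in Remark~\ref{rem:def local reduction} (transported to this setting).

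For item~(2) the only change is to take $\C'$ to be the class of graphs of maximum degree at most~$4 = 3+\lceil\sqrt{3}\rceil-1$, for which $\varphi_G \in \F_1$ (indeed each variable $u_i$ then appears in at most $4+\deg_G(u) \le 8$, well, let me think) — actually the relevant check is the one already recorded before the definition of $\F_1$: for every graph $G$ of maximum degree at most~4, $\varphi_G \in \F_1$. Then Theorem~\ref{thm:local reduction 3SAT} produces a local reduction from 3-colorability in graphs of maximum degree~4 (which is $k+\lceil\sqrt{k}\rceil-1$ for $k=3$) to $\p$, with local expansion $O(n^\delta)$ and global expansion $O(n^\gamma)$, and Corollary~\ref{cor:reduction3col}(2) delivers the desired bound $\Omega\big(n^{(1-\delta)/\gamma}/\log n\big)$. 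The reduction to $\Omega\big(n^{2/(\delta+1)-1}/\log n\big)$ again uses $\gamma \le \delta+1$.

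There is no real obstacle here: this is a straightforward plug-and-chug, and the only thing to double-check is that the local and global expansion factors indeed compose as claimed (namely that $\alpha'(n) = 3\alpha(3n)$ is still $O(n^\delta)$ and that $\beta'(n) = \beta(3n)$ is still $O(n^\gamma)$), and that the hypotheses $\delta<2$ and $\delta<1$ are exactly what is needed to make the exponent in Corollary~\ref{cor:reduction3col} positive.
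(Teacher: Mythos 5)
Your proposal is correct and matches the paper's own approach: the paper explicitly presents Corollary~\ref{cor:reduction3SAT} as an ``immediate corollary'' of Theorem~\ref{thm:local reduction 3SAT} and Corollary~\ref{cor:reduction3col}, and you chain exactly those two results, choosing $\C'$ to be all graphs for item~(1) and graphs of maximum degree~$4 = 3+\lceil\sqrt{3}\rceil-1$ for item~(2), with the required membership $\varphi_G\in\F_0$ (resp.\ $\F_1$) noted in the text preceding the definitions. The bookkeeping that $3\alpha(3n)=O(n^\delta)$ and $\beta(3n)=O(n^\gamma)$ is exactly the point to verify, and you do so.
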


	
	We now proceed with the proof of Theorem~\ref{thm:local
          reduction 3SAT}, which is very similar to that of Theorem~\ref{thm:main theorem local
          reduction}.
	
	\begin{proof}[Proof of Theorem~\ref{thm:local reduction 3SAT}.]
		Let $f_{\p}$ be the function corresponding to the local reduction from 3-SAT
		(in $\F$) to $\p$ with local expansion~$\alpha$ and global expansion~$\beta$. We show that there exists a local
		reduction from 3-colorability in $\C'$ to $\p$ with local expansion $\alpha'(n):=3\alpha(3n)$ and global expansion $\beta'(n):=\beta(3n)$,
		with the function $f_{\text{3-col}, \p}(G):=f_\p(\varphi_G)$ for every $G$, where for
		each vertex $u \in V(G)$ having its identifier in $\{1, \ldots, n\}$, the
		variable $u_i$ in $\varphi_G$ received the identifier $3(\mathrm{id}(u)-1)+i$, so that the
		variables of $\varphi_G$ have indeed identifiers in $\{1, \ldots, 3n\}$ and thus $\varphi_G$ satisfies property~\cref{p1} of the definition of local reduction from 3-SAT (see the beginning of Section~\ref{sec:red3SAT}). Note that $\varphi_G$ also satisfies property~\cref{p2} because $G$ is connected.
		
		Now, let us prove that $f_{\text{3-col}, \p}(G)$ satisfies all the conditions of
		the definition of local reduction from 3-colorability
                to $\p$ (see the beginning of
                Section~\ref{sec:reduc}). First, it has at most
                $\beta(3n)=\beta'(n)$ vertices. Then, it is
                straightforward to see that conditions~\cref{l1}
                and~\cref{l2}, are satisfied. Let us prove
                that~\cref{l3} holds as well. For every vertex~$u$, let us define
		$C_u:=C_{u_1} \cup C_{u_2} \cup C_{u_3}$, and $V_u:=V_{u_1} \cup V_{u_2} \cup
		V_{u_3}$, where $u_1, u_2, u_3$ are the variables of $\varphi_G$ corresponding
		to $u$.
		It is again straightforward that~\cref{l3a} and~\cref{l3b} both hold.
		For~\cref{l3c}, observe that $\varphi_G$ has $3n$ variables, and that for every $u
		\in G$ and every $i \in \{1,2,3\}$, $|C_{u_i}| \leqslant \alpha(3n)$. Thus,
		$|C_u| \leqslant 3\alpha(3n) = \alpha'(n)$.
		For~\cref{l3d}, let $t \in V(f_{\text{3-col}, \p}(G))$ and let $u, v \in V(G)$ such that
		$t \in C_u \cap C_v$. Without loss of generality, assume that $t \in C_{u_1}
		\cap C_{v_1}$. By condition~\cref{s3d} of the definition of local reduction from 3-SAT, there exists a sequence of variables $y_1, \ldots, y_k$ such that $y_1=u_1$,
		$y_k=v_1$, for every $i \in \{1, \ldots, k-1\}$, $y_i$ and $y_{i+1}$ have a
		clause in common, and $t \in \bigcap_{1 \leqslant i \leqslant k} C_{y_i}$. Thus,
		by construction of $\varphi_G$, for every $i \in \{1, \ldots, k-1\}$, one of the
		two following cases holds:
		\begin{itemize}
			\item there exists $w \in V(G)$ and $j,j'\in \{1,2,3\}$ such that $y_i = w_j$
			and $y_{i+1}=w_{j'}$, or
			\item there exists $w,w' \in V(G)$ which are neighbors, and $j\in \{1,2,3\}$
			such that $y_i=w_j$ and $y_{i+1}=w'_j$
		\end{itemize}
		So there is a path from $u$ to $v$ which is included in $\{w \in V(G) \; | \; t
		\in C_w\}$, which proves~\cref{l3d}. Then, \cref{l3e}~simply follows from the definition
		of $C_u$ and from~\cref{s3e} of the definition of local reduction from 3-SAT. Finally,
		for~\cref{l3f}, if $G'$ is another $n$-vertex graph with unique identifiers in $\{1,
		\ldots, n\}$ and if the subgraph formed by $u$ and its neighbors is the same in $G$ and $G'$,
		then for every $i \in \{1,2,3\}$ the clauses in which the variable $u_i$ appears
		in $\varphi_G$ and $\varphi_{G'}$ are the same, and we can just apply the
		property~\cref{s3f} of the definition of local reduction from 3-SAT to conclude.
	\end{proof}

	\subsection{Non-Hamiltonicity}

	\begin{theorem}
		\label{thm:hamiltonian}
		The property of not having a Hamiltonian cycle has
                local complexity  $\Omega(\sqrt{n}/\log n)$, even in graphs of
		maximum degree~at most~4.
	\end{theorem}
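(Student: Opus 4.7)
The plan is to construct a local reduction from 3-SAT in $\F_1$ to the property $\p$ of containing a Hamiltonian cycle (with image in graphs of maximum degree $4$) achieving local expansion $\alpha(n)=O(\sqrt n)$ and global expansion $\beta(n)=O(n)$, and then apply Case~2 of Corollary~\ref{cor:reduction3SAT}: this gives $\Omega(n^{(1-1/2)/1}/\log n)=\Omega(\sqrt n/\log n)$ as a lower bound on the local complexity of $\np$ in max-degree-$4$ graphs. Since $\varphi_G \in \F_1$ whenever $G$ has maximum degree $4$, this yields the theorem via Theorem~\ref{thm:local reduction 3SAT}.

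For the construction itself, I would use a classical 3-SAT to Hamiltonicity reduction in the style of Garey, Johnson, and Tarjan, implemented with constant-size gadgets and so that $f_\p(\varphi)$ has maximum degree at most $4$. The two building blocks are: a variable diamond $D_x$ for each variable $x$, which a Hamiltonian cycle can traverse in exactly two ways encoding the truth value of $x$; and a clause gadget $Q_c$ for each clause $c$, attached to the three literal diamonds via XOR-style subgadgets, forcing the cycle to enter $Q_c$ only through a satisfied literal. All the variable diamonds are threaded along a backbone that any Hamiltonian cycle is forced to follow. Since $\varphi \in \F_1$ has $n$ variables and only $O(n)$ clauses, $f_\p(\varphi)$ has $\Theta(n)$ vertices, which gives $\beta(n)=O(n)$.

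To make the reduction local, I would arrange the variable diamonds along the backbone so that two consecutive positions always correspond to variables sharing a clause. Such an ordering exists because the clause-sharing graph of $\varphi$ is connected (property~(P2)) and has bounded degree (since $\varphi \in \F_1$), so it can be obtained from an Eulerian tour of a spanning tree, duplicating each variable only a constant number of times and leaving the asymptotic sizes unchanged. I would then split the backbone into $\sqrt n$ consecutive blocks of $\sqrt n$ positions, and for each variable $x$ set
\[
  V_x \;=\; D_x \cup \bigcup_{c \ni x} Q_c,
\]
while letting $C_x \supseteq V_x$ further contain all variable diamonds and backbone vertices lying in the block of $x$ and in the two adjacent blocks. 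This gives $|C_x|=O(\sqrt n)$, hence $\alpha(n)=O(\sqrt n)$.

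The main obstacle is verifying condition~(S3d) of the definition of a local reduction from 3-SAT: for every vertex $t$ of $f_\p(\varphi)$, the set of variables $y$ with $t\in C_y$ must be connected via the clause-sharing relation. The choice of a backbone ordering in which consecutive variables share a clause, together with the inclusion of adjacent blocks in each $C_x$ to handle vertices lying near block boundaries, is exactly what enforces (S3d). The remaining conditions (S1)--(S3c), (S3e), and (S3f) follow routinely from the construction, and correctness (S2) is standard for reductions from 3-SAT to Hamiltonicity. This intrinsic difficulty of localising a Hamiltonian-cycle reduction is what forces the $\sqrt n$ loss in the local expansion and explains the gap between the $\Omega(\sqrt n/\log n)$ bound proved here and the $\Omega(n/\log n)$ bound conjectured by the authors.
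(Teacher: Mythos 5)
Your proposal takes a genuinely different route from the paper.  The paper reduces from $3$-SAT in $\F_0$ (variables may appear in up to $|V(\varphi)|+1$ clauses) using Sipser's construction, which gives local expansion $\alpha(n)=O(n)$ and global expansion $\beta(n)=O(n^2)$; Case~1 of Corollary~\ref{cor:reduction3SAT} then yields $\Omega(n^{(2-1)/2}/\log n)=\Omega(\sqrt n/\log n)$.  You propose instead to reduce from $\F_1$ with $\alpha(n)=O(\sqrt n)$ and $\beta(n)=O(n)$ and invoke Case~2.  The arithmetic in both cases gives the same bound, so your parameter trade is internally consistent.

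The key step in your plan, however, breaks condition~\cref{s3f}.  You propose to order the variable gadgets along the backbone so that consecutive positions correspond to variables sharing a clause, obtaining this ordering from an Eulerian tour of a spanning tree of the clause-sharing graph of~$\varphi$.  That tour is a \emph{global} object: the position of a given variable $x$ along the backbone depends on the whole spanning tree, not merely on the set of clauses containing~$x$.  Consequently the set $V_x$ and (more seriously) the identities and labels of the neighbours of $V_x$ along the backbone are not a function of the clauses of~$x$ alone, which is exactly what~\cref{s3f} requires.  The paper avoids this by keeping the canonical, formula-independent ordering $x_1,\ldots,x_n$ of the backbone; the price is that consecutive rows need not share a clause, which is why the paper must place every entry/exit node and every first/last row vertex into \emph{every} $C_x$, yielding $\alpha(n)=O(n)$ rather than~$O(1)$.

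You should also be suspicious of your own account of where the $\sqrt n$ loss comes from.  If an $\F_1$-reduction with a backbone in which consecutive variables always share a clause really existed (and satisfied \cref{s1}--\cref{s3f}), then one could take $C_x=V_x$ with $\alpha(n)=O(1)$ and $\beta(n)=O(n)$, and Case~2 would give $\Omega(n/\log n)$ — precisely the improvement the paper explicitly leaves open.  This strongly suggests the obstruction is not ``the intrinsic difficulty of localising a Hamiltonian-cycle reduction,'' but specifically the impossibility of choosing a \emph{locally determined} backbone ordering with the clause-sharing property.  Your $\sqrt n$-block structure does not resolve this: once~\cref{s3f} fails for the ordering, adding three blocks to each $C_x$ does not restore it, and if the ordering did satisfy~\cref{s3f}, the blocks would be unnecessary for~\cref{s3d} and~\cref{s3e}.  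In addition, if ``duplicating each variable a constant number of times'' means having multiple diamonds for the same variable, you would need synchronization gadgets (to force consistent truth values across copies and route the Hamiltonian cycle through all of them), which you do not describe and which would also need to respect the maximum-degree-$4$ constraint.
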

	
	\begin{proof}
		Let $\p:=$ \ham{}. We prove that there exists a local reduction from 3-SAT (in
		$\F_0$) to $\p$ with local expansion $O(n)$, and then the result will follow from
		Corollary~\ref{cor:reduction3SAT}. Recall that $\F_0$
                was defined as the set of 3-CNF formulae $\varphi$ such that each variable
	appears in at most $|V(\varphi)|+1$ clauses.
		We use the reduction of~\cite{Sipser}. Let $\varphi$ be a 3-CNF formula with
		variables having identifiers in $\{1, \ldots, |V(\varphi)|\}$, and such that
		each variable appears in at most $|V(\varphi)|+1$ clauses.
		Let $n:=|V(\varphi)|$, and let $m$ be the number of clauses in $\varphi$. Let
		$x_1, \ldots, x_n$ be the variables of $\varphi$, and $C_1, \ldots, C_m$ be its
		clauses.
		
		The graph $f_\p(\varphi)$ is constructed in two steps. First, we construct a
		directed graph $f_\p^\ast(\varphi)$ which has a directed Hamiltonian cycle if
		and only if $\varphi$ if satisfiable. Then, we explain how to construct the
		undirected graph $f_\p(\varphi)$ from $f_\p^\ast(\varphi)$.
		
		The graph $f_\p^\ast(\varphi)$ is constructed as follows. For each variable
		$x_i$ such that $x_i$ or its negation appears in $d$ clauses,
		we create a horizontal row of $3d+3$ vertices, an entry node, and an exit node. The exit
		node of $x_i$ is the entry one of $x_{i+1}$, except the entry node of $x_1$
		which is called $s$, and the exit node of $x_n$ which is called $t$. There is
		also an arc from~$t$ to~$s$.
		Finally, we add $m$ single vertices, one for each clause. This is depicted on
		Figure~\ref{fig:ham1}. We now specify how to connect the vertices corresponding
		to the clauses to the rest of the graph.
		For each $i \in \{1, \ldots, n\}$, the $3d+3$ vertices of the row corresponding
		to $x_i$ are divided as follows. Among the $3d+1$ vertices which are not the
		first and the last one, they are grouped by adjacent pairs, with one separator
		between each pair.
		Let us denote $C_{i_1}, \ldots, C_{i_d}$ the clauses in which $x_i$ or its negation appears. For every $j \in \{1, \ldots, d\}$, we add two edges between the $j$-th
		pair of the row of $x_i$ and the vertex corresponding to $C_{i_j}$, and their
		direction depends only on whether $x_i$ appears positively or negatively in
		$C_j$. See~\cite{Sipser} for more details and Figure~\ref{fig:ham2} for an
		example.

		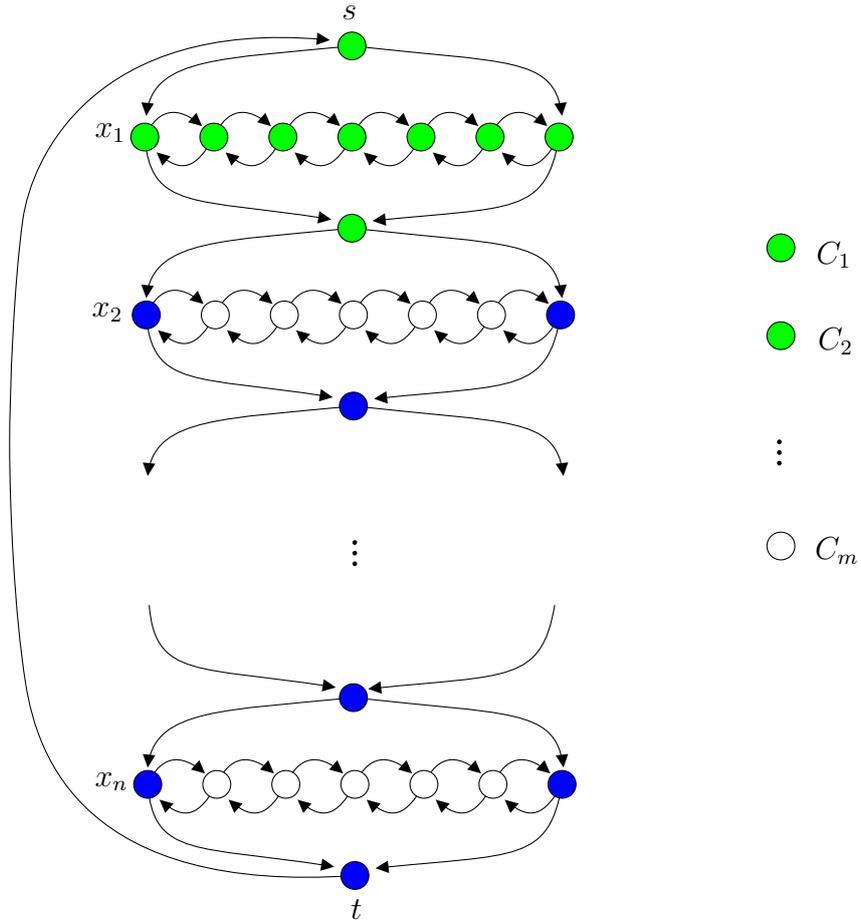
\begin{figure}
			\centering
			
			\begin{tikzpicture}[x=0.55pt,y=0.55pt,yscale=-1,xscale=1]
				
				\draw    (248.76,35.55) .. controls (93.62,21.91) and (50.44,116.43) ..
				(44.5,159) .. controls (38.5,202) and (35.5,274) .. (35.5,314) .. controls
				(35.5,354) and (38.5,428) .. (46.5,478) .. controls (54.5,528) and (86.5,622) ..
				(270.5,609) ;
				\draw [shift={(253.5,36)}, rotate = 185.75] [fill={rgb, 255:red, 0; green, 0;
					blue, 0 }  ][line width=0.08]  [draw opacity=0] (8.93,-4.29) -- (0,0) --
				(8.93,4.29) -- cycle    ;
				\draw    (129.97,331.51) .. controls (136.24,295.05) and (181.23,295.73) ..
				(269.5,287) ;
				\draw [shift={(129.5,335)}, rotate = 275.71] [fill={rgb, 255:red, 0; green,
					0; blue, 0 }  ][line width=0.08]  [draw opacity=0] (8.93,-4.29) -- (0,0) --
				(8.93,4.29) -- cycle    ;
				\draw    (412.08,332.01) .. controls (405.95,294.24) and (374.8,298.7) ..
				(269.5,287) ;
				\draw [shift={(412.5,335)}, rotate = 263.05] [fill={rgb, 255:red, 0; green,
					0; blue, 0 }  ][line width=0.08]  [draw opacity=0] (8.93,-4.29) -- (0,0) --
				(8.93,4.29) -- cycle    ;
				\draw    (287.68,281.56) .. controls (378.22,272.2) and (401.64,278.18) ..
				(410.5,224.5) ;
				\draw [shift={(283.5,282)}, rotate = 353.93] [fill={rgb, 255:red, 0; green,
					0; blue, 0 }  ][line width=0.08]  [draw opacity=0] (8.93,-4.29) -- (0,0) --
				(8.93,4.29) -- cycle    ;
				\draw    (251.4,280.34) .. controls (162.58,266.29) and (132.44,275.23) ..
				(128.5,224.5) ;
				\draw [shift={(255.5,281)}, rotate = 189.26] [fill={rgb, 255:red, 0; green,
					0; blue, 0 }  ][line width=0.08]  [draw opacity=0] (8.93,-4.29) -- (0,0) --
				(8.93,4.29) -- cycle    ;
				\draw    (128.97,209.51) .. controls (135.24,173.05) and (180.23,173.73) ..
				(268.5,165) ;
				\draw [shift={(128.5,213)}, rotate = 275.71] [fill={rgb, 255:red, 0; green,
					0; blue, 0 }  ][line width=0.08]  [draw opacity=0] (8.93,-4.29) -- (0,0) --
				(8.93,4.29) -- cycle    ;
				\draw    (411.08,210.01) .. controls (404.95,172.24) and (373.8,176.7) ..
				(268.5,165) ;
				\draw [shift={(411.5,213)}, rotate = 263.05] [fill={rgb, 255:red, 0; green,
					0; blue, 0 }  ][line width=0.08]  [draw opacity=0] (8.93,-4.29) -- (0,0) --
				(8.93,4.29) -- cycle    ;
				\draw    (175.5,224.5) .. controls (163.08,246.94) and (153.41,248.86) ..
				(138.62,236.8) ;
				\draw [shift={(136.5,235)}, rotate = 41.19] [fill={rgb, 255:red, 0; green, 0;
					blue, 0 }  ][line width=0.08]  [draw opacity=0] (8.93,-4.29) -- (0,0) --
				(8.93,4.29) -- cycle    ;
				\draw    (128.5,224.5) .. controls (135.26,209.54) and (146.67,200.17) ..
				(165.43,215.26) ;
				\draw [shift={(167.5,217)}, rotate = 221.19] [fill={rgb, 255:red, 0; green,
					0; blue, 0 }  ][line width=0.08]  [draw opacity=0] (8.93,-4.29) -- (0,0) --
				(8.93,4.29) -- cycle    ;
				\draw    (175.5,224.5) .. controls (182.26,209.54) and (193.67,200.17) ..
				(212.43,215.26) ;
				\draw [shift={(214.5,217)}, rotate = 221.19] [fill={rgb, 255:red, 0; green,
					0; blue, 0 }  ][line width=0.08]  [draw opacity=0] (8.93,-4.29) -- (0,0) --
				(8.93,4.29) -- cycle    ;
				\draw    (222.5,224.5) .. controls (229.26,209.54) and (240.67,200.17) ..
				(259.43,215.26) ;
				\draw [shift={(261.5,217)}, rotate = 221.19] [fill={rgb, 255:red, 0; green,
					0; blue, 0 }  ][line width=0.08]  [draw opacity=0] (8.93,-4.29) -- (0,0) --
				(8.93,4.29) -- cycle    ;
				\draw    (269.5,224.5) .. controls (276.26,209.54) and (287.67,200.17) ..
				(306.43,215.26) ;
				\draw [shift={(308.5,217)}, rotate = 221.19] [fill={rgb, 255:red, 0; green,
					0; blue, 0 }  ][line width=0.08]  [draw opacity=0] (8.93,-4.29) -- (0,0) --
				(8.93,4.29) -- cycle    ;
				\draw    (316.5,224.5) .. controls (323.26,209.54) and (334.67,200.17) ..
				(353.43,215.26) ;
				\draw [shift={(355.5,217)}, rotate = 221.19] [fill={rgb, 255:red, 0; green,
					0; blue, 0 }  ][line width=0.08]  [draw opacity=0] (8.93,-4.29) -- (0,0) --
				(8.93,4.29) -- cycle    ;
				\draw    (363.5,224.5) .. controls (370.26,209.54) and (381.67,200.17) ..
				(400.43,215.26) ;
				\draw [shift={(402.5,217)}, rotate = 221.19] [fill={rgb, 255:red, 0; green,
					0; blue, 0 }  ][line width=0.08]  [draw opacity=0] (8.93,-4.29) -- (0,0) --
				(8.93,4.29) -- cycle    ;
				\draw    (222.5,224.5) .. controls (210.08,246.94) and (200.41,248.86) ..
				(185.62,236.8) ;
				\draw [shift={(183.5,235)}, rotate = 41.19] [fill={rgb, 255:red, 0; green, 0;
					blue, 0 }  ][line width=0.08]  [draw opacity=0] (8.93,-4.29) -- (0,0) --
				(8.93,4.29) -- cycle    ;
				\draw    (269.5,224.5) .. controls (257.08,246.94) and (247.41,248.86) ..
				(232.62,236.8) ;
				\draw [shift={(230.5,235)}, rotate = 41.19] [fill={rgb, 255:red, 0; green, 0;
					blue, 0 }  ][line width=0.08]  [draw opacity=0] (8.93,-4.29) -- (0,0) --
				(8.93,4.29) -- cycle    ;
				\draw    (316.5,224.5) .. controls (304.08,246.94) and (294.41,248.86) ..
				(279.62,236.8) ;
				\draw [shift={(277.5,235)}, rotate = 41.19] [fill={rgb, 255:red, 0; green, 0;
					blue, 0 }  ][line width=0.08]  [draw opacity=0] (8.93,-4.29) -- (0,0) --
				(8.93,4.29) -- cycle    ;
				\draw    (363.5,224.5) .. controls (351.09,246.94) and (341.41,248.86) ..
				(326.62,236.8) ;
				\draw [shift={(324.5,235)}, rotate = 41.19] [fill={rgb, 255:red, 0; green, 0;
					blue, 0 }  ][line width=0.08]  [draw opacity=0] (8.93,-4.29) -- (0,0) --
				(8.93,4.29) -- cycle    ;
				\draw    (410.5,224.5) .. controls (398.09,246.94) and (388.41,248.86) ..
				(373.62,236.8) ;
				\draw [shift={(371.5,235)}, rotate = 41.19] [fill={rgb, 255:red, 0; green, 0;
					blue, 0 }  ][line width=0.08]  [draw opacity=0] (8.93,-4.29) -- (0,0) --
				(8.93,4.29) -- cycle    ;
				\draw  [fill={rgb, 255:red, 0; green, 0; blue, 255 }  ,fill opacity=1 ]
				(119,224.5) .. controls (119,219.25) and (123.25,215) .. (128.5,215) .. controls
				(133.75,215) and (138,219.25) .. (138,224.5) .. controls (138,229.75) and
				(133.75,234) .. (128.5,234) .. controls (123.25,234) and (119,229.75) ..
				(119,224.5) -- cycle ;
				\draw  [fill={rgb, 255:red, 255; green, 255; blue, 255 }  ,fill opacity=1 ]
				(166,224.5) .. controls (166,219.25) and (170.25,215) .. (175.5,215) .. controls
				(180.75,215) and (185,219.25) .. (185,224.5) .. controls (185,229.75) and
				(180.75,234) .. (175.5,234) .. controls (170.25,234) and (166,229.75) ..
				(166,224.5) -- cycle ;
				\draw  [fill={rgb, 255:red, 255; green, 255; blue, 255 }  ,fill opacity=1 ]
				(213,224.5) .. controls (213,219.25) and (217.25,215) .. (222.5,215) .. controls
				(227.75,215) and (232,219.25) .. (232,224.5) .. controls (232,229.75) and
				(227.75,234) .. (222.5,234) .. controls (217.25,234) and (213,229.75) ..
				(213,224.5) -- cycle ;
				\draw  [fill={rgb, 255:red, 255; green, 255; blue, 255 }  ,fill opacity=1 ]
				(260,224.5) .. controls (260,219.25) and (264.25,215) .. (269.5,215) .. controls
				(274.75,215) and (279,219.25) .. (279,224.5) .. controls (279,229.75) and
				(274.75,234) .. (269.5,234) .. controls (264.25,234) and (260,229.75) ..
				(260,224.5) -- cycle ;
				\draw  [fill={rgb, 255:red, 255; green, 255; blue, 255 }  ,fill opacity=1 ]
				(307,224.5) .. controls (307,219.25) and (311.25,215) .. (316.5,215) .. controls
				(321.75,215) and (326,219.25) .. (326,224.5) .. controls (326,229.75) and
				(321.75,234) .. (316.5,234) .. controls (311.25,234) and (307,229.75) ..
				(307,224.5) -- cycle ;
				\draw  [fill={rgb, 255:red, 255; green, 255; blue, 255 }  ,fill opacity=1 ]
				(354,224.5) .. controls (354,219.25) and (358.25,215) .. (363.5,215) .. controls
				(368.75,215) and (373,219.25) .. (373,224.5) .. controls (373,229.75) and
				(368.75,234) .. (363.5,234) .. controls (358.25,234) and (354,229.75) ..
				(354,224.5) -- cycle ;
				\draw  [fill={rgb, 255:red, 0; green, 0; blue, 255 }  ,fill opacity=1 ]
				(401,224.5) .. controls (401,219.25) and (405.25,215) .. (410.5,215) .. controls
				(415.75,215) and (420,219.25) .. (420,224.5) .. controls (420,229.75) and
				(415.75,234) .. (410.5,234) .. controls (405.25,234) and (401,229.75) ..
				(401,224.5) -- cycle ;
				\draw  [fill={rgb, 255:red, 0; green, 0; blue, 255 }  ,fill opacity=1 ]
				(260,287) .. controls (260,281.75) and (264.25,277.5) .. (269.5,277.5) ..
				controls (274.75,277.5) and (279,281.75) .. (279,287) .. controls (279,292.25)
				and (274.75,296.5) .. (269.5,296.5) .. controls (264.25,296.5) and (260,292.25)
				.. (260,287) -- cycle ;
				\draw    (286.68,159.56) .. controls (377.22,150.2) and (400.64,156.18) ..
				(409.5,102.5) ;
				\draw [shift={(282.5,160)}, rotate = 353.93] [fill={rgb, 255:red, 0; green,
					0; blue, 0 }  ][line width=0.08]  [draw opacity=0] (8.93,-4.29) -- (0,0) --
				(8.93,4.29) -- cycle    ;
				\draw    (128.97,84.51) .. controls (135.24,48.05) and (180.23,48.73) ..
				(268.5,40) ;
				\draw [shift={(128.5,88)}, rotate = 275.71] [fill={rgb, 255:red, 0; green, 0;
					blue, 0 }  ][line width=0.08]  [draw opacity=0] (8.93,-4.29) -- (0,0) --
				(8.93,4.29) -- cycle    ;
				\draw    (411.08,85.01) .. controls (404.95,47.24) and (373.8,51.7) ..
				(268.5,40) ;
				\draw [shift={(411.5,88)}, rotate = 263.05] [fill={rgb, 255:red, 0; green, 0;
					blue, 0 }  ][line width=0.08]  [draw opacity=0] (8.93,-4.29) -- (0,0) --
				(8.93,4.29) -- cycle    ;
				\draw    (250.4,158.34) .. controls (161.58,144.29) and (131.44,153.23) ..
				(127.5,102.5) ;
				\draw [shift={(254.5,159)}, rotate = 189.26] [fill={rgb, 255:red, 0; green,
					0; blue, 0 }  ][line width=0.08]  [draw opacity=0] (8.93,-4.29) -- (0,0) --
				(8.93,4.29) -- cycle    ;
				\draw    (174.5,102.5) .. controls (162.08,124.94) and (152.41,126.86) ..
				(137.62,114.8) ;
				\draw [shift={(135.5,113)}, rotate = 41.19] [fill={rgb, 255:red, 0; green, 0;
					blue, 0 }  ][line width=0.08]  [draw opacity=0] (8.93,-4.29) -- (0,0) --
				(8.93,4.29) -- cycle    ;
				\draw    (127.5,102.5) .. controls (134.26,87.54) and (145.67,78.17) ..
				(164.43,93.26) ;
				\draw [shift={(166.5,95)}, rotate = 221.19] [fill={rgb, 255:red, 0; green, 0;
					blue, 0 }  ][line width=0.08]  [draw opacity=0] (8.93,-4.29) -- (0,0) --
				(8.93,4.29) -- cycle    ;
				\draw    (174.5,102.5) .. controls (181.26,87.54) and (192.67,78.17) ..
				(211.43,93.26) ;
				\draw [shift={(213.5,95)}, rotate = 221.19] [fill={rgb, 255:red, 0; green, 0;
					blue, 0 }  ][line width=0.08]  [draw opacity=0] (8.93,-4.29) -- (0,0) --
				(8.93,4.29) -- cycle    ;
				\draw    (221.5,102.5) .. controls (228.26,87.54) and (239.67,78.17) ..
				(258.43,93.26) ;
				\draw [shift={(260.5,95)}, rotate = 221.19] [fill={rgb, 255:red, 0; green, 0;
					blue, 0 }  ][line width=0.08]  [draw opacity=0] (8.93,-4.29) -- (0,0) --
				(8.93,4.29) -- cycle    ;
				\draw    (268.5,102.5) .. controls (275.26,87.54) and (286.67,78.17) ..
				(305.43,93.26) ;
				\draw [shift={(307.5,95)}, rotate = 221.19] [fill={rgb, 255:red, 0; green, 0;
					blue, 0 }  ][line width=0.08]  [draw opacity=0] (8.93,-4.29) -- (0,0) --
				(8.93,4.29) -- cycle    ;
				\draw    (315.5,102.5) .. controls (322.26,87.54) and (333.67,78.17) ..
				(352.43,93.26) ;
				\draw [shift={(354.5,95)}, rotate = 221.19] [fill={rgb, 255:red, 0; green, 0;
					blue, 0 }  ][line width=0.08]  [draw opacity=0] (8.93,-4.29) -- (0,0) --
				(8.93,4.29) -- cycle    ;
				\draw    (362.5,102.5) .. controls (369.26,87.54) and (380.67,78.17) ..
				(399.43,93.26) ;
				\draw [shift={(401.5,95)}, rotate = 221.19] [fill={rgb, 255:red, 0; green, 0;
					blue, 0 }  ][line width=0.08]  [draw opacity=0] (8.93,-4.29) -- (0,0) --
				(8.93,4.29) -- cycle    ;
				\draw    (221.5,102.5) .. controls (209.08,124.94) and (199.41,126.86) ..
				(184.62,114.8) ;
				\draw [shift={(182.5,113)}, rotate = 41.19] [fill={rgb, 255:red, 0; green, 0;
					blue, 0 }  ][line width=0.08]  [draw opacity=0] (8.93,-4.29) -- (0,0) --
				(8.93,4.29) -- cycle    ;
				\draw    (268.5,102.5) .. controls (256.08,124.94) and (246.41,126.86) ..
				(231.62,114.8) ;
				\draw [shift={(229.5,113)}, rotate = 41.19] [fill={rgb, 255:red, 0; green, 0;
					blue, 0 }  ][line width=0.08]  [draw opacity=0] (8.93,-4.29) -- (0,0) --
				(8.93,4.29) -- cycle    ;
				\draw    (315.5,102.5) .. controls (303.08,124.94) and (293.41,126.86) ..
				(278.62,114.8) ;
				\draw [shift={(276.5,113)}, rotate = 41.19] [fill={rgb, 255:red, 0; green, 0;
					blue, 0 }  ][line width=0.08]  [draw opacity=0] (8.93,-4.29) -- (0,0) --
				(8.93,4.29) -- cycle    ;
				\draw    (362.5,102.5) .. controls (350.09,124.94) and (340.41,126.86) ..
				(325.62,114.8) ;
				\draw [shift={(323.5,113)}, rotate = 41.19] [fill={rgb, 255:red, 0; green, 0;
					blue, 0 }  ][line width=0.08]  [draw opacity=0] (8.93,-4.29) -- (0,0) --
				(8.93,4.29) -- cycle    ;
				\draw    (409.5,102.5) .. controls (397.09,124.94) and (387.41,126.86) ..
				(372.62,114.8) ;
				\draw [shift={(370.5,113)}, rotate = 41.19] [fill={rgb, 255:red, 0; green, 0;
					blue, 0 }  ][line width=0.08]  [draw opacity=0] (8.93,-4.29) -- (0,0) --
				(8.93,4.29) -- cycle    ;
				\draw  [fill={rgb, 255:red, 0; green, 255; blue, 0 }  ,fill opacity=1 ]
				(259,40) .. controls (259,34.75) and (263.25,30.5) .. (268.5,30.5) .. controls
				(273.75,30.5) and (278,34.75) .. (278,40) .. controls (278,45.25) and
				(273.75,49.5) .. (268.5,49.5) .. controls (263.25,49.5) and (259,45.25) ..
				(259,40) -- cycle ;
				\draw  [fill={rgb, 255:red, 0; green, 255; blue, 0 }  ,fill opacity=1 ]
				(118,102.5) .. controls (118,97.25) and (122.25,93) .. (127.5,93) .. controls
				(132.75,93) and (137,97.25) .. (137,102.5) .. controls (137,107.75) and
				(132.75,112) .. (127.5,112) .. controls (122.25,112) and (118,107.75) ..
				(118,102.5) -- cycle ;
				\draw  [fill={rgb, 255:red, 0; green, 255; blue, 0 }  ,fill opacity=1 ]
				(165,102.5) .. controls (165,97.25) and (169.25,93) .. (174.5,93) .. controls
				(179.75,93) and (184,97.25) .. (184,102.5) .. controls (184,107.75) and
				(179.75,112) .. (174.5,112) .. controls (169.25,112) and (165,107.75) ..
				(165,102.5) -- cycle ;
				\draw  [fill={rgb, 255:red, 0; green, 255; blue, 0 }  ,fill opacity=1 ]
				(212,102.5) .. controls (212,97.25) and (216.25,93) .. (221.5,93) .. controls
				(226.75,93) and (231,97.25) .. (231,102.5) .. controls (231,107.75) and
				(226.75,112) .. (221.5,112) .. controls (216.25,112) and (212,107.75) ..
				(212,102.5) -- cycle ;
				\draw  [fill={rgb, 255:red, 0; green, 255; blue, 0 }  ,fill opacity=1 ]
				(259,102.5) .. controls (259,97.25) and (263.25,93) .. (268.5,93) .. controls
				(273.75,93) and (278,97.25) .. (278,102.5) .. controls (278,107.75) and
				(273.75,112) .. (268.5,112) .. controls (263.25,112) and (259,107.75) ..
				(259,102.5) -- cycle ;
				\draw  [fill={rgb, 255:red, 0; green, 255; blue, 0 }  ,fill opacity=1 ]
				(306,102.5) .. controls (306,97.25) and (310.25,93) .. (315.5,93) .. controls
				(320.75,93) and (325,97.25) .. (325,102.5) .. controls (325,107.75) and
				(320.75,112) .. (315.5,112) .. controls (310.25,112) and (306,107.75) ..
				(306,102.5) -- cycle ;
				\draw  [fill={rgb, 255:red, 0; green, 255; blue, 0 }  ,fill opacity=1 ]
				(353,102.5) .. controls (353,97.25) and (357.25,93) .. (362.5,93) .. controls
				(367.75,93) and (372,97.25) .. (372,102.5) .. controls (372,107.75) and
				(367.75,112) .. (362.5,112) .. controls (357.25,112) and (353,107.75) ..
				(353,102.5) -- cycle ;
				\draw  [fill={rgb, 255:red, 0; green, 255; blue, 0 }  ,fill opacity=1 ]
				(400,102.5) .. controls (400,97.25) and (404.25,93) .. (409.5,93) .. controls
				(414.75,93) and (419,97.25) .. (419,102.5) .. controls (419,107.75) and
				(414.75,112) .. (409.5,112) .. controls (404.25,112) and (400,107.75) ..
				(400,102.5) -- cycle ;
				\draw  [fill={rgb, 255:red, 0; green, 255; blue, 0 }  ,fill opacity=1 ]
				(259,165) .. controls (259,159.75) and (263.25,155.5) .. (268.5,155.5) ..
				controls (273.75,155.5) and (278,159.75) .. (278,165) .. controls (278,170.25)
				and (273.75,174.5) .. (268.5,174.5) .. controls (263.25,174.5) and (259,170.25)
				.. (259,165) -- cycle ;
				\draw    (288.68,603.56) .. controls (379.22,594.2) and (402.64,600.18) ..
				(411.5,546.5) ;
				\draw [shift={(284.5,604)}, rotate = 353.93] [fill={rgb, 255:red, 0; green,
					0; blue, 0 }  ][line width=0.08]  [draw opacity=0] (8.93,-4.29) -- (0,0) --
				(8.93,4.29) -- cycle    ;
				\draw    (252.4,602.34) .. controls (163.58,588.29) and (133.44,597.23) ..
				(129.5,546.5) ;
				\draw [shift={(256.5,603)}, rotate = 189.26] [fill={rgb, 255:red, 0; green,
					0; blue, 0 }  ][line width=0.08]  [draw opacity=0] (8.93,-4.29) -- (0,0) --
				(8.93,4.29) -- cycle    ;
				\draw    (129.97,531.51) .. controls (136.24,495.05) and (181.23,495.73) ..
				(269.5,487) ;
				\draw [shift={(129.5,535)}, rotate = 275.71] [fill={rgb, 255:red, 0; green,
					0; blue, 0 }  ][line width=0.08]  [draw opacity=0] (8.93,-4.29) -- (0,0) --
				(8.93,4.29) -- cycle    ;
				\draw    (412.08,532.01) .. controls (405.95,494.24) and (374.8,498.7) ..
				(269.5,487) ;
				\draw [shift={(412.5,535)}, rotate = 263.05] [fill={rgb, 255:red, 0; green,
					0; blue, 0 }  ][line width=0.08]  [draw opacity=0] (8.93,-4.29) -- (0,0) --
				(8.93,4.29) -- cycle    ;
				\draw    (176.5,546.5) .. controls (164.08,568.94) and (154.41,570.86) ..
				(139.62,558.8) ;
				\draw [shift={(137.5,557)}, rotate = 41.19] [fill={rgb, 255:red, 0; green, 0;
					blue, 0 }  ][line width=0.08]  [draw opacity=0] (8.93,-4.29) -- (0,0) --
				(8.93,4.29) -- cycle    ;
				\draw    (129.5,546.5) .. controls (136.26,531.54) and (147.67,522.17) ..
				(166.43,537.26) ;
				\draw [shift={(168.5,539)}, rotate = 221.19] [fill={rgb, 255:red, 0; green,
					0; blue, 0 }  ][line width=0.08]  [draw opacity=0] (8.93,-4.29) -- (0,0) --
				(8.93,4.29) -- cycle    ;
				\draw    (176.5,546.5) .. controls (183.26,531.54) and (194.67,522.17) ..
				(213.43,537.26) ;
				\draw [shift={(215.5,539)}, rotate = 221.19] [fill={rgb, 255:red, 0; green,
					0; blue, 0 }  ][line width=0.08]  [draw opacity=0] (8.93,-4.29) -- (0,0) --
				(8.93,4.29) -- cycle    ;
				\draw    (223.5,546.5) .. controls (230.26,531.54) and (241.67,522.17) ..
				(260.43,537.26) ;
				\draw [shift={(262.5,539)}, rotate = 221.19] [fill={rgb, 255:red, 0; green,
					0; blue, 0 }  ][line width=0.08]  [draw opacity=0] (8.93,-4.29) -- (0,0) --
				(8.93,4.29) -- cycle    ;
				\draw    (270.5,546.5) .. controls (277.26,531.54) and (288.67,522.17) ..
				(307.43,537.26) ;
				\draw [shift={(309.5,539)}, rotate = 221.19] [fill={rgb, 255:red, 0; green,
					0; blue, 0 }  ][line width=0.08]  [draw opacity=0] (8.93,-4.29) -- (0,0) --
				(8.93,4.29) -- cycle    ;
				\draw    (317.5,546.5) .. controls (324.26,531.54) and (335.67,522.17) ..
				(354.43,537.26) ;
				\draw [shift={(356.5,539)}, rotate = 221.19] [fill={rgb, 255:red, 0; green,
					0; blue, 0 }  ][line width=0.08]  [draw opacity=0] (8.93,-4.29) -- (0,0) --
				(8.93,4.29) -- cycle    ;
				\draw    (364.5,546.5) .. controls (371.26,531.54) and (382.67,522.17) ..
				(401.43,537.26) ;
				\draw [shift={(403.5,539)}, rotate = 221.19] [fill={rgb, 255:red, 0; green,
					0; blue, 0 }  ][line width=0.08]  [draw opacity=0] (8.93,-4.29) -- (0,0) --
				(8.93,4.29) -- cycle    ;
				\draw    (223.5,546.5) .. controls (211.08,568.94) and (201.41,570.86) ..
				(186.62,558.8) ;
				\draw [shift={(184.5,557)}, rotate = 41.19] [fill={rgb, 255:red, 0; green, 0;
					blue, 0 }  ][line width=0.08]  [draw opacity=0] (8.93,-4.29) -- (0,0) --
				(8.93,4.29) -- cycle    ;
				\draw    (270.5,546.5) .. controls (258.08,568.94) and (248.41,570.86) ..
				(233.62,558.8) ;
				\draw [shift={(231.5,557)}, rotate = 41.19] [fill={rgb, 255:red, 0; green, 0;
					blue, 0 }  ][line width=0.08]  [draw opacity=0] (8.93,-4.29) -- (0,0) --
				(8.93,4.29) -- cycle    ;
				\draw    (317.5,546.5) .. controls (305.08,568.94) and (295.41,570.86) ..
				(280.62,558.8) ;
				\draw [shift={(278.5,557)}, rotate = 41.19] [fill={rgb, 255:red, 0; green, 0;
					blue, 0 }  ][line width=0.08]  [draw opacity=0] (8.93,-4.29) -- (0,0) --
				(8.93,4.29) -- cycle    ;
				\draw    (364.5,546.5) .. controls (352.09,568.94) and (342.41,570.86) ..
				(327.62,558.8) ;
				\draw [shift={(325.5,557)}, rotate = 41.19] [fill={rgb, 255:red, 0; green, 0;
					blue, 0 }  ][line width=0.08]  [draw opacity=0] (8.93,-4.29) -- (0,0) --
				(8.93,4.29) -- cycle    ;
				\draw    (411.5,546.5) .. controls (399.09,568.94) and (389.41,570.86) ..
				(374.62,558.8) ;
				\draw [shift={(372.5,557)}, rotate = 41.19] [fill={rgb, 255:red, 0; green, 0;
					blue, 0 }  ][line width=0.08]  [draw opacity=0] (8.93,-4.29) -- (0,0) --
				(8.93,4.29) -- cycle    ;
				\draw  [fill={rgb, 255:red, 0; green, 0; blue, 255 }  ,fill opacity=1 ]
				(120,546.5) .. controls (120,541.25) and (124.25,537) .. (129.5,537) .. controls
				(134.75,537) and (139,541.25) .. (139,546.5) .. controls (139,551.75) and
				(134.75,556) .. (129.5,556) .. controls (124.25,556) and (120,551.75) ..
				(120,546.5) -- cycle ;
				\draw  [fill={rgb, 255:red, 255; green, 255; blue, 255 }  ,fill opacity=1 ]
				(167,546.5) .. controls (167,541.25) and (171.25,537) .. (176.5,537) .. controls
				(181.75,537) and (186,541.25) .. (186,546.5) .. controls (186,551.75) and
				(181.75,556) .. (176.5,556) .. controls (171.25,556) and (167,551.75) ..
				(167,546.5) -- cycle ;
				\draw  [fill={rgb, 255:red, 255; green, 255; blue, 255 }  ,fill opacity=1 ]
				(214,546.5) .. controls (214,541.25) and (218.25,537) .. (223.5,537) .. controls
				(228.75,537) and (233,541.25) .. (233,546.5) .. controls (233,551.75) and
				(228.75,556) .. (223.5,556) .. controls (218.25,556) and (214,551.75) ..
				(214,546.5) -- cycle ;
				\draw  [fill={rgb, 255:red, 255; green, 255; blue, 255 }  ,fill opacity=1 ]
				(261,546.5) .. controls (261,541.25) and (265.25,537) .. (270.5,537) .. controls
				(275.75,537) and (280,541.25) .. (280,546.5) .. controls (280,551.75) and
				(275.75,556) .. (270.5,556) .. controls (265.25,556) and (261,551.75) ..
				(261,546.5) -- cycle ;
				\draw  [fill={rgb, 255:red, 255; green, 255; blue, 255 }  ,fill opacity=1 ]
				(308,546.5) .. controls (308,541.25) and (312.25,537) .. (317.5,537) .. controls
				(322.75,537) and (327,541.25) .. (327,546.5) .. controls (327,551.75) and
				(322.75,556) .. (317.5,556) .. controls (312.25,556) and (308,551.75) ..
				(308,546.5) -- cycle ;
				\draw  [fill={rgb, 255:red, 255; green, 255; blue, 255 }  ,fill opacity=1 ]
				(355,546.5) .. controls (355,541.25) and (359.25,537) .. (364.5,537) .. controls
				(369.75,537) and (374,541.25) .. (374,546.5) .. controls (374,551.75) and
				(369.75,556) .. (364.5,556) .. controls (359.25,556) and (355,551.75) ..
				(355,546.5) -- cycle ;
				\draw  [fill={rgb, 255:red, 0; green, 0; blue, 255 }  ,fill opacity=1 ]
				(402,546.5) .. controls (402,541.25) and (406.25,537) .. (411.5,537) .. controls
				(416.75,537) and (421,541.25) .. (421,546.5) .. controls (421,551.75) and
				(416.75,556) .. (411.5,556) .. controls (406.25,556) and (402,551.75) ..
				(402,546.5) -- cycle ;
				\draw  [fill={rgb, 255:red, 0; green, 0; blue, 255 }  ,fill opacity=1 ]
				(261,609) .. controls (261,603.75) and (265.25,599.5) .. (270.5,599.5) ..
				controls (275.75,599.5) and (280,603.75) .. (280,609) .. controls (280,614.25)
				and (275.75,618.5) .. (270.5,618.5) .. controls (265.25,618.5) and (261,614.25)
				.. (261,609) -- cycle ;
				\draw  [fill={rgb, 255:red, 0; green, 0; blue, 255 }  ,fill opacity=1 ]
				(260,487) .. controls (260,481.75) and (264.25,477.5) .. (269.5,477.5) ..
				controls (274.75,477.5) and (279,481.75) .. (279,487) .. controls (279,492.25)
				and (274.75,496.5) .. (269.5,496.5) .. controls (264.25,496.5) and (260,492.25)
				.. (260,487) -- cycle ;
				\draw    (283.68,480.56) .. controls (374.22,471.2) and (397.64,477.18) ..
				(406.5,423.5) ;
				\draw [shift={(279.5,481)}, rotate = 353.93] [fill={rgb, 255:red, 0; green,
					0; blue, 0 }  ][line width=0.08]  [draw opacity=0] (8.93,-4.29) -- (0,0) --
				(8.93,4.29) -- cycle    ;
				\draw    (253.4,479.34) .. controls (164.58,465.29) and (134.44,474.23) ..
				(130.5,423.5) ;
				\draw [shift={(257.5,480)}, rotate = 189.26] [fill={rgb, 255:red, 0; green,
					0; blue, 0 }  ][line width=0.08]  [draw opacity=0] (8.93,-4.29) -- (0,0) --
				(8.93,4.29) -- cycle    ;
				\draw  [fill={rgb, 255:red, 0; green, 255; blue, 0 }  ,fill opacity=1 ]
				(551,178.5) .. controls (551,173.25) and (555.25,169) .. (560.5,169) .. controls
				(565.75,169) and (570,173.25) .. (570,178.5) .. controls (570,183.75) and
				(565.75,188) .. (560.5,188) .. controls (555.25,188) and (551,183.75) ..
				(551,178.5) -- cycle ;
				\draw  [fill={rgb, 255:red, 0; green, 255; blue, 0 }  ,fill opacity=1 ]
				(551,238.75) .. controls (551,233.5) and (555.25,229.25) .. (560.5,229.25) ..
				controls (565.75,229.25) and (570,233.5) .. (570,238.75) .. controls (570,244)
				and (565.75,248.25) .. (560.5,248.25) .. controls (555.25,248.25) and (551,244)
				.. (551,238.75) -- cycle ;
				\draw  [fill={rgb, 255:red, 255; green, 255; blue, 255 }  ,fill opacity=1 ]
				(551,383) .. controls (551,377.75) and (555.25,373.5) .. (560.5,373.5) ..
				controls (565.75,373.5) and (570,377.75) .. (570,383) .. controls (570,388.25)
				and (565.75,392.5) .. (560.5,392.5) .. controls (555.25,392.5) and (551,388.25)
				.. (551,383) -- cycle ;
				
				\draw (263,366.4) node [anchor=north west][inner sep=0.75pt]    {\Huge
					$\vdots $};
				\draw (92,91.4) node [anchor=north west][inner sep=0.75pt]    {$x_{1}$};
				\draw (90,214.4) node [anchor=north west][inner sep=0.75pt]    {$x_{2}$};
				\draw (92,536.4) node [anchor=north west][inner sep=0.75pt]    {$x_{n}$};
				\draw (260,10.4) node [anchor=north west][inner sep=0.75pt]    {$s$};
				\draw (266,623.4) node [anchor=north west][inner sep=0.75pt]    {$t$};
				\draw (552,297.4) node [anchor=north west][inner sep=0.75pt]    {\Huge
					$\vdots $};
				\draw (583,172.4) node [anchor=north west][inner sep=0.75pt]    {$C_{1}$};
				\draw (584,231.4) node [anchor=north west][inner sep=0.75pt]    {$C_{2}$};
				\draw (582,375.4) node [anchor=north west][inner sep=0.75pt]    {$C_{m}$};

			\end{tikzpicture}

			\caption{The graph $f_\p^\ast(\varphi)$. There are $n$ rows (one per
				variable). The edges between the vertices corresponding to the clauses and the
				rest of the graph are not represented here. The green vertices are those in
				$V_{x_1}$ (assuming that $x_1$ or its negation belongs to $C_1$ and $C_2$, but
				not to $C_m$). The blue vertices are those in $C_{x_1} \setminus V_{x_1}$.}
			\label{fig:ham1}
		\end{figure}

		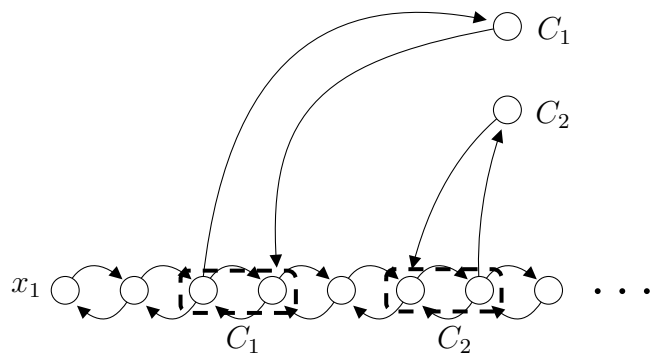
\begin{figure}
			\centering
			
			\begin{tikzpicture}[x=0.55pt,y=0.55pt,yscale=-1,xscale=1]
				
				\draw    (410.5,224.5) .. controls (407.59,186.18) and (414.09,147.4) ..
				(424.52,116.81) ;
				\draw [shift={(425.5,114)}, rotate = 109.54] [fill={rgb, 255:red, 0; green,
					0; blue, 0 }  ][line width=0.08]  [draw opacity=0] (8.93,-4.29) -- (0,0) --
				(8.93,4.29) -- cycle    ;
				\draw    (365.35,206.67) .. controls (379.5,152.69) and (404.02,120.6) ..
				(429.5,100.75) ;
				\draw [shift={(364.5,210)}, rotate = 284.04] [fill={rgb, 255:red, 0; green,
					0; blue, 0 }  ][line width=0.08]  [draw opacity=0] (8.93,-4.29) -- (0,0) --
				(8.93,4.29) -- cycle    ;
				\draw    (222.5,215) .. controls (241.31,49.67) and (321.87,20.57) ..
				(412.74,38.44) ;
				\draw [shift={(415.5,39)}, rotate = 191.67] [fill={rgb, 255:red, 0; green, 0;
					blue, 0 }  ][line width=0.08]  [draw opacity=0] (8.93,-4.29) -- (0,0) --
				(8.93,4.29) -- cycle    ;
				\draw    (272.27,205.72) .. controls (265.29,98.34) and (304.76,65.78) ..
				(429.5,43.5) ;
				\draw [shift={(272.5,209)}, rotate = 265.84] [fill={rgb, 255:red, 0; green,
					0; blue, 0 }  ][line width=0.08]  [draw opacity=0] (8.93,-4.29) -- (0,0) --
				(8.93,4.29) -- cycle    ;
				\draw    (410.5,224.5) .. controls (417.26,209.54) and (428.67,200.17) ..
				(447.43,215.26) ;
				\draw [shift={(449.5,217)}, rotate = 221.19] [fill={rgb, 255:red, 0; green,
					0; blue, 0 }  ][line width=0.08]  [draw opacity=0] (8.93,-4.29) -- (0,0) --
				(8.93,4.29) -- cycle    ;
				\draw    (175.5,224.5) .. controls (163.08,246.94) and (153.41,248.86) ..
				(138.62,236.8) ;
				\draw [shift={(136.5,235)}, rotate = 41.19] [fill={rgb, 255:red, 0; green, 0;
					blue, 0 }  ][line width=0.08]  [draw opacity=0] (8.93,-4.29) -- (0,0) --
				(8.93,4.29) -- cycle    ;
				\draw    (128.5,224.5) .. controls (135.26,209.54) and (146.67,200.17) ..
				(165.43,215.26) ;
				\draw [shift={(167.5,217)}, rotate = 221.19] [fill={rgb, 255:red, 0; green,
					0; blue, 0 }  ][line width=0.08]  [draw opacity=0] (8.93,-4.29) -- (0,0) --
				(8.93,4.29) -- cycle    ;
				\draw    (175.5,224.5) .. controls (182.26,209.54) and (193.67,200.17) ..
				(212.43,215.26) ;
				\draw [shift={(214.5,217)}, rotate = 221.19] [fill={rgb, 255:red, 0; green,
					0; blue, 0 }  ][line width=0.08]  [draw opacity=0] (8.93,-4.29) -- (0,0) --
				(8.93,4.29) -- cycle    ;
				\draw    (222.5,224.5) .. controls (229.26,209.54) and (240.67,200.17) ..
				(259.43,215.26) ;
				\draw [shift={(261.5,217)}, rotate = 221.19] [fill={rgb, 255:red, 0; green,
					0; blue, 0 }  ][line width=0.08]  [draw opacity=0] (8.93,-4.29) -- (0,0) --
				(8.93,4.29) -- cycle    ;
				\draw    (269.5,224.5) .. controls (276.26,209.54) and (287.67,200.17) ..
				(306.43,215.26) ;
				\draw [shift={(308.5,217)}, rotate = 221.19] [fill={rgb, 255:red, 0; green,
					0; blue, 0 }  ][line width=0.08]  [draw opacity=0] (8.93,-4.29) -- (0,0) --
				(8.93,4.29) -- cycle    ;
				\draw    (316.5,224.5) .. controls (323.26,209.54) and (334.67,200.17) ..
				(353.43,215.26) ;
				\draw [shift={(355.5,217)}, rotate = 221.19] [fill={rgb, 255:red, 0; green,
					0; blue, 0 }  ][line width=0.08]  [draw opacity=0] (8.93,-4.29) -- (0,0) --
				(8.93,4.29) -- cycle    ;
				\draw    (363.5,224.5) .. controls (370.26,209.54) and (381.67,200.17) ..
				(400.43,215.26) ;
				\draw [shift={(402.5,217)}, rotate = 221.19] [fill={rgb, 255:red, 0; green,
					0; blue, 0 }  ][line width=0.08]  [draw opacity=0] (8.93,-4.29) -- (0,0) --
				(8.93,4.29) -- cycle    ;
				\draw    (222.5,224.5) .. controls (210.08,246.94) and (200.41,248.86) ..
				(185.62,236.8) ;
				\draw [shift={(183.5,235)}, rotate = 41.19] [fill={rgb, 255:red, 0; green, 0;
					blue, 0 }  ][line width=0.08]  [draw opacity=0] (8.93,-4.29) -- (0,0) --
				(8.93,4.29) -- cycle    ;
				\draw    (269.5,224.5) .. controls (257.08,246.94) and (247.41,248.86) ..
				(232.62,236.8) ;
				\draw [shift={(230.5,235)}, rotate = 41.19] [fill={rgb, 255:red, 0; green, 0;
					blue, 0 }  ][line width=0.08]  [draw opacity=0] (8.93,-4.29) -- (0,0) --
				(8.93,4.29) -- cycle    ;
				\draw    (316.5,224.5) .. controls (304.08,246.94) and (294.41,248.86) ..
				(279.62,236.8) ;
				\draw [shift={(277.5,235)}, rotate = 41.19] [fill={rgb, 255:red, 0; green, 0;
					blue, 0 }  ][line width=0.08]  [draw opacity=0] (8.93,-4.29) -- (0,0) --
				(8.93,4.29) -- cycle    ;
				\draw    (363.5,224.5) .. controls (351.09,246.94) and (341.41,248.86) ..
				(326.62,236.8) ;
				\draw [shift={(324.5,235)}, rotate = 41.19] [fill={rgb, 255:red, 0; green, 0;
					blue, 0 }  ][line width=0.08]  [draw opacity=0] (8.93,-4.29) -- (0,0) --
				(8.93,4.29) -- cycle    ;
				\draw    (410.5,224.5) .. controls (398.09,246.94) and (388.41,248.86) ..
				(373.62,236.8) ;
				\draw [shift={(371.5,235)}, rotate = 41.19] [fill={rgb, 255:red, 0; green, 0;
					blue, 0 }  ][line width=0.08]  [draw opacity=0] (8.93,-4.29) -- (0,0) --
				(8.93,4.29) -- cycle    ;
				\draw  [fill={rgb, 255:red, 255; green, 255; blue, 255 }  ,fill opacity=1 ]
				(119,224.5) .. controls (119,219.25) and (123.25,215) .. (128.5,215) .. controls
				(133.75,215) and (138,219.25) .. (138,224.5) .. controls (138,229.75) and
				(133.75,234) .. (128.5,234) .. controls (123.25,234) and (119,229.75) ..
				(119,224.5) -- cycle ;
				\draw  [fill={rgb, 255:red, 255; green, 255; blue, 255 }  ,fill opacity=1 ]
				(166,224.5) .. controls (166,219.25) and (170.25,215) .. (175.5,215) .. controls
				(180.75,215) and (185,219.25) .. (185,224.5) .. controls (185,229.75) and
				(180.75,234) .. (175.5,234) .. controls (170.25,234) and (166,229.75) ..
				(166,224.5) -- cycle ;
				\draw  [fill={rgb, 255:red, 255; green, 255; blue, 255 }  ,fill opacity=1 ]
				(213,224.5) .. controls (213,219.25) and (217.25,215) .. (222.5,215) .. controls
				(227.75,215) and (232,219.25) .. (232,224.5) .. controls (232,229.75) and
				(227.75,234) .. (222.5,234) .. controls (217.25,234) and (213,229.75) ..
				(213,224.5) -- cycle ;
				\draw  [fill={rgb, 255:red, 255; green, 255; blue, 255 }  ,fill opacity=1 ]
				(260,224.5) .. controls (260,219.25) and (264.25,215) .. (269.5,215) .. controls
				(274.75,215) and (279,219.25) .. (279,224.5) .. controls (279,229.75) and
				(274.75,234) .. (269.5,234) .. controls (264.25,234) and (260,229.75) ..
				(260,224.5) -- cycle ;
				\draw  [fill={rgb, 255:red, 255; green, 255; blue, 255 }  ,fill opacity=1 ]
				(307,224.5) .. controls (307,219.25) and (311.25,215) .. (316.5,215) .. controls
				(321.75,215) and (326,219.25) .. (326,224.5) .. controls (326,229.75) and
				(321.75,234) .. (316.5,234) .. controls (311.25,234) and (307,229.75) ..
				(307,224.5) -- cycle ;
				\draw  [fill={rgb, 255:red, 255; green, 255; blue, 255 }  ,fill opacity=1 ]
				(354,224.5) .. controls (354,219.25) and (358.25,215) .. (363.5,215) .. controls
				(368.75,215) and (373,219.25) .. (373,224.5) .. controls (373,229.75) and
				(368.75,234) .. (363.5,234) .. controls (358.25,234) and (354,229.75) ..
				(354,224.5) -- cycle ;
				\draw  [fill={rgb, 255:red, 255; green, 255; blue, 255 }  ,fill opacity=1 ]
				(401,224.5) .. controls (401,219.25) and (405.25,215) .. (410.5,215) .. controls
				(415.75,215) and (420,219.25) .. (420,224.5) .. controls (420,229.75) and
				(415.75,234) .. (410.5,234) .. controls (405.25,234) and (401,229.75) ..
				(401,224.5) -- cycle ;
				\draw  [fill={rgb, 255:red, 255; green, 255; blue, 255 }  ,fill opacity=1 ]
				(420,43.5) .. controls (420,38.25) and (424.25,34) .. (429.5,34) .. controls
				(434.75,34) and (439,38.25) .. (439,43.5) .. controls (439,48.75) and
				(434.75,53) .. (429.5,53) .. controls (424.25,53) and (420,48.75) .. (420,43.5)
				-- cycle ;
				\draw  [fill={rgb, 255:red, 255; green, 255; blue, 255 }  ,fill opacity=1 ]
				(420,100.75) .. controls (420,95.5) and (424.25,91.25) .. (429.5,91.25) ..
				controls (434.75,91.25) and (439,95.5) .. (439,100.75) .. controls (439,106) and
				(434.75,110.25) .. (429.5,110.25) .. controls (424.25,110.25) and (420,106) ..
				(420,100.75) -- cycle ;
				\draw  [dash pattern={on 5.63pt off 4.5pt}][line width=1.5]  (207,216.8) ..
				controls (207,213.6) and (209.6,211) .. (212.8,211) -- (279.7,211) .. controls
				(282.9,211) and (285.5,213.6) .. (285.5,216.8) -- (285.5,234.2) .. controls
				(285.5,237.4) and (282.9,240) .. (279.7,240) -- (212.8,240) .. controls
				(209.6,240) and (207,237.4) .. (207,234.2) -- cycle ;
				\draw    (457.5,224.5) .. controls (445.09,246.94) and (435.41,248.86) ..
				(420.62,236.8) ;
				\draw [shift={(418.5,235)}, rotate = 41.19] [fill={rgb, 255:red, 0; green, 0;
					blue, 0 }  ][line width=0.08]  [draw opacity=0] (8.93,-4.29) -- (0,0) --
				(8.93,4.29) -- cycle    ;
				\draw  [fill={rgb, 255:red, 255; green, 255; blue, 255 }  ,fill opacity=1 ]
				(448,224.5) .. controls (448,219.25) and (452.25,215) .. (457.5,215) .. controls
				(462.75,215) and (467,219.25) .. (467,224.5) .. controls (467,229.75) and
				(462.75,234) .. (457.5,234) .. controls (452.25,234) and (448,229.75) ..
				(448,224.5) -- cycle ;
				\draw  [dash pattern={on 5.63pt off 4.5pt}][line width=1.5]  (347,215.8) ..
				controls (347,212.6) and (349.6,210) .. (352.8,210) -- (419.7,210) .. controls
				(422.9,210) and (425.5,212.6) .. (425.5,215.8) -- (425.5,233.2) .. controls
				(425.5,236.4) and (422.9,239) .. (419.7,239) -- (352.8,239) .. controls
				(349.6,239) and (347,236.4) .. (347,233.2) -- cycle ;
				
				\draw (90,214.4) node [anchor=north west][inner sep=0.75pt]    {$x_{1}$};
				\draw (448,35.4) node [anchor=north west][inner sep=0.75pt]    {$C_{1}$};
				\draw (447,93.4) node [anchor=north west][inner sep=0.75pt]    {$C_{2}$};
				\draw (484,216) node [anchor=north west][inner sep=0.75pt]    {\LARGE $\cdots
					$};
				\draw (236,247.4) node [anchor=north west][inner sep=0.75pt]    {$C_{1}$};
				\draw (380,247.4) node [anchor=north west][inner sep=0.75pt]    {$C_{2}$};

			\end{tikzpicture}

			\caption{The connection between the nodes corresponding to clauses and the
				rest of the graph. On this example, $x_1$ appears in $C_1$, $\overline{x_1}$
				appears in $C_2$.}
			\label{fig:ham2}
		\end{figure}

		Now, let us explain how to construct the graph $f_\p(\varphi)$ from
		$f_\p^\ast(\varphi)$. We proceed again as in~\cite{Sipser}, by replacing each
		vertex $u$ of $f_\p^\ast(\varphi)$ by three vertices $u_{\text{in}}$,
		$u_{\text{mid}}$, $u_{\text{out}}$. These three vertices form a path, in this
		order. Each (directed) edge from $u$ to $v$ is replaced an (undirected) edge
		from $u_{\text{out}}$ to $v_{\text{in}}$.
		
		The identifiers of the vertices of $f_\p(\varphi)$ consist in a pair: the
		first entry indicates the identifier of the corresponding vertex $u$ in
		$f_\p^\ast(\varphi)$ (see thereafter the definition), and the second entry
		indicates if it is $u_{\text{in}}$, $u_{\text{mid}}$ or $u_{\text{out}}$. In
		$f_\p^\ast(\varphi)$, the identifier of each vertex $u$ indicates if it is a
		vertex corresponding to a clause (by encoding the identifiers of the three
		variables in it), or to a variable, and in this latter case if it is linked to a
		clause vertex (and which one) or if it is a separator between two pairs (and the
		corresponding pairs it separates). All this information can be encoded with
		$O(\log n)$ bits.
		
		For every variable $x_i$ of $\varphi$, let us describe the sets $C_{x_i}$ and
		$V_{x_i}$ (and see Figure~\ref{fig:ham1} for an example).
		\begin{itemize}
			\item The set $V_{x_i}$ consists in all the vertices $u_{\text{in}}$,
			$u_{\text{mid}}$, $u_{\text{out}}$ for each vertex $u$ which is either the entry
			node corresponding to $x_i$, or the exit node, or a node in the row of $x_i$, or
			a node corresponding to a clause to which $x_i$ or its negation belongs. These
			vertices are colored in green on Figure~\ref{fig:ham1}.
			
			\item The set $C_{x_i}$ consists in $V_{x_i}$, plus all the vertices
			$u_{\text{in}}$, $u_{\text{mid}}$, $u_{\text{out}}$ for $u$ which is either an
			entry/exit node of any variable, or a first/last vertex in the row of any
			variable. These vertices are colored in blue on Figure~\ref{fig:ham1}.
		\end{itemize}
		
		Let us verify that the properties of the definition of local reduction from 3-SAT (see the beginning of section~\ref{sec:red3SAT}) are satisfied. For \cref{s1}, the identifiers are indeed on $O(\log |V(\varphi)|)$
		bits. For \cref{s2}, it is proved
                in~\cite{Sipser}. Properties \cref{s3a}, \cref{s3b}, \cref{s3d}, \cref{s3e}, and \cref{s3f} follow from the definition of $C_{x_i}$ and $V_{x_i}$. For \cref{s3c}, we indeed
		have $|C_{x_i}| = O(n)$ for every variable $x_i$: this holds because, by
		definition of $\F_0$, $x_i$ appears in at most $n$ clauses.
		
		Thus, we can apply Corollary~\ref{cor:reduction3SAT} which gives us the
		$\Omega\left(\frac{\sqrt{n}}{\log n}\right)$ lower bound. Since $f_\p(G)$ has
		maximum degree~at most~4, the lower bound holds even for this class of graphs.
	\end{proof}

	\subsection{Chromatic index}

	The \emph{chromatic index} of a graph $G$ is the least $k$
        such that the edges of $G$ can be colored with $k$ colors, so
        that any two edges sharing a endpoint have distinct colors.
   By Vizing's theorem, if~$\Delta$ is the maximum degree of~$G$, its chromatic index is either~$\Delta$ or $\Delta+1$.
	
	\begin{theorem}
		\label{thm:coloration aretes}
		The property of  having chromatic
                index $\Delta+1$ (where $\Delta$ denotes the maximum degree) has local complexity $\Omega(n/\log
                n)$, even in cubic graphs.
	\end{theorem}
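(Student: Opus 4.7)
The plan is to apply the 3-SAT local reduction framework (Corollary~\ref{cor:reduction3SAT}(2)) with a local version of Holyer's classical NP-completeness reduction from 3-SAT to 3-edge-colorability of cubic graphs. Let $\p$ be the property ``chromatic index equal to $\Delta$''; in cubic graphs this is exactly 3-edge-colorability, so $\overline{\p}$ is the property in the statement. By Corollary~\ref{cor:reduction3SAT}(2), it suffices to exhibit a local reduction from 3-SAT in $\F_1$ to $\p$ with local expansion $O(1)$ and global expansion $O(n)$; the $\Omega(n/\log n)$ lower bound will then follow with $\delta = 0$ and $\gamma = 1$.

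Given $\varphi \in \F_1$, I would build a cubic graph $f_\p(\varphi)$ from two constant-size building blocks used in Holyer's proof. The first is a \emph{variable gadget} for each variable $x$: a cubic subgraph with $2k$ designated ``output'' edges (where $k \le 7$ is the number of clauses in which $x$ appears), such that in any proper 3-edge-coloring exactly one color class is consistent across all output edges of $x$ (encoding the truth value of $x$), and both truth values are realizable. To accommodate the fact that $x$ can appear positively or negatively, the variable gadget is constructed so that each pair of output edges corresponds to one occurrence, with the second edge of the pair serving as the ``negation'' signal. The second block is a \emph{clause gadget}, a cubic subgraph of constant size with three input edges, which admits a proper 3-edge-coloring iff at least one input edge carries the ``true'' color. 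One then connects each clause gadget's three inputs to the appropriate output edges of the corresponding variable gadgets by subdividing edges and using constant-size cubic ``connector'' gadgets that preserve color signals, ensuring the final graph is cubic. The resulting graph $f_\p(\varphi)$ is 3-edge-colorable iff $\varphi$ is satisfiable; this is Holyer's theorem, proved by analyzing the forced behaviour of the gadgets.

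For the local reduction data, I would set $C_x = V_x$ to be the union of the variable gadget of $x$, all clause gadgets containing $x$ or $\neg x$, and the connector gadgets between them. Since $x$ appears in at most $7$ clauses and all gadgets have constant size, $|C_x| = O(1)$, so the local expansion is $O(1)$ and by Remark~\ref{rem:def local reduction} the global expansion is $O(n)$. Conditions~\cref{s3a},~\cref{s3b},~\cref{s3e} hold by construction. For~\cref{s3d}: if $t \in C_y \cap C_{y'}$ then $t$ lies either in the variable gadget of some $z$ (so $y = y' = z$ works), or in a clause gadget of some clause $C$ (so both $y$ and $y'$ occur in $C$, hence share a clause); in either case the required path of length $\le 2$ exists. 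The encoding of identifiers as ``(variable/clause index, role inside gadget)'' fits in $O(\log n)$ bits, giving~\cref{s1} and~\cref{s3f}.

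The only genuinely non-routine ingredient is Holyer's gadget construction itself; I would cite it directly rather than reproduce the proof, noting that all gadgets are cubic and constant size, which is precisely what makes the reduction local. Finally, since the 3-SAT formulas arising from graphs $G$ of maximum degree $4$ belong to $\F_1$, Theorem~\ref{thm:local reduction 3SAT} converts this into a local reduction from 3-colorability on graphs of maximum degree $4$ to $\p$ with expansions $O(1)$ and $O(n)$, so Corollary~\ref{cor:reduction3SAT}(2) yields certificates of size $\Omega(n/\log n)$ for $\overline{\p}$ in cubic graphs, as claimed.
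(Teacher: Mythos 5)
Your proposal follows essentially the same route as the paper: a local reduction from 3-SAT in $\F_1$ to 3-edge-colorability of cubic graphs via Holyer's gadget construction, with $C_x = V_x$ taken to be the variable gadget of $x$ together with all clause gadgets and connecting pieces touching $x$, giving local expansion $O(1)$, followed by an application of Corollary~\ref{cor:reduction3SAT}(2). The paper spells out Holyer's inverting, variable-setting and satisfaction-testing components and handles the degree-1 endpoints in the satisfaction-testing components by doubling the graph and identifying those edges, a step you gloss over (and your description of how negated occurrences are handled via ``the second edge of the pair'' is not quite how Holyer's construction works — negation is realized by inserting an extra inverting component on the branch, not by a designated edge of the output pair); but these are presentational details of a construction you explicitly defer to the reference, and the architecture of your argument coincides with the paper's.
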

	
	
	
	\begin{proof}[Proof of Theorem~\ref{thm:coloration aretes}]
		Let $\p$ be the property of having chromatic index $\Delta$. We will show that
		there exists a local reduction from 3-SAT in $\F_1$ to $\p$ with local expansion
		$O(1)$, and then the result will follow from
                Corollary~\ref{cor:reduction3SAT}. Recall that  $\F_1$
                was defined as the set of 3-CNF
	formulae $\varphi$ such that each variable appears in at most 7~clauses.
		Let $\varphi$ be a 3-CNF formula with variables having identifiers in
		$\{1, \ldots, |V(\varphi)|\}$, and such that each variable appears in at most
		7~clauses. Let $n:=|V(\varphi)|$, and let $m$ be the number of clauses in
		$\varphi$. Let $x_1, \ldots, x_n$ be the variables of $\varphi$, and $C_1,
		\ldots, C_m$ be its clauses.
		
		We use the reduction presented in~\cite{Holyer81}. For completeness, we recall
		this construction. First, we define an \emph{inverting component}, which is
		shown on Figure~\ref{fig:inverting}.
		
		\begin{figure}[h]
			\centering
			\includegraphics[scale=1]{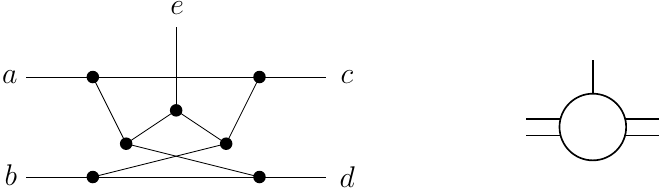}
			\caption{An inverting component and its symbolic representation.}
			\label{fig:inverting}
		\end{figure}
		
		In an edge-coloring, a pair of edges is said to be \emph{true} (resp.\
		\emph{false}) if it is colored with the same colors (resp.\ with different
		colors). In any proper 3-edge-coloring of an inverting component,
		exactly one of the two pairs $(a,b)$ or $(c,d)$ is true, and the three other
		edges have different colors. Thus, regarding $(a,b)$ as the input and $(c,d)$ as
		the output, an inverting component changes true to false and vice-versa.
		
		Then, we define a \emph{variable-setting} component with $k$ outputs. It consists in a cycle of $k$ pairs of inverting components, as shown on
			Figure~\ref{fig:variable setting} with $k=4$.
		In any proper 3-coloring of the edges, all the output pairs must have the same
		value, either all true or all false.
		
		\begin{figure}[h]
			\centering
			\includegraphics[scale=0.6]{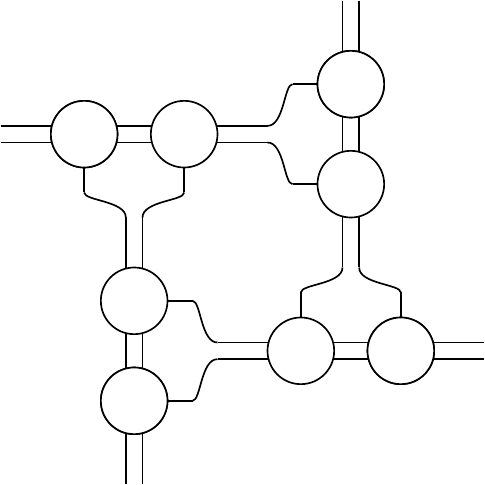}
			\caption{A variable-setting component with four outputs.}
			\label{fig:variable setting}
		\end{figure}
		
		Finally, for each clause, we will create a \emph{satisfaction-testing
			component}, represented on Figure~\ref{fig:satisfaction testing}. This component
		has a proper 3-coloring of the edges if and only if at least one of its three
		input pairs is true.
		
		\begin{figure}[h]
			\centering
			\includegraphics[scale=0.6]{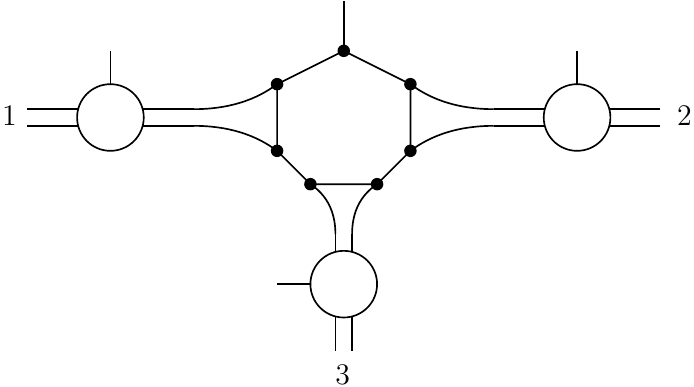}
			\caption{A satisfaction-testing component.}
			\label{fig:satisfaction testing}
		\end{figure}
		
		The graph $f_\p(\varphi)$ is constructed as follows.
		For each variable $x_i$ such that $x_i$ or its negation belongs to $d$ clauses,
		we create a variable-setting component with $d$ outputs, numbered from~$1$
		to~$d$.
		For each clause $C_j$, we create a satisfaction-testing component. Let
		$x_{j_1}, x_{j_2}, x_{j_3}$ be the variables which appear (positively or
		negatively) in $C_j$, by increasing order of their identifiers. Let $\ell$ be
		the index such that, by sorting the clauses containing $x_{j_1}$ or its negation
		by lexicographic order of the variables appearing inside, $C_j$ is the
		$\ell$-th. We connect the $\ell$-th output of the variable-setting component of
		$x_{j_1}$ to the input~1 of the satisfaction-testing component of $C_j$, and if
		$x_{j_1}$ appears negatively in $C_j$, we insert an inverting component in this
		branching. We do the same for $x_{j_2}$ and $x_{j_3}$ with inputs~2 and~3 of the
		satisfaction-testing component.
		This results in a graph $H$, which has some edges having an endpoint of
		degree~1 (in the satisfaction-testing components).
		The graph $f_\p(\varphi)$ is obtained by taking two copies of this graph $H$
		and identifying these edges. It results in a cubic graph.
		
		The identifier of each vertex $u$ of $f_\p(G)$ is the following. First, it
		indicates to which of the two copies of the graph $H$ does $u$ belong to.
		Then, if $u$ belongs to a variable-setting component, it indicates to which
		variable it corresponds, its position in the variable-setting component, and if it is adjacent to an output, it also indicates the clause corresponding to the satisfaction-testing component to which it is adjacent.
		If $u$ belongs to a satisfaction-testing component, it indicates to which
		clause it corresponds, and its position in the satisfaction-testing component.
		All this information can be encoded on $O(\log n)$ bits.
		
		For every variable $x_i$ of $\varphi$, we define $C_{x_i} = V_{x_i}$ as being
		the set which contains all the vertices in the two copies of $H$ which are in
		the variable-setting component of $x_i$, in the satisfaction-testing of any
		clause to which $x_i$ belongs, and in an inverting component branched between
		its variable-setting component and the satisfaction-testing component of a
		clause in which it appears negatively.
		
		Let us verify that the properties of the definition of local reduction from 3-SAT (see the beginning of section~\ref{sec:red3SAT}) are satisfied. Property~\cref{s1} is true by definition of the identifiers of the
		vertices of $f_\p(\varphi)$, and~\cref{s2} comes from the correctness of the reduction
		in~\cite{Holyer81}. Properties \cref{s3a}, \cref{s3b}, \cref{s3e} and~\cref{s3f} follow from the
		definition of $V_{x_i}$ and $C_{x_i}$. For \cref{s3f}, we have $C_{x_i} = O(1)$ for
		every variable $x_i$, because by definition of $\F_1$, $x_i$ appears in a
		constant number of clauses (at most~7). Finally, \cref{s3d} is true because if
		$C_{x_i} \cap C_{x_j} \neq \emptyset$ then either $x_i=x_j$, or $x_i$ and $x_j$
		have a clause in common.
		
		Thus, Corollary~\ref{cor:reduction3SAT} gives us a
                lower bound of order
		$\Omega(n/\log n)$ on the local complexity of
                of the problem, and the lower bound holds even for cubic
		graphs because $f_\p(\varphi)$ is a cubic graph.
	\end{proof}

	\section{Variant of our theorems in a bounded-degree class}
	\label{sec:variants}
	
	
	In this subsection, we give an example of a lower bound for
        which the proof does not follow from a direct application of our
        general results on local reductions, because the definition of
        local reduction requires a small tweak to be applied
        efficiently here. This lower bound is stated in the following
        theorem. We then show that the bound on the maximum degree~is
        close to 
        optimal: if the maximum degree~is decreased by two, then the
        local complexity of the problem drops from $\Omega(n/\log n)$ to $O(\log n)$.

	\begin{theorem}
		\label{thm:nonkcol_lineaire}
		Let $k \geqslant 3$. In graphs of maximum degree~$k + \lceil\sqrt{k}\rceil - 1$, non-$k$-colorability has local
                complexity $\Omega(n/\log n)$.
	\end{theorem}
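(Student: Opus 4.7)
The plan is to exhibit a local reduction (in the sense of the variant of Theorem~\ref{thm:main theorem local reduction} proved earlier in this section) from $3$-colorability in graphs of maximum degree $4$ to $k$-colorability in graphs of maximum degree $k + \lceil\sqrt{k}\rceil - 1$, with local expansion $O(1)$ and global expansion $O(n)$. By Theorem~\ref{thm:non3coldegmax4_lineaire}, non-$3$-colorability in graphs of maximum degree $4$ has local complexity $\Omega(n/\log n)$; any such reduction will transfer this bound to non-$k$-colorability in graphs of maximum degree $k + \lceil\sqrt{k}\rceil - 1$.

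For the construction of $G' = f(G)$, I would refine the ``second attempt'' of Section~\ref{sec:example}. Setting $r := \lceil\sqrt{k}\rceil$, for each $u \in V(G)$ I would introduce a constant-size gadget $F_u$ containing a distinguished copy $\hat u$ of $u$ adjacent to a clique $K_u$ of size $k-3$, so that the color of $\hat u$ in any $k$-coloring of $G'$ is forced to lie in a three-element \emph{free palette} (the three colors of $[k]$ missed by $K_u$). For each $uv \in E(G)$ I would add $\hat u \hat v$ together with a \emph{sparse} bipartite synchronization gadget between $K_u$ and $K_v$ that enforces that the two free palettes coincide, but adding only $O(\sqrt{k})$ to each vertex's degree---as opposed to the $\Theta(k)$ of a full antimatching used in Section~\ref{sec:example}. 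A direct count then gives $\Delta(G') \leq k + \lceil\sqrt{k}\rceil - 1$: $\hat u$ has degree $4 + (k-3) = k+1$, and a vertex of $K_u$ has degree $(k-4) + 1 + O(\sqrt{k})$.

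For the verification, I would take $V_u = C_u := V(F_u) \cup \bigcup_{v \in N_G[u]} V(F_v)$, which is of constant size (so the local expansion is $O(1)$ and the global expansion is $O(n)$). The correctness of the reduction---that $G$ is $3$-colorable iff $G'$ is $k$-colorable---follows from two observations: any proper $3$-coloring of $G$ extends to $G'$ by coloring every $K_u$ uniformly with $\{4,\ldots,k\}$ and transferring the $3$-coloring to the $\hat u$'s; and conversely, the synchronization forces all $\hat u$'s to use colors from a common three-element palette, giving a proper $3$-coloring of $G$. The conditions \cref{l1}--\cref{l3f} of the definition of a local reduction are routine to check; any narrow failure (e.g., of \cref{l3e} for synchronization-gadget vertices that straddle two adjacent gadgets) is precisely what the variant of Theorem~\ref{thm:main theorem local reduction} accommodates, by allowing the verifier at $u$ to access the certificates of a slightly larger ball in $G$ at no asymptotic cost in the bounded-degree setting.

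The hardest part will be the design of the sparse synchronization gadget. Using only $O(\sqrt{k})$ extra edges per vertex of $K_u$, one must force the two cliques $K_u$ and $K_v$ to use the same set of $k-3$ colors in any $k$-coloring (or at least a condition strong enough to yield a proper $3$-coloring of $G$). A natural first attempt---splitting each $K_u$ into four groups of size $(k-3)/4$ (one per neighbor in $G$) and adding an antimatching within each group---only synchronizes sub-palettes, so additional ingredients (auxiliary vertices, or exploiting transitivity via the clique structure of $K_u$) will be needed to propagate synchronization across the whole palette. This combinatorial design is where the $\sqrt{k}$ slack in the degree bound is essential.
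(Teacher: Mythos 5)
Your proposal does not complete the proof: the crucial ``sparse synchronization gadget'' is never constructed, and you acknowledge this yourself in the final paragraph (``additional ingredients\ldots will be needed''). The gap is not cosmetic. Forcing two disjoint $(k-3)$-cliques $K_u$ and $K_v$ to occupy the \emph{same} $(k-3)$-element color set in every proper $k$-coloring is a much stronger constraint than what sparse bipartite gadgets readily enforce. Even the full antimatching (which in Section~\ref{sec:example} synchronizes two $k$-cliques) fails here: a vertex $K_u[i]$ adjacent to all of $K_v$ except $K_v[i]$ is only forbidden the colors of $K_v\setminus\{K_v[i]\}$, so it may take either $K_v[i]$'s color \emph{or} any of the three colors unused by $K_v$; nothing ties the two ``free palettes'' together. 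Your proposed split-into-groups gadget is, as you note, even weaker. So the combinatorial core of your plan is an open design problem, not a routine detail, and it is not clear it can be solved within the $O(\sqrt{k})$ degree budget.

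The paper's actual route sidesteps palette synchronization entirely. It starts from non-$k$-colorability in graphs of maximum degree $5k-4$ (already established in Theorem~\ref{thm:nonkcol_quadratique} via the full-antimatching gadget, which tolerates degree $\Theta(k)$), and then applies the degree-reduction operation of Emden-Weinert, Hougardy and Kratochv\'il a constant number of times, reducing the maximum degree by one per pass until it reaches $k+\lceil\sqrt{k}\rceil-1$. That operation does not try to equate the color sets of two large cliques; instead it splits the edges at a high-degree vertex $u$ across $\lceil\sqrt{k}\rceil$ copies $r_1,\dots,r_{\lceil\sqrt{k}\rceil}$ and attaches $u$ and all the $r_i$'s to a shared $(k-1)$-clique, which simply forces $u,r_1,\dots,r_{\lceil\sqrt{k}\rceil}$ to receive the \emph{same single color}. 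Enforcing ``same color for $O(\sqrt{k})$ vertices'' via one shared $(k-1)$-clique is exactly what the $\sqrt{k}$ slack buys, and it avoids the palette-coincidence problem that blocks your gadget. The locality issue you flag (the construction at $u$ depends on $N^2[u]$, not $N[u]$) is genuine and the paper handles it the same way you suggest, via the bounded-degree variant in Proposition~\ref{prop:bounded_degree}. So your understanding of the locality mechanics is correct; what is missing is a working gadget, and the Emden-Weinert-style ``color-equalizing'' gadget is a cleaner and provably correct substitute for the ``palette-equalizing'' gadget you were aiming for.
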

	
	Note that Theorem~\ref{thm:nonkcol_lineaire} generalizes Theorem~\ref{thm:non3coldegmax4_lineaire} to non-$k$-colorability.
	To prove it, we will rely on the reduction presented in~\cite{Emden-WeinertHK98}, that the authors use to prove the \mbox{\textsf{NP}-completeness} of the $k$-colorability problem in graphs of maximum degree~$k + \lceil\sqrt{k}\rceil - 1$. First, let us recall this \textsf{NP}-completeness reduction. Given a graph $G$ of maximum degree~larger than $k + \lceil\sqrt{k}\rceil - 1$, we perform iteratively the following operation: we pick a vertex $u \in V(G)$ of degree~$d(u) > k + \lceil\sqrt{k}\rceil - 1$, add vertices $r_1, \ldots, r_{\lceil\sqrt{k}\rceil}$ to $G$, detach $k$ edges incident to $u$ and attach them evenly to $r_1, \ldots, r_{\lceil\sqrt{k}\rceil}$. Finally, we add a $(k-1)$-clique $K_{k-1}$ and connect~$u$ and all the vertices $r_1, \ldots, r_{\lceil\sqrt{k}\rceil}$ to each vertex of $K_{k-1}$. It is straightforward that this operation preserves the $k$-colorability property, and decreases the degree~of $u$ by one, without creating vertices of degree~larger than $k + \lceil\sqrt{k}\rceil - 1$. Thus, at the end of the process, all the vertices have degree~at most $k + \lceil\sqrt{k}\rceil - 1$ and the $k$-colorability property of the graph has been preserved. See Figure~\ref{fig:kcol_degree_decreasing} for an example of this operation.
	In the following, we will use it to prove Theorem~\ref{thm:nonkcol_lineaire}.

	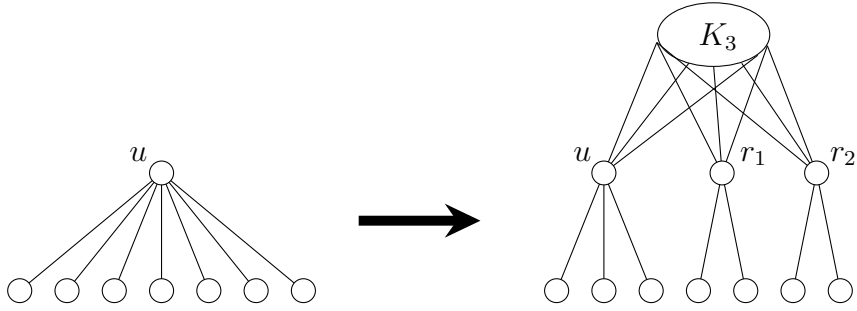
\begin{figure}[h!]
	\centering
	
	\begin{tikzpicture}[x=0.6pt,y=0.6pt,yscale=-1,xscale=1]
		
		\draw    (577.25,114) -- (530.5,45) ;
		\draw    (518.25,114) -- (513,47) ;
		\draw    (444.5,114) -- (497.5,45) ;
		\draw    (477.5,32) -- (444.5,114) ;
		\draw    (540.5,40) -- (444.5,114) ;
		\draw    (477.5,32) -- (518.25,114) ;
		\draw    (546.5,34) -- (518.25,114) ;
		\draw    (477.5,32) -- (577.25,114) ;
		\draw    (546.5,34) -- (577.25,114) ;
		\draw    (444.5,114) -- (444.5,188) ;
		\draw    (518.25,114) -- (533,188) ;
		\draw    (577.25,114) -- (592,188) ;
		\draw    (444.5,114) -- (415,188) ;
		\draw    (444.5,114) -- (474,188) ;
		\draw    (518.25,114) -- (503.5,188) ;
		\draw    (577.25,114) -- (562.5,188) ;
		\draw    (168.5,114) -- (168.5,188) ;
		\draw    (168.5,114) -- (80,188) ;
		\draw    (168.5,114) -- (109.5,188) ;
		\draw    (168.5,114) -- (139,188) ;
		\draw    (168.5,114) -- (198,188) ;
		\draw    (168.5,114) -- (227.5,188) ;
		\draw    (168.5,114) -- (257,188) ;
		\draw  [fill={rgb, 255:red, 255; green, 255; blue, 255 }  ,fill opacity=1 ] (72.5,188) .. controls (72.5,183.86) and (75.86,180.5) .. (80,180.5) .. controls (84.14,180.5) and (87.5,183.86) .. (87.5,188) .. controls (87.5,192.14) and (84.14,195.5) .. (80,195.5) .. controls (75.86,195.5) and (72.5,192.14) .. (72.5,188) -- cycle ;
		\draw  [fill={rgb, 255:red, 255; green, 255; blue, 255 }  ,fill opacity=1 ] (102,188) .. controls (102,183.86) and (105.36,180.5) .. (109.5,180.5) .. controls (113.64,180.5) and (117,183.86) .. (117,188) .. controls (117,192.14) and (113.64,195.5) .. (109.5,195.5) .. controls (105.36,195.5) and (102,192.14) .. (102,188) -- cycle ;
		\draw  [fill={rgb, 255:red, 255; green, 255; blue, 255 }  ,fill opacity=1 ] (131.5,188) .. controls (131.5,183.86) and (134.86,180.5) .. (139,180.5) .. controls (143.14,180.5) and (146.5,183.86) .. (146.5,188) .. controls (146.5,192.14) and (143.14,195.5) .. (139,195.5) .. controls (134.86,195.5) and (131.5,192.14) .. (131.5,188) -- cycle ;
		\draw  [fill={rgb, 255:red, 255; green, 255; blue, 255 }  ,fill opacity=1 ] (161,188) .. controls (161,183.86) and (164.36,180.5) .. (168.5,180.5) .. controls (172.64,180.5) and (176,183.86) .. (176,188) .. controls (176,192.14) and (172.64,195.5) .. (168.5,195.5) .. controls (164.36,195.5) and (161,192.14) .. (161,188) -- cycle ;
		\draw  [fill={rgb, 255:red, 255; green, 255; blue, 255 }  ,fill opacity=1 ] (190.5,188) .. controls (190.5,183.86) and (193.86,180.5) .. (198,180.5) .. controls (202.14,180.5) and (205.5,183.86) .. (205.5,188) .. controls (205.5,192.14) and (202.14,195.5) .. (198,195.5) .. controls (193.86,195.5) and (190.5,192.14) .. (190.5,188) -- cycle ;
		\draw  [fill={rgb, 255:red, 255; green, 255; blue, 255 }  ,fill opacity=1 ] (249.5,188) .. controls (249.5,183.86) and (252.86,180.5) .. (257,180.5) .. controls (261.14,180.5) and (264.5,183.86) .. (264.5,188) .. controls (264.5,192.14) and (261.14,195.5) .. (257,195.5) .. controls (252.86,195.5) and (249.5,192.14) .. (249.5,188) -- cycle ;
		\draw  [fill={rgb, 255:red, 255; green, 255; blue, 255 }  ,fill opacity=1 ] (220,188) .. controls (220,183.86) and (223.36,180.5) .. (227.5,180.5) .. controls (231.64,180.5) and (235,183.86) .. (235,188) .. controls (235,192.14) and (231.64,195.5) .. (227.5,195.5) .. controls (223.36,195.5) and (220,192.14) .. (220,188) -- cycle ;
		\draw  [fill={rgb, 255:red, 255; green, 255; blue, 255 }  ,fill opacity=1 ] (161,114) .. controls (161,109.86) and (164.36,106.5) .. (168.5,106.5) .. controls (172.64,106.5) and (176,109.86) .. (176,114) .. controls (176,118.14) and (172.64,121.5) .. (168.5,121.5) .. controls (164.36,121.5) and (161,118.14) .. (161,114) -- cycle ;
		\draw  [fill={rgb, 255:red, 255; green, 255; blue, 255 }  ,fill opacity=1 ] (437,114) .. controls (437,109.86) and (440.36,106.5) .. (444.5,106.5) .. controls (448.64,106.5) and (452,109.86) .. (452,114) .. controls (452,118.14) and (448.64,121.5) .. (444.5,121.5) .. controls (440.36,121.5) and (437,118.14) .. (437,114) -- cycle ;
		\draw  [fill={rgb, 255:red, 255; green, 255; blue, 255 }  ,fill opacity=1 ] (437,188) .. controls (437,183.86) and (440.36,180.5) .. (444.5,180.5) .. controls (448.64,180.5) and (452,183.86) .. (452,188) .. controls (452,192.14) and (448.64,195.5) .. (444.5,195.5) .. controls (440.36,195.5) and (437,192.14) .. (437,188) -- cycle ;
		\draw  [fill={rgb, 255:red, 255; green, 255; blue, 255 }  ,fill opacity=1 ] (407.5,188) .. controls (407.5,183.86) and (410.86,180.5) .. (415,180.5) .. controls (419.14,180.5) and (422.5,183.86) .. (422.5,188) .. controls (422.5,192.14) and (419.14,195.5) .. (415,195.5) .. controls (410.86,195.5) and (407.5,192.14) .. (407.5,188) -- cycle ;
		\draw  [fill={rgb, 255:red, 255; green, 255; blue, 255 }  ,fill opacity=1 ] (466.5,188) .. controls (466.5,183.86) and (469.86,180.5) .. (474,180.5) .. controls (478.14,180.5) and (481.5,183.86) .. (481.5,188) .. controls (481.5,192.14) and (478.14,195.5) .. (474,195.5) .. controls (469.86,195.5) and (466.5,192.14) .. (466.5,188) -- cycle ;
		\draw  [fill={rgb, 255:red, 255; green, 255; blue, 255 }  ,fill opacity=1 ] (496,188) .. controls (496,183.86) and (499.36,180.5) .. (503.5,180.5) .. controls (507.64,180.5) and (511,183.86) .. (511,188) .. controls (511,192.14) and (507.64,195.5) .. (503.5,195.5) .. controls (499.36,195.5) and (496,192.14) .. (496,188) -- cycle ;
		\draw  [fill={rgb, 255:red, 255; green, 255; blue, 255 }  ,fill opacity=1 ] (525.5,188) .. controls (525.5,183.86) and (528.86,180.5) .. (533,180.5) .. controls (537.14,180.5) and (540.5,183.86) .. (540.5,188) .. controls (540.5,192.14) and (537.14,195.5) .. (533,195.5) .. controls (528.86,195.5) and (525.5,192.14) .. (525.5,188) -- cycle ;
		\draw  [fill={rgb, 255:red, 255; green, 255; blue, 255 }  ,fill opacity=1 ] (555,188) .. controls (555,183.86) and (558.36,180.5) .. (562.5,180.5) .. controls (566.64,180.5) and (570,183.86) .. (570,188) .. controls (570,192.14) and (566.64,195.5) .. (562.5,195.5) .. controls (558.36,195.5) and (555,192.14) .. (555,188) -- cycle ;
		\draw  [fill={rgb, 255:red, 255; green, 255; blue, 255 }  ,fill opacity=1 ] (584.5,188) .. controls (584.5,183.86) and (587.86,180.5) .. (592,180.5) .. controls (596.14,180.5) and (599.5,183.86) .. (599.5,188) .. controls (599.5,192.14) and (596.14,195.5) .. (592,195.5) .. controls (587.86,195.5) and (584.5,192.14) .. (584.5,188) -- cycle ;
		\draw  [fill={rgb, 255:red, 255; green, 255; blue, 255 }  ,fill opacity=1 ] (569.75,114) .. controls (569.75,109.86) and (573.11,106.5) .. (577.25,106.5) .. controls (581.39,106.5) and (584.75,109.86) .. (584.75,114) .. controls (584.75,118.14) and (581.39,121.5) .. (577.25,121.5) .. controls (573.11,121.5) and (569.75,118.14) .. (569.75,114) -- cycle ;
		\draw  [fill={rgb, 255:red, 255; green, 255; blue, 255 }  ,fill opacity=1 ] (510.75,114) .. controls (510.75,109.86) and (514.11,106.5) .. (518.25,106.5) .. controls (522.39,106.5) and (525.75,109.86) .. (525.75,114) .. controls (525.75,118.14) and (522.39,121.5) .. (518.25,121.5) .. controls (514.11,121.5) and (510.75,118.14) .. (510.75,114) -- cycle ;
		\draw   (478,27) .. controls (478,15.95) and (493.67,7) .. (513,7) .. controls (532.33,7) and (548,15.95) .. (548,27) .. controls (548,38.05) and (532.33,47) .. (513,47) .. controls (493.67,47) and (478,38.05) .. (478,27) -- cycle ;
		\draw [line width=3.75]    (291.5,144) -- (360.5,144) ;
		\draw [shift={(367.5,144)}, rotate = 180] [fill={rgb, 255:red, 0; green, 0; blue, 0 }  ][line width=0.08]  [draw opacity=0] (22.33,-10.72) -- (0,0) -- (22.33,10.73) -- (14.83,0) -- cycle    ;
		
		\draw (502,18.4) node [anchor=north west][inner sep=0.75pt]    {$K_{3}$};
		
		\draw (147,96) node [anchor=north west][inner sep=0.75pt]    {$u$};
		\draw (424,96) node [anchor=north west][inner sep=0.75pt]    {$u$};
		\draw (528,96) node [anchor=north west][inner sep=0.75pt]    {$r_1$};
		\draw (584,96) node [anchor=north west][inner sep=0.75pt]    {$r_2$};

	\end{tikzpicture}
	
	\caption{An example of the operation performed in the reduction of~\cite{Emden-WeinertHK98}, with $k=4$. The degree~of $u$ decreased by one after this operation.}
	\label{fig:kcol_degree_decreasing}
	
	\end{figure}

	
	First, recall that by Theorem~\ref{thm:nonkcol_quadratique}, at least $\Omega(n/\log n)$ bits are necessary to certify non-$k$-colorability in graphs of maximum degree~$5k-4$. We will prove that, for every $\delta \in \{k + \lceil\sqrt{k}\rceil, \ldots, 5k-4\}$, there exists a local reduction of local expansion~$O(1)$ from non-$k$-colorability in graphs of maximum degree~$\delta$ to non-$k$-colorability in graphs of maximum degree~$\delta-1$, with a slightly different definition of local reduction (and we will prove that this small change will not affect Theorem~\ref{thm:main theorem local reduction}). By doing so, since $k$ is a constant, by applying iteratively Theorem~\ref{thm:main theorem local reduction}, the $\Omega(n/\log n)$ lower bound for non-$k$-colorability will hold in graphs of maximum degree~$k + \lceil\sqrt{k}\rceil-1$.
	
	Let $\delta \in \{k + \lceil\sqrt{k}\rceil, \ldots, 5k-4\}$,
        and let $\p$ (resp.\ $\p'$) be the $k$-colorability property in graphs of maximum degree~$\delta$ (resp.\ $\delta-1$). Let $G$ be an $n$-vertex graph with vertices having unique identifiers in $\{1, \ldots, n\}$ and maximum degree~$\delta$.
	We define the graph $G' = f_{\p,\p'}(G)$ in the following way. We perform simultaneously, for all the vertices $u$ of~$G$, the operation represented on Figure~\ref{fig:kcol_degree_decreasing}: namely, we add $\lceil \sqrt{k} \rceil$ additional vertices $r_1(u), \ldots, r_{\lceil \sqrt{k} \rceil}(u)$ and a clique $K_{k-1}(u)$ of size $k-1$ complete to $u$ and to all the vertices $r_1(u), \ldots, r_{\lceil \sqrt{k} \rceil}(u)$. Then, for each edge $uv$ of $G$, let $i_u(v)$ (resp.\ $i_v(u)$) denote the index of the identifier of $v$ (resp.\ of $u$) in the list of identifiers of the neighbors of~$u$ (resp.\ of $v$) sorted in increasing order. We put an edge in $f_{\p',\p}(G)$ which has the following endpoints:
	\begin{itemize}
		\item if $i_u(v) \leqslant k$, the first endpoint is $r_{(i_u(v) \pmod {\lceil \sqrt{k} \rceil)} + 1}(u)$, otherwise it is $u$;
		\item if $i_v(u) \leqslant k$, the second endpoint is $r_{(i_v(u) \pmod {\lceil \sqrt{k} \rceil)} + 1}(v)$, otherwise it is $v$.
	\end{itemize}
	
	In other words: for each vertex $u$, we detach the edges of $u$ incident to its $\lceil \sqrt{k} \rceil$ neighbors having smallest identifiers, and attach them evenly to $r_1(u), \ldots, r_{\lceil \sqrt{k} \rceil}(u)$.
	Let us also define $V_u=C_u$ as the set containing $u$, $r_1(u), \ldots, r_{\lceil \sqrt{k} \rceil}(u)$, $K_{k-1}(u)$.
	Finally, we define an identifier for every vertex in $G'$ that indicates to which vertex it corresponds in $G$, if it belongs to a $(k-1)$-clique, or if it is a vertex $r_i(u)$ for some~$i$ and~$u$. All this information can be encoded on $O(\log n)$ bits.
	
	Now, we would like to prove that it defines a local reduction
        of local expansion $O(1)$ and to apply Theorem~\ref{thm:main
          theorem local reduction}. In fact, it is nearly the case:
        all the properties of the definition of local reduction (see
        Section~\ref{sec:reduc}) are satisfied (by a straightforward
        verification) except property~\cref{l3f}. Indeed, the
        neighborhood of $V_u$ does not only depend on $N[u]$, because
        for each $v \in N(u)$, $i_v(u)$ depends on the neighborhood of
        $v$. In fact, it is straightforward to see that the
        neighborhood of $V_u$ depends only on $N^2[u]$. We will now
        show that this is not a significant issue, as soon as the property $\p$ is a property on a class of bounded-degree graphs (which is in particular the case here, since we look at graphs of maximum degree$~\delta$, which is at most $5k-4$, which is itself a constant).
	
	\begin{proposition}
		\label{prop:bounded_degree}
		Let $\p$, $\p'$ be two graph properties, on two graph classes $\C$, $\C'$ respectively, and let $d \geqslant 1$. If $\C$ is a class of bounded-degree graphs, then Theorem~\ref{thm:main theorem local reduction} (and thus also Corollary~\ref{cor:reduction3col}) remains true even if we replace condition \cref{l3f} of the definition of local reduction (see Section~\ref{sec:reduc}) by the following one:
		\begin{enumerate}
				\item[(R3f')] $V_u$ and its neighborhood (resp.\ $C_u$) only depends on $N^d[u]$. In
				other words: if $G'$ is another $n$-vertex graph with unique identifiers in
				$\{1, \ldots, n\}$, and if the subgraph with identifiers formed by the vertices in $N^{d-1}[u]$ and their adjacent edges is the same in $G$ and
				$G'$, then the sets $V_u$ and the subgraphs with identifiers formed by the vertices in $V_u$ and their adjacent edges in
				$f_{\p',\p}(G)$ and $f_{\p',\p}(G')$ are the same (resp.\ the sets $C_u$ in $f_{\p',\p}(G)$
				and $f_{\p',\p}(G')$ are the same).
		\end{enumerate}
	\end{proposition}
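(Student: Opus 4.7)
The plan is to adapt the proof of Theorem~\ref{thm:main theorem local reduction} by augmenting each vertex's certificate with a description of its local neighborhood. Since $\C$ has bounded maximum degree, say $\Delta$, the set $N^d[u]$ has size at most $\Delta^d+1=O(1)$ for every $u$, so the labeled subgraph formed by $N^{d-1}[u]$ together with its adjacent edges can be encoded using $O(\log n)$ bits per vertex (a constant number of identifiers of $O(\log n)$ bits each, plus a constant-size adjacency structure). Since we assume $s(n)=\Omega(\log n)$, this additional cost is absorbed into the bound $O\bigl(\alpha(n)\cdot s(\beta(n))\bigr)$ of Theorem~\ref{thm:main theorem local reduction}.

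More concretely, I would modify the certification from the proof of Theorem~\ref{thm:main theorem local reduction} as follows: in addition to the table $\tab(u)$, the prover writes in the certificate of each vertex $u$ a description $\nu(u)$ of the labeled subgraph formed by $N^{d-1}[u]$ and its adjacent edges. The verification then proceeds in two stages. In the first stage, each vertex $u$ checks that $\nu(u)$ is consistent with the descriptions $\nu(v)$ of its neighbors: for every $v\in N_G(u)$, the labeled subgraphs described by $\nu(u)$ and $\nu(v)$ must coincide on their common vertices and edges; moreover, the set of neighbors of $u$ declared in $\nu(u)$ must equal the set of identifiers of the vertices providing a certificate in $u$'s local view. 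If any of these checks fails, $u$ rejects.

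After the first stage, each vertex $u$ can rely on $\nu(u)$ as a truthful description of $N^{d-1}[u]$ and its incident edges. By condition (R3f'), this is enough information for $u$ to compute $V_u$, the neighborhood of $V_u$, and $C_u$ in $f_{\p,\p'}(G)$, just as it did before using only $N_G[u]$. In the second stage, $u$ then runs the verification procedure of Theorem~\ref{thm:main theorem local reduction} (steps (i)--(iv)) essentially unchanged. Completeness is immediate: the prover simply writes down the true neighborhoods and the tables from the original proof.

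The only nontrivial point is the soundness of the consistency check in the first stage, namely that if all vertices accept then every $\nu(u)$ really describes the corresponding neighborhood of $G$. This follows by a standard propagation argument using connectedness of $G$: vertex $u$ knows its own identifier, and it learns the identifiers of its actual neighbors from the certificates in its local view, so the check on declared neighbors pins down $N_G(u)$ inside $\nu(u)$; propagating along a BFS from any base vertex, the overlap-consistency constraints then force every $\nu(v)$ to describe the true $N^{d-1}[v]$ in $G$. Once this is established, the soundness argument of Theorem~\ref{thm:main theorem local reduction} transfers verbatim, and the analogue of Corollary~\ref{cor:reduction3col} is a direct formal consequence.
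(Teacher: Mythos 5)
Your proposal is correct and follows essentially the same route as the paper: augment each certificate with an $O(\log n)$-bit description of the labeled subgraph formed by $N^{d-1}[u]$ and its incident edges (this size bound uses that $\C$ has bounded degree), verify its consistency locally, and then proceed exactly as in the original proof of Theorem~\ref{thm:main theorem local reduction} with condition (R3f') in place of \cref{l3f}. The paper states the consistency-verification step without elaboration; you expand it slightly into a two-stage verification with an overlap-propagation argument, but this is the same standard technique.
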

	
	\begin{proof}
		Since $\C$ is a class of bounded-degree graphs, the set $N^{d-1}[u]$ has constant size for every vertex $u$. In the proof of Theorem~\ref{thm:main theorem local reduction}, the prover adds the following information in the certificate of each vertex~$u$: for every $v \in N^{d-1}[u]$, it writes the identifier of $v$, its distance to~$u$, and all the edges adjacent to $v$. This information has size $O(\log n)$ and its correctness can be checked by the vertices at the beginning of the verification procedure. Thus, we can assume that each vertex $u$ knows all the edges adjacent to the vertices in $N^{d-1}[u]$, and the rest of the proof of Theorem~\ref{thm:main theorem local reduction} is the same.
	\end{proof}
	
	Finally, the result of Theorem~\ref{thm:nonkcol_lineaire}
        follows from Proposition~\ref{prop:bounded_degree}: indeed,
        for every $\delta \in \{k + \lceil \sqrt{k} \rceil, \ldots,
        5k-4\}$, we obtain a local reduction (in the sense of
        Proposition~\ref{prop:bounded_degree}) of constant local
        expansion, from non-$k$-colorability in graphs of maximum
        degree~$\delta$ to non-$k$-colorability in graphs of maximum
        degree~$\delta-1$. Thus, the $\Omega(n/\log
            n)$ lower bound holds for graphs of maximum degree
            $k + \lceil \sqrt{k} \rceil - 1$. \qed

            \medskip

            As a consequence of Theorem~\ref{thm:nonkcol_lineaire}, we
            obtain the following improved version of the second item
            of Corollary~\ref{cor:reduction3col},
            which implies in particular that the maximum degree in the
            statement of Theorem \ref{thm:domatic} can be decreased to
            $k^2 + k \lceil \sqrt{k+1} \rceil$.

            	\begin{corollary}
		\label{cor:reduction3coldeg}
		
		Let $\p$ be a graph property (on some graph class $\C$) and let $k \geqslant 3$. If there exists a local reduction from $k$-colorability in graphs of
			maximum degree~$k + \lceil \sqrt{k} \rceil - 1$ to $\p$ with local expansion
                        $O(n^\delta)$ for some $0 \leqslant \delta <
                        1$, and global expansion $O(n^\gamma)$, then
                        $\np$ has local complexity 
			$\Omega\big(n^{(1-\delta)/\gamma}/\log
                          n\big)$. In particular, since $\gamma
                        \leqslant \delta+1$, $\np$ has local complexity $\Omega\big(n^{\frac{2}{\delta+1}-1}/\log n\big)$.
	\end{corollary}

        We now show that the bound        $k + \lceil \sqrt{k} \rceil
        - 1$ on the maximum degree~in
        Theorem~\ref{thm:nonkcol_lineaire} is close to best possible for
        sufficiently large (but constant) $k$.

        \begin{theorem}\label{thm:ubnonkcol}
          Let $k$ be a sufficiently large integer. In graphs of maximum degree \mbox{$k + \lceil\sqrt{k}\rceil - 3$}, non-$k$-colorability has local
                complexity $O(\log n)$.
        \end{theorem}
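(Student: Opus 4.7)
The plan is to exhibit a constant-size local obstruction to $k$-colorability whose existence can be certified with $O(\log n)$ bits per vertex. The key ingredient is a structural result from chromatic graph theory: for $k$ sufficiently large, any graph $G$ with $\chi(G) > k$ and $\Delta(G) \leq k + \lceil\sqrt{k}\rceil - 3$ contains a $(k+1)$-clique $K_{k+1}$ (or, more generally, a $(k+1)$-critical subgraph on at most $f(k) = O_k(1)$ vertices). Such a bound follows from asymptotic (strong) versions of the Borodin--Kostochka conjecture, due to Reed and subsequent authors, which assert $\chi(G) \leq \max(\omega(G), \Delta(G) - t)$ whenever $\Delta(G)$ is sufficiently large in terms of $t$; applying this with $t = \lceil\sqrt{k}\rceil - 3$ and $\Delta(G) = k + t$ yields the desired claim for $k$ large enough.

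Given such a structural theorem, the certification scheme is straightforward. If $G$ is not $k$-colorable, the prover fixes a $K_{k+1}$-subgraph on vertices $v_0, \ldots, v_k$ and writes in each certificate (a)~the ordered tuple $\big(\mathrm{id}(v_0), \ldots, \mathrm{id}(v_k)\big)$, encoded in $O(k \log n) = O(\log n)$ bits since $k$ is a constant; and (b)~a spanning tree of $G$ using the standard $O(\log n)$-bit construction from~\cite{Feuilloley21}, to guarantee that all vertices agree on the same tuple.

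Verification consists of: checking the spanning tree (ensuring consistency of the tuple), and then, for each vertex $u$ that recognizes itself from its identifier as being one of the claimed $v_i$'s, verifying that every other $v_j$ lies in its closed neighborhood in~$G$. Completeness is immediate, since the prover can exhibit a genuine $K_{k+1}$. For soundness, if $G$ is $k$-colorable, then by the structural theorem it contains no $K_{k+1}$, so no matter which tuple the prover writes, some claimed clique vertex fails to see a required neighbor and rejects.

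The main obstacle is justifying the invoked structural theorem to the appropriate strength; the required form is an asymptotic (in $k$) Borodin--Kostochka-type bound, which is available in the chromatic graph theory literature. The bound $k + \lceil\sqrt{k}\rceil - 3$ is tight up to an additive constant of~$2$: by Theorem~\ref{thm:nonkcol_lineaire}, maximum degree $k + \lceil\sqrt{k}\rceil - 1$ already forces $\Omega(n/\log n)$-bit certificates, so the local complexity jumps from $O(\log n)$ to $\Omega(n/\log n)$ as the maximum degree grows by just~$2$. This tightness mirrors the fact that the Emden--Weinert \textsf{NP}-completeness reduction underlying Theorem~\ref{thm:nonkcol_lineaire} is essentially optimal with respect to the maximum-degree parameter.
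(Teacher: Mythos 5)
Your key structural claim --- that for large $k$, a graph with $\chi(G) > k$ and $\Delta(G) \leq k + \lceil\sqrt{k}\rceil - 3$ must contain a $(k+1)$-clique --- is false, and the ``asymptotic strong Borodin--Kostochka'' bound $\chi \leq \max(\omega, \Delta - t)$ that you invoke does not hold for any $t \geq 2$. Take the join $G = C_5 \oplus K_{k-2}$: then $\chi(G) = 3 + (k-2) = k+1$, $\omega(G) = 2 + (k-2) = k$, and $\Delta(G) = k+2$ (attained by the vertices of $K_{k-2}$, which are universal). For every $k \geq 17$ one has $\lceil\sqrt{k}\rceil \geq 5$, so $\Delta(G) = k+2 \leq k + \lceil\sqrt{k}\rceil - 3$; thus $G$ is non-$k$-colorable, lies within the stated degree bound, but contains no $K_{k+1}$, and your prover has no $(k+1)$-tuple of mutually adjacent vertices to exhibit. (Reed's asymptotic form of Borodin--Kostochka is only $\chi \leq \max(\omega, \Delta - 1)$, i.e.\ $t=1$; the same example shows the statement already fails for $t=2$, so it cannot be used with $t = \lceil\sqrt{k}\rceil - 3$.)

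What the paper actually invokes is a different theorem of Molloy and Reed~\cite[Theorem~5]{MR14}: for $k$ sufficiently large, a graph of maximum degree $k + \lceil\sqrt{k}\rceil - 3$ is non-$k$-colorable if and only if $G[N[v]]$ is non-$k$-colorable for some vertex $v$. So the witness is a $(k+1)$-critical subgraph on at most $\Delta + 1 = O_k(1)$ vertices (your parenthetical ``more general'' variant is in fact the correct one), but it need not be a clique --- in the example above, the witness is all of $G$, which sits inside $N[v]$ for any $v \in K_{k-2}$. Your scheme can be repaired along these lines, but you must certify an arbitrary small non-$k$-colorable subgraph rather than a clique: have the prover designate $v$ and give each vertex the list of identifiers of its neighbors ($O(\log n)$ bits in a bounded-degree class), so that $v$ can reconstruct $G[N[v]]$ from the lists of vertices in $N[v]$ and check its non-$k$-colorability directly; this is precisely the paper's construction. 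Your concluding remark on tightness (the jump from $O(\log n)$ to $\Omega(n/\log n)$ as the maximum degree increases by $2$) is correct as stated.
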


        \begin{proof}
It was proved by Molloy and Reed \cite[Theorem 5]{MR14} that for sufficiently large~$k$,  a graph $G$ of maximum degree
$k + \lceil\sqrt{k}\rceil - 3$ is not $k$-colorable if and only if there exists
a vertex $v\in V(G)$ such that the subgraph of $G$ induced by $N[v]$
is not $k$-colorable. We use this to produce a proof labeling scheme
for non-$k$-colorability as follows. The prover identifies a vertex
$v$ such that the subgraph of $G$ induced by $N[v]$
is not $k$-colorable, and each vertex $u$ in
the graph receives as a certificate  the list $L(u)$ of all neighbors of $u$ (as the graph has
bounded maximum degree, this list of identifiers takes $O(\log n)$
bits). Using the lists $\{L(u):uv\in E(G) \}$, the vertex $v$ knows the subgraph
of $G$ induced by $N[v]$, and can thus check that this subgraph is indeed not
$k$-colorable.  
        \end{proof}

        \section{Final remarks}
	\label{sec:remark}

Since the polynomial lower bounds obtained in this paper are all based on \textsf{NP}-hardness
        reductions, all the applications presented so far relate to
        \textsf{coNP}-hard problems. It might be tempting to
        conjecture that all such problems have at least polynomial local
        complexity. We now observe that this is not the case, by
        giving a \textsf{coNP}-hard problem with logarithmic complexity.

        \smallskip
        
	Indeed, consider the property $\p$ defined as the set of
        connected graphs $G$ for which there exists a vertex $u \in
        V(G)$ and an integer $m \geqslant 2$ such that $u$ has degree
        $3m$ and the neighbors $v_1, \ldots, v_{3m}$ of~$u$ can be
        partitioned into $m$~triples $T_1, \ldots, T_m$ such that,
        for all $i \in \{1, \ldots, m\}$, $\Sigma_{v \in T_i} \deg(v)
        = \frac1m \Sigma_{j=1}^{3m}\deg(v_j)$.
	The problem of determining whether a graph satisfies~$\p$ is \textsf{NP}-complete~: indeed, there is a simple polynomial reduction from 3-partition where the inputs are encoded in unary (recall that 3-partition is strongly \textsf{NP}-complete), which is the following: on an instance $\mathcal{I}:=\{a_1, \ldots, a_{3m}\}$ of 3-partition (where $m \geqslant 2$), we create the graph $G_{\mathcal{I}}$ consisting of a central vertex $u$ that has $3m$ neighbors $v_1, \ldots, v_{3m}$, and for every $i \in \{1, \ldots, m\}$, $v_i$ has $a_i-1$ additional neighbors of degree~1. For every $i \in \{1, \ldots, n\}$, we have $\deg(v_i)=a_i$. The graph $G_{\mathcal{I}}$ satisfies $\p$ if and only if we can partition the set of neighbors of~$u$ into triples of vertices that have the same sum of degrees, so if and only if $\mathcal{I}$ is a valid instance of 3-partition.
	However, the local complexity of $\np$ is $O(\log n)$: indeed, for each vertex~$u$, the prover can just write the degree of~$u$ in its certificate. Then, every vertex can simply check that the degree written in its certificate is correct and that its set of neighbors cannot be partitioned into such triples.
	
        \medskip

        We note that a priori, nothing prevents our reduction
        framework to be applied to problems that are not \textsf{coNP}-hard. Instead of starting from non-3-colorability, it
         suffices to start with a problem that has high local
        complexity, but which is easy globally.
        
        Let us give  examples of reductions that apply to problems
        that are not \textsf{coNP}-hard. For some graphs~$H$, it is
        known that certifying that a graph~$G$ does not contain $H$ as
        an induced subgraph (we will call this property
        \emph{$H$-freeness}) requires certificates of
        size~$\Omega(n)$. For instance, it is the case when $H$ is a
        clique of size at
        least~4~\cite[Theorem~15]{DruckerKO13}\footnote{The model
          considered in~\cite{DruckerKO13} is a bit different from
          local certification but their proof can be easily adapted to
          work in this model.}, or when $H$ is a path of length at
        least~7~\cite[Theorem~7]{BCFPZ}\footnote{Authors
          in~\cite{BCFPZ} actually prove the~$\Omega(n)$ lower bound
          for $P_7$-free graphs but their proof can be easily adapted to
          paths of length $\ell$ for every $\ell \geqslant 7$, by
          adding a pending path of length $\ell - 7$ to the vertex
          $v_0$ in their construction.}. However, for every fixed graph~$H$, determining if an $n$-vertex graph $G$ contains~$H$ as an induced subgraph can be done globally in time~$O(n^{|V(H)|})$, so in polynomial time.
          Here, we present how,
        starting from a lower bound on the local complexity of  $H$-freeness, we can prove a
        lower bound for $f(H)$-freeness, for some functions~$f$ on
        graphs. For instance, let $k \geqslant 2$ and let $f$ be the
        following function: for every graph~$G$, $f(G)$ is the graph
        obtained by replacing each vertex of~$G$ by a clique of
        size~$k$ and each edge of~$G$ by a complete bipartite graph
        $K_{k,k}$ (this operation is sometimes called the
        lexicographic product of $G$ with $K_k$). It can be easily checked that, 
	for every graphs~$G$ and~$H$, $G$ is $H$-free if and only if
        $f(G)$ is $f(H)$-free. Moreover, this transformation defines a
        local reduction of constant local expansion, where for each
        vertex $u \in V(G)$, the sets $V_u$ and $C_u$ are both equal
        to the clique of size $k$ replacing~$u$ in~$V(f(G))$. Thus, if
        $H$-freeness requires certificates of size~$\Omega(n)$,
        $f(H)$-freeness requires certificates of size~$\Omega(n)$ by
        Theorem~\ref{thm:main theorem local reduction}. For instance,
        if $H$ is a path of length~7 and if $k=2$, we obtain
        an~$\Omega(n)$ lower bound for the local complexity of
        $H'$-freeness where $H'$ is the graph depicted in Figure
        \ref{fig:sppk}.

        \medskip
	\begin{figure}[h!]
		\centering
		
		\begin{tikzpicture}[x=0.75pt,y=0.75pt,yscale=-1,xscale=1]
			
			\draw  [fill={rgb, 255:red, 0; green, 0; blue, 0 }  ,fill opacity=1 ] (141,107.5) .. controls (141,105.01) and (143.01,103) .. (145.5,103) .. controls (147.99,103) and (150,105.01) .. (150,107.5) .. controls (150,109.99) and (147.99,112) .. (145.5,112) .. controls (143.01,112) and (141,109.99) .. (141,107.5) -- cycle ;
			\draw    (145.5,107.5) -- (189.5,107.5) ;
			\draw    (145.5,107.5) -- (145.5,142) ;
			\draw  [fill={rgb, 255:red, 0; green, 0; blue, 0 }  ,fill opacity=1 ] (141,142) .. controls (141,139.51) and (143.01,137.5) .. (145.5,137.5) .. controls (147.99,137.5) and (150,139.51) .. (150,142) .. controls (150,144.49) and (147.99,146.5) .. (145.5,146.5) .. controls (143.01,146.5) and (141,144.49) .. (141,142) -- cycle ;
			\draw    (189.5,107.5) -- (189.5,142) ;
			\draw    (145.5,107.5) -- (189.5,142) ;
			\draw    (189.5,107.5) -- (233.5,142) ;
			\draw    (233.5,107.5) -- (233.5,142) ;
			\draw    (145.5,142) -- (189.5,142) ;
			\draw    (189.5,107.5) -- (233.5,107.5) ;
			\draw    (189.5,142) -- (233.5,142) ;
			\draw    (145.5,142) -- (189.5,107.5) ;
			\draw    (189.5,142) -- (233.5,107.5) ;
			\draw  [fill={rgb, 255:red, 0; green, 0; blue, 0 }  ,fill opacity=1 ] (185,107.5) .. controls (185,105.01) and (187.01,103) .. (189.5,103) .. controls (191.99,103) and (194,105.01) .. (194,107.5) .. controls (194,109.99) and (191.99,112) .. (189.5,112) .. controls (187.01,112) and (185,109.99) .. (185,107.5) -- cycle ;
			\draw  [fill={rgb, 255:red, 0; green, 0; blue, 0 }  ,fill opacity=1 ] (185,142) .. controls (185,139.51) and (187.01,137.5) .. (189.5,137.5) .. controls (191.99,137.5) and (194,139.51) .. (194,142) .. controls (194,144.49) and (191.99,146.5) .. (189.5,146.5) .. controls (187.01,146.5) and (185,144.49) .. (185,142) -- cycle ;
			\draw  [fill={rgb, 255:red, 0; green, 0; blue, 0 }  ,fill opacity=1 ] (229,107.5) .. controls (229,105.01) and (231.01,103) .. (233.5,103) .. controls (235.99,103) and (238,105.01) .. (238,107.5) .. controls (238,109.99) and (235.99,112) .. (233.5,112) .. controls (231.01,112) and (229,109.99) .. (229,107.5) -- cycle ;
			\draw  [fill={rgb, 255:red, 0; green, 0; blue, 0 }  ,fill opacity=1 ] (229,142) .. controls (229,139.51) and (231.01,137.5) .. (233.5,137.5) .. controls (235.99,137.5) and (238,139.51) .. (238,142) .. controls (238,144.49) and (235.99,146.5) .. (233.5,146.5) .. controls (231.01,146.5) and (229,144.49) .. (229,142) -- cycle ;
			\draw    (233.5,107.5) -- (277.5,107.5) ;
			\draw    (233.5,142) -- (277.5,142) ;
			\draw  [fill={rgb, 255:red, 0; green, 0; blue, 0 }  ,fill opacity=1 ] (273,107.5) .. controls (273,105.01) and (275.01,103) .. (277.5,103) .. controls (279.99,103) and (282,105.01) .. (282,107.5) .. controls (282,109.99) and (279.99,112) .. (277.5,112) .. controls (275.01,112) and (273,109.99) .. (273,107.5) -- cycle ;
			\draw  [fill={rgb, 255:red, 0; green, 0; blue, 0 }  ,fill opacity=1 ] (273,142) .. controls (273,139.51) and (275.01,137.5) .. (277.5,137.5) .. controls (279.99,137.5) and (282,139.51) .. (282,142) .. controls (282,144.49) and (279.99,146.5) .. (277.5,146.5) .. controls (275.01,146.5) and (273,144.49) .. (273,142) -- cycle ;
			\draw    (277.5,107.5) -- (277.5,142) ;
			\draw    (233.5,107.5) -- (277.5,142) ;
			\draw    (277.5,107.5) -- (233.5,142) ;
			\draw    (277.5,107.5) -- (321.5,107.5) ;
			\draw    (277.5,142) -- (321.5,142) ;
			\draw    (321.5,107.5) -- (321.5,142) ;
			\draw  [fill={rgb, 255:red, 0; green, 0; blue, 0 }  ,fill opacity=1 ] (317,107.5) .. controls (317,105.01) and (319.01,103) .. (321.5,103) .. controls (323.99,103) and (326,105.01) .. (326,107.5) .. controls (326,109.99) and (323.99,112) .. (321.5,112) .. controls (319.01,112) and (317,109.99) .. (317,107.5) -- cycle ;
			\draw  [fill={rgb, 255:red, 0; green, 0; blue, 0 }  ,fill opacity=1 ] (317,142) .. controls (317,139.51) and (319.01,137.5) .. (321.5,137.5) .. controls (323.99,137.5) and (326,139.51) .. (326,142) .. controls (326,144.49) and (323.99,146.5) .. (321.5,146.5) .. controls (319.01,146.5) and (317,144.49) .. (317,142) -- cycle ;
			\draw    (321.5,107.5) -- (277.5,142) ;
			\draw    (365.5,107.5) -- (321.5,142) ;
			\draw    (277.5,107.5) -- (321.5,142) ;
			\draw    (321.5,107.5) -- (365.5,142) ;
			\draw    (321.5,107.5) -- (365.5,107.5) ;
			\draw    (321.5,142) -- (365.5,142) ;
			\draw    (365.5,107.5) -- (365.5,142) ;
			\draw    (365.5,107.5) -- (409.5,142) ;
			\draw    (409.5,107.5) -- (365.5,142) ;
			\draw    (409.5,107.5) -- (409.5,142) ;
			\draw    (365.5,107.5) -- (409.5,107.5) ;
			\draw    (365.5,142) -- (409.5,142) ;
			\draw  [fill={rgb, 255:red, 0; green, 0; blue, 0 }  ,fill opacity=1 ] (361,107.5) .. controls (361,105.01) and (363.01,103) .. (365.5,103) .. controls (367.99,103) and (370,105.01) .. (370,107.5) .. controls (370,109.99) and (367.99,112) .. (365.5,112) .. controls (363.01,112) and (361,109.99) .. (361,107.5) -- cycle ;
			\draw  [fill={rgb, 255:red, 0; green, 0; blue, 0 }  ,fill opacity=1 ] (361,142) .. controls (361,139.51) and (363.01,137.5) .. (365.5,137.5) .. controls (367.99,137.5) and (370,139.51) .. (370,142) .. controls (370,144.49) and (367.99,146.5) .. (365.5,146.5) .. controls (363.01,146.5) and (361,144.49) .. (361,142) -- cycle ;
			\draw  [fill={rgb, 255:red, 0; green, 0; blue, 0 }  ,fill opacity=1 ] (405,107.5) .. controls (405,105.01) and (407.01,103) .. (409.5,103) .. controls (411.99,103) and (414,105.01) .. (414,107.5) .. controls (414,109.99) and (411.99,112) .. (409.5,112) .. controls (407.01,112) and (405,109.99) .. (405,107.5) -- cycle ;
			\draw  [fill={rgb, 255:red, 0; green, 0; blue, 0 }  ,fill opacity=1 ] (405,142) .. controls (405,139.51) and (407.01,137.5) .. (409.5,137.5) .. controls (411.99,137.5) and (414,139.51) .. (414,142) .. controls (414,144.49) and (411.99,146.5) .. (409.5,146.5) .. controls (407.01,146.5) and (405,144.49) .. (405,142) -- cycle ;
			\end{tikzpicture}
		\caption{The strong product of a path and a $K_2$.}\label{fig:sppk}
	\end{figure}
	
	If $f$ consists in adding a pending vertex to each vertex, it can easily be checked that $G$ is $H$-free if and only $f(G)$ is $f(H)$-free, 
	and this defines again a local reduction of constant local
        expansion. So for instance, if $H$ is a path of length~7, we
        obtain an $\Omega(n)$ lower bound for the local complexity of
        $H'$-freeness where $H'$ is the graph depicted in Figure \ref{fig:comb}.
	
	\begin{figure}[h!]
		\centering
		
		\begin{tikzpicture}[x=0.75pt,y=0.75pt,yscale=-1,xscale=1]
			
			\draw  [fill={rgb, 255:red, 0; green, 0; blue, 0 }  ,fill opacity=1 ] (141,107.5) .. controls (141,105.01) and (143.01,103) .. (145.5,103) .. controls (147.99,103) and (150,105.01) .. (150,107.5) .. controls (150,109.99) and (147.99,112) .. (145.5,112) .. controls (143.01,112) and (141,109.99) .. (141,107.5) -- cycle ;
			\draw    (145.5,107.5) -- (145.5,142) ;
			\draw  [fill={rgb, 255:red, 0; green, 0; blue, 0 }  ,fill opacity=1 ] (141,142) .. controls (141,139.51) and (143.01,137.5) .. (145.5,137.5) .. controls (147.99,137.5) and (150,139.51) .. (150,142) .. controls (150,144.49) and (147.99,146.5) .. (145.5,146.5) .. controls (143.01,146.5) and (141,144.49) .. (141,142) -- cycle ;
			\draw    (189.5,107.5) -- (189.5,142) ;
			\draw    (233.5,107.5) -- (233.5,142) ;
			\draw    (145.5,142) -- (189.5,142) ;
			\draw    (189.5,142) -- (233.5,142) ;
			\draw  [fill={rgb, 255:red, 0; green, 0; blue, 0 }  ,fill opacity=1 ] (185,107.5) .. controls (185,105.01) and (187.01,103) .. (189.5,103) .. controls (191.99,103) and (194,105.01) .. (194,107.5) .. controls (194,109.99) and (191.99,112) .. (189.5,112) .. controls (187.01,112) and (185,109.99) .. (185,107.5) -- cycle ;
			\draw  [fill={rgb, 255:red, 0; green, 0; blue, 0 }  ,fill opacity=1 ] (185,142) .. controls (185,139.51) and (187.01,137.5) .. (189.5,137.5) .. controls (191.99,137.5) and (194,139.51) .. (194,142) .. controls (194,144.49) and (191.99,146.5) .. (189.5,146.5) .. controls (187.01,146.5) and (185,144.49) .. (185,142) -- cycle ;
			\draw  [fill={rgb, 255:red, 0; green, 0; blue, 0 }  ,fill opacity=1 ] (229,107.5) .. controls (229,105.01) and (231.01,103) .. (233.5,103) .. controls (235.99,103) and (238,105.01) .. (238,107.5) .. controls (238,109.99) and (235.99,112) .. (233.5,112) .. controls (231.01,112) and (229,109.99) .. (229,107.5) -- cycle ;
			\draw  [fill={rgb, 255:red, 0; green, 0; blue, 0 }  ,fill opacity=1 ] (229,142) .. controls (229,139.51) and (231.01,137.5) .. (233.5,137.5) .. controls (235.99,137.5) and (238,139.51) .. (238,142) .. controls (238,144.49) and (235.99,146.5) .. (233.5,146.5) .. controls (231.01,146.5) and (229,144.49) .. (229,142) -- cycle ;
			\draw    (233.5,142) -- (277.5,142) ;
			\draw  [fill={rgb, 255:red, 0; green, 0; blue, 0 }  ,fill opacity=1 ] (273,107.5) .. controls (273,105.01) and (275.01,103) .. (277.5,103) .. controls (279.99,103) and (282,105.01) .. (282,107.5) .. controls (282,109.99) and (279.99,112) .. (277.5,112) .. controls (275.01,112) and (273,109.99) .. (273,107.5) -- cycle ;
			\draw  [fill={rgb, 255:red, 0; green, 0; blue, 0 }  ,fill opacity=1 ] (273,142) .. controls (273,139.51) and (275.01,137.5) .. (277.5,137.5) .. controls (279.99,137.5) and (282,139.51) .. (282,142) .. controls (282,144.49) and (279.99,146.5) .. (277.5,146.5) .. controls (275.01,146.5) and (273,144.49) .. (273,142) -- cycle ;
			\draw    (277.5,107.5) -- (277.5,142) ;
			\draw    (277.5,142) -- (321.5,142) ;
			\draw    (321.5,107.5) -- (321.5,142) ;
			\draw  [fill={rgb, 255:red, 0; green, 0; blue, 0 }  ,fill opacity=1 ] (317,107.5) .. controls (317,105.01) and (319.01,103) .. (321.5,103) .. controls (323.99,103) and (326,105.01) .. (326,107.5) .. controls (326,109.99) and (323.99,112) .. (321.5,112) .. controls (319.01,112) and (317,109.99) .. (317,107.5) -- cycle ;
			\draw  [fill={rgb, 255:red, 0; green, 0; blue, 0 }  ,fill opacity=1 ] (317,142) .. controls (317,139.51) and (319.01,137.5) .. (321.5,137.5) .. controls (323.99,137.5) and (326,139.51) .. (326,142) .. controls (326,144.49) and (323.99,146.5) .. (321.5,146.5) .. controls (319.01,146.5) and (317,144.49) .. (317,142) -- cycle ;
			\draw    (321.5,142) -- (365.5,142) ;
			\draw    (365.5,107.5) -- (365.5,142) ;
			\draw    (409.5,107.5) -- (409.5,142) ;
			\draw    (365.5,142) -- (409.5,142) ;
			\draw  [fill={rgb, 255:red, 0; green, 0; blue, 0 }  ,fill opacity=1 ] (361,107.5) .. controls (361,105.01) and (363.01,103) .. (365.5,103) .. controls (367.99,103) and (370,105.01) .. (370,107.5) .. controls (370,109.99) and (367.99,112) .. (365.5,112) .. controls (363.01,112) and (361,109.99) .. (361,107.5) -- cycle ;
			\draw  [fill={rgb, 255:red, 0; green, 0; blue, 0 }  ,fill opacity=1 ] (361,142) .. controls (361,139.51) and (363.01,137.5) .. (365.5,137.5) .. controls (367.99,137.5) and (370,139.51) .. (370,142) .. controls (370,144.49) and (367.99,146.5) .. (365.5,146.5) .. controls (363.01,146.5) and (361,144.49) .. (361,142) -- cycle ;
			\draw  [fill={rgb, 255:red, 0; green, 0; blue, 0 }  ,fill opacity=1 ] (405,107.5) .. controls (405,105.01) and (407.01,103) .. (409.5,103) .. controls (411.99,103) and (414,105.01) .. (414,107.5) .. controls (414,109.99) and (411.99,112) .. (409.5,112) .. controls (407.01,112) and (405,109.99) .. (405,107.5) -- cycle ;
			\draw  [fill={rgb, 255:red, 0; green, 0; blue, 0 }  ,fill opacity=1 ] (405,142) .. controls (405,139.51) and (407.01,137.5) .. (409.5,137.5) .. controls (411.99,137.5) and (414,139.51) .. (414,142) .. controls (414,144.49) and (411.99,146.5) .. (409.5,146.5) .. controls (407.01,146.5) and (405,144.49) .. (405,142) -- cycle ;

                      \end{tikzpicture}
                      \caption{A comb of length 7.}\label{fig:comb}
	\end{figure}
	
	If~$f$ consists in adding a pending path of length~$t$ to
        every vertex, if $i, j$ are two integers such that $i + j
        \geqslant 6$, it can easily be checked that~$G$ contains no
        path of length $i+j+1$ if and only if $f(G)$ contains no claw
        with a central vertex and three branches of length $(t, t+i,
        t+j)$. This transformation defines again a local reduction of
        constant local expansion. For instance, with $t=3$, $i=2$,
        $j=4$, we obtain a $\Omega(n)$ lower bound on the local
        complexity of $H'$-freeness where~$H'$ is the claw with
        branches of size $(3, 5, 7)$, depicted in
        Figure \ref{fig:claw}.
	
	\begin{figure}[h!]
		\centering
		
		\begin{tikzpicture}[x=0.75pt,y=0.75pt,yscale=-1,xscale=1]
			
			\draw  [fill={rgb, 255:red, 0; green, 0; blue, 0 }  ,fill opacity=1 ] (137,185) .. controls (137,182.51) and (139.01,180.5) .. (141.5,180.5) .. controls (143.99,180.5) and (146,182.51) .. (146,185) .. controls (146,187.49) and (143.99,189.5) .. (141.5,189.5) .. controls (139.01,189.5) and (137,187.49) .. (137,185) -- cycle ;
			\draw  [fill={rgb, 255:red, 0; green, 0; blue, 0 }  ,fill opacity=1 ] (165,185) .. controls (165,182.51) and (167.01,180.5) .. (169.5,180.5) .. controls (171.99,180.5) and (174,182.51) .. (174,185) .. controls (174,187.49) and (171.99,189.5) .. (169.5,189.5) .. controls (167.01,189.5) and (165,187.49) .. (165,185) -- cycle ;
			\draw  [fill={rgb, 255:red, 0; green, 0; blue, 0 }  ,fill opacity=1 ] (193,185) .. controls (193,182.51) and (195.01,180.5) .. (197.5,180.5) .. controls (199.99,180.5) and (202,182.51) .. (202,185) .. controls (202,187.49) and (199.99,189.5) .. (197.5,189.5) .. controls (195.01,189.5) and (193,187.49) .. (193,185) -- cycle ;
			\draw  [fill={rgb, 255:red, 0; green, 0; blue, 0 }  ,fill opacity=1 ] (221,185) .. controls (221,182.51) and (223.01,180.5) .. (225.5,180.5) .. controls (227.99,180.5) and (230,182.51) .. (230,185) .. controls (230,187.49) and (227.99,189.5) .. (225.5,189.5) .. controls (223.01,189.5) and (221,187.49) .. (221,185) -- cycle ;
			\draw  [fill={rgb, 255:red, 0; green, 0; blue, 0 }  ,fill opacity=1 ] (249,185) .. controls (249,182.51) and (251.01,180.5) .. (253.5,180.5) .. controls (255.99,180.5) and (258,182.51) .. (258,185) .. controls (258,187.49) and (255.99,189.5) .. (253.5,189.5) .. controls (251.01,189.5) and (249,187.49) .. (249,185) -- cycle ;
			\draw  [fill={rgb, 255:red, 0; green, 0; blue, 0 }  ,fill opacity=1 ] (277,185) .. controls (277,182.51) and (279.01,180.5) .. (281.5,180.5) .. controls (283.99,180.5) and (286,182.51) .. (286,185) .. controls (286,187.49) and (283.99,189.5) .. (281.5,189.5) .. controls (279.01,189.5) and (277,187.49) .. (277,185) -- cycle ;
			\draw  [fill={rgb, 255:red, 0; green, 0; blue, 0 }  ,fill opacity=1 ] (305,185) .. controls (305,182.51) and (307.01,180.5) .. (309.5,180.5) .. controls (311.99,180.5) and (314,182.51) .. (314,185) .. controls (314,187.49) and (311.99,189.5) .. (309.5,189.5) .. controls (307.01,189.5) and (305,187.49) .. (305,185) -- cycle ;
			\draw  [fill={rgb, 255:red, 0; green, 0; blue, 0 }  ,fill opacity=1 ] (389,185) .. controls (389,182.51) and (391.01,180.5) .. (393.5,180.5) .. controls (395.99,180.5) and (398,182.51) .. (398,185) .. controls (398,187.49) and (395.99,189.5) .. (393.5,189.5) .. controls (391.01,189.5) and (389,187.49) .. (389,185) -- cycle ;
			\draw  [fill={rgb, 255:red, 0; green, 0; blue, 0 }  ,fill opacity=1 ] (361,185) .. controls (361,182.51) and (363.01,180.5) .. (365.5,180.5) .. controls (367.99,180.5) and (370,182.51) .. (370,185) .. controls (370,187.49) and (367.99,189.5) .. (365.5,189.5) .. controls (363.01,189.5) and (361,187.49) .. (361,185) -- cycle ;
			\draw  [fill={rgb, 255:red, 0; green, 0; blue, 0 }  ,fill opacity=1 ] (333,185) .. controls (333,182.51) and (335.01,180.5) .. (337.5,180.5) .. controls (339.99,180.5) and (342,182.51) .. (342,185) .. controls (342,187.49) and (339.99,189.5) .. (337.5,189.5) .. controls (335.01,189.5) and (333,187.49) .. (333,185) -- cycle ;
			\draw    (85.5,185) -- (113.5,185) ;
			\draw  [fill={rgb, 255:red, 0; green, 0; blue, 0 }  ,fill opacity=1 ] (417,185) .. controls (417,182.51) and (419.01,180.5) .. (421.5,180.5) .. controls (423.99,180.5) and (426,182.51) .. (426,185) .. controls (426,187.49) and (423.99,189.5) .. (421.5,189.5) .. controls (419.01,189.5) and (417,187.49) .. (417,185) -- cycle ;
			\draw  [fill={rgb, 255:red, 0; green, 0; blue, 0 }  ,fill opacity=1 ] (109,185) .. controls (109,182.51) and (111.01,180.5) .. (113.5,180.5) .. controls (115.99,180.5) and (118,182.51) .. (118,185) .. controls (118,187.49) and (115.99,189.5) .. (113.5,189.5) .. controls (111.01,189.5) and (109,187.49) .. (109,185) -- cycle ;
			\draw  [fill={rgb, 255:red, 0; green, 0; blue, 0 }  ,fill opacity=1 ] (81,185) .. controls (81,182.51) and (83.01,180.5) .. (85.5,180.5) .. controls (87.99,180.5) and (90,182.51) .. (90,185) .. controls (90,187.49) and (87.99,189.5) .. (85.5,189.5) .. controls (83.01,189.5) and (81,187.49) .. (81,185) -- cycle ;
			\draw  [fill={rgb, 255:red, 0; green, 0; blue, 0 }  ,fill opacity=1 ] (221,161) .. controls (221,158.51) and (223.01,156.5) .. (225.5,156.5) .. controls (227.99,156.5) and (230,158.51) .. (230,161) .. controls (230,163.49) and (227.99,165.5) .. (225.5,165.5) .. controls (223.01,165.5) and (221,163.49) .. (221,161) -- cycle ;
			\draw  [fill={rgb, 255:red, 0; green, 0; blue, 0 }  ,fill opacity=1 ] (221,137) .. controls (221,134.51) and (223.01,132.5) .. (225.5,132.5) .. controls (227.99,132.5) and (230,134.51) .. (230,137) .. controls (230,139.49) and (227.99,141.5) .. (225.5,141.5) .. controls (223.01,141.5) and (221,139.49) .. (221,137) -- cycle ;
			\draw  [fill={rgb, 255:red, 0; green, 0; blue, 0 }  ,fill opacity=1 ] (221,113) .. controls (221,110.51) and (223.01,108.5) .. (225.5,108.5) .. controls (227.99,108.5) and (230,110.51) .. (230,113) .. controls (230,115.49) and (227.99,117.5) .. (225.5,117.5) .. controls (223.01,117.5) and (221,115.49) .. (221,113) -- cycle ;
			\draw    (113.5,185) -- (141.5,185) ;
			\draw    (141.5,185) -- (169.5,185) ;
			\draw    (169.5,185) -- (197.5,185) ;
			\draw    (197.5,185) -- (225.5,185) ;
			\draw    (225.5,185) -- (253.5,185) ;
			\draw    (253.5,185) -- (281.5,185) ;
			\draw    (281.5,185) -- (309.5,185) ;
			\draw    (309.5,185) -- (337.5,185) ;
			\draw    (337.5,185) -- (365.5,185) ;
			\draw    (365.5,185) -- (393.5,185) ;
			\draw    (393.5,185) -- (421.5,185) ;
			\draw    (225.5,185) -- (225.5,161) ;
			\draw    (225.5,161) -- (225.5,137) ;
			\draw    (225.5,137) -- (225.5,113) ;

		\end{tikzpicture}
		\caption{The claw with branches of size $(3, 5, 7)$.}\label{fig:claw}
              \end{figure}

	Finally, let us mention some problems for which we were not
        able to apply our reduction framework. Let $G$ be an
        $n$-vertex graph, and let $S_n$ denote the set of all
        bijections $V(G) \to \{1, \ldots, n\}$. The \emph{bandwidth}
        of $G$, denoted by $\beta(G)$, is defined as follows:
        $\beta(G) := \min_{\pi \in S_n} \max_{\{u,v\} \in E(G)}
        |\pi(u)-\pi(v)|$. The \textsc{Bandwidth} problem takes as
        input a graph~$G$ and an integer~$k$, and outputs whether~$G$
        has bandwidth at most~$k$. We say that $G$ is a \emph{disk
          graph} (resp.\ a \emph{unit-disk graph}) if there exists a
        family of  disks (resp. disks of diameter 1) in the plane such that $G$ is isomorphic to the intersection graph of these disks. The \textsc{Disk Graph} (resp.\ \textsc{Unit-Disk Graph}) problem takes as input a graph~$G$, and outputs whether~$G$ is a disk graph (resp.\ a unit-disk graph).
	
	It is known that the \textsc{Bandwidth}, \textsc{Disk Graph}
        and~\textsc{Unit-Disk Graph} problems are all
        \textsf{NP}-complete~\cite{Papadimitriou76, HK01, BK98}. From
        the perspective of the framework
        introduced in this paper, it is natural to ask, for each of
        these problems, if there is a polynomial lower bound for the
        complementary property (where, for the \textsc{Bandwidth}
        problem, the integer~$k$ is given as an input to the
        vertices). Unfortunately, we were not able to apply our local
        reduction framework and obtain polynomial lower bounds, either
        because existing \textsf{NP}-completeness reductions are not
        sufficiently local, or because they create too many vertices.
	Note also that, for each property to which we applied our reduction framework (summarized in Table~\ref{fig:sum}), the complementary property, which is \textsf{NP}-complete, is also easy to certify (in the sense that certificates of constant or logarithmic size are sufficient). Surprisingly, here there is no straightforward logarithmic certification scheme for the \textsc{Bandwidth}, \textsc{Disk Graph} and~\textsc{Unit Disk Graph} properties, which suggests that these properties may have a different behavior in the perspective of local certification. In fact, it is shown in~\cite{DELMR24} that the local complexity of~\textsc{Unit Disk Graph} is $\Omega(n^{1-\delta})$ for any~$\delta > 0$. For the \textsc{Bandwidth} problem, a natural way to certify it would be to give to every vertex~$u$ the integer~$\pi(u)$ in its certificate. However, to certify that the function $\pi$ written in the certificates is injective, no upper bound better than $O(n)$ is known~\cite{BousquetEFZ24}.
	The previous observations thus raise the following problems.
	
	\begin{problem}
		What is the local complexity of the complementary properties of \textsc{Disk Graph} and \textsc{Unit-Disk Graph} ?
	\end{problem}
	
	\begin{problem}
		What is the local complexity of the \textsc{Bandwidth} property and of its complement ?
	\end{problem}

	\vspace{0.5cm}

              \subsection*{Acknowledgements}
We thank the reviewers of the conference version of the paper for
their helpful comments and suggestions.
              
	\bibliographystyle{plain}
	\bibliography{biblio}

\end{document}